\documentclass[11pt,letterpaper]{article}

\usepackage[T1]{fontenc}
\usepackage[utf8]{inputenc}
\usepackage[american]{babel}
\usepackage{amsmath,amssymb,amsthm}

\usepackage{newtxtext,newtxmath}
\usepackage[scaled=0.92]{helvet}
\usepackage[letterpaper,top=1.0in,bottom=1.05in,left=1.15in,right=1.15in]{geometry}
\usepackage{setspace}
\usepackage{microtype}
\usepackage[dvipsnames,table]{xcolor}
\definecolor{navy}{RGB}{28,42,74}
\definecolor{accent}{RGB}{160,34,34}
\definecolor{rule}{RGB}{180,188,204}
\definecolor{shade}{RGB}{245,246,249}
\definecolor{shade2}{RGB}{238,242,249}
\definecolor{grayt}{RGB}{110,110,110}

\usepackage{titlesec}
\titleformat{\section}
  {\sffamily\Large\bfseries}{\thesection}{0.7em}{}
\titleformat{\subsection}
  {\sffamily\large\bfseries}{\thesubsection}{0.6em}{}
\titleformat{\subsubsection}
  {\sffamily\normalsize\bfseries}{\thesubsubsection}{0.6em}{}
\titlespacing*{\section}{0pt}{1.7\baselineskip}{0.8\baselineskip}
\titlespacing*{\subsection}{0pt}{1.2\baselineskip}{0.45\baselineskip}
\titlespacing*{\subsubsection}{0pt}{0.9\baselineskip}{0.3\baselineskip}
\titleformat{\paragraph}[runin]{\normalfont\normalsize\bfseries\itshape}{}{0pt}{}
\titlespacing*{\paragraph}{0pt}{1.1ex plus .3ex minus .2ex}{0.6em}

\usepackage{graphicx}
\usepackage{booktabs}
\usepackage{array}
\usepackage{enumitem}
\usepackage{subcaption}
\usepackage{placeins}
\usepackage{caption}
\usepackage{etoolbox}
\AtBeginEnvironment{figure}{\singlespacing\small}
\AtBeginEnvironment{table}{\singlespacing\small}

\newtheoremstyle{navyplain}{8pt}{8pt}{\itshape}{}{\sffamily\bfseries}{.}{0.6em}{}
\newtheoremstyle{navydef}{8pt}{8pt}{\normalfont}{}{\sffamily\bfseries}{.}{0.6em}{}
\theoremstyle{navyplain}
\newtheorem{theorem}{Theorem}
\newtheorem{lemma}{Lemma}
\newtheorem{corollary}{Corollary}
\theoremstyle{navydef}
\newtheorem{definition}{Definition}
\newtheorem{remark}{Remark}

\usepackage[framemethod=tikz]{mdframed}
\renewenvironment{quote}{%
  \begin{mdframed}[backgroundcolor=white,linecolor=black,linewidth=0pt,
    leftline=true,innerleftmargin=10pt,innerrightmargin=10pt,innertopmargin=7pt,
    innerbottommargin=7pt,skipabove=8pt,skipbelow=8pt,leftmargin=0pt,rightmargin=0pt,
    topline=false,bottomline=false,rightline=false,
    linewidth=2.5pt]\small}{\end{mdframed}}

\usepackage{natbib}
\usepackage[colorlinks=true,linkcolor=black,citecolor=black,urlcolor=black]{hyperref}
\AtBeginEnvironment{thebibliography}{\singlespacing}

\usepackage{fancyhdr}
\fancypagestyle{plain}{\fancyhf{}\fancyfoot[C]{\sffamily\small\thepage}}

\usepackage{tcolorbox}
\makeatletter
\renewcommand{\maketitle}{%
  \begin{center}
  \vskip 10pt
  {\sffamily\LARGE\bfseries\@title\par}
  \vskip 6pt
  \vskip 10pt
  {\large Henry Han\par}
  \vskip 3pt
  {\small Data Science and Artificial Intelligence Innovation Laboratory, Baylor University, USA\par}
  {\small\texttt{Henry\_Han@Baylor.edu}\par}
  \end{center}
  \vskip 10pt}
\makeatother
\renewenvironment{abstract}{%
  \begin{center}\bfseries Abstract\end{center}\vskip -0.3em
  \begin{list}{}{\leftmargin=0.6in\rightmargin=0.6in}\item\relax\small}{\end{list}\vskip 1.0em}

\title{Governing Agentic AI in FinTech}
\author{Henry Han}
\date{}

\begin{document}
\maketitle
\thispagestyle{plain}

\begin{abstract}
\noindent
Financial institutions increasingly delegate critical decisions to agentic AI systems under the assumption that these choices remain explainable and reproducible. We challenge this assumption by introducing the Verifiability Gap, the shortfall between the evidence required to substantiate delegated authority and the evidence actually retained. We then develop a multilevel governance theory of evidence-contingent delegation (ECD) across firm, regulatory, and network levels. It asserts that AI autonomy is defensible only when an organization retains sufficient evidence to explain and materially reconstruct how that authority was exercised. Four studies reveal mechanisms driving this gap. Study 1: Provider Discontinuity. Provider updates can prevent the recreation of past decisions even when the institution makes no internal changes. Study 2: Architecture as Policy. Agent architecture acts as a hidden policy layer where apparent reproducibility can mask convergence on generic decisions rather than the true reproducibility of the underlying decision process. Study 3: Historical Replay Failure. A stable current model can still fail to recover historical decisions. Study 4: Reproducibility Without Differentiation. In real credit applications, perfect reproducibility can coexist with a system that has lost the ability to distinguish between cases. These findings demonstrate that high capability and outcome consistency do not guarantee that agentic AI can be governed. We conclude that delegated authority should be recognized as a continuous, multilevel governance relationship grounded in verifiability, rather than as a one-time approval. Governable agentic AI therefore requires renewed authorization whenever models, orchestration, or retained evidence materially change.

\vspace{6pt}
\noindent\textbf{Keywords:} Agentic AI governance; Verifiability Gap; delegated
authority; reproducibility; explainability; FinTech.
\end{abstract}

\section{Introduction}
\label{sec:intro}

When a financial institution declines a loan, it must explain why, and under
model-risk guidance it must be able to show how the decision was produced.
Traditionally this regulatory and legal obligation was met by pointing to a
human underwriter or re-running a static model. The agentic AI systems now
assuming these decisions do neither reliably. For example, in our credit experiment on FICO data \citep{fico2018challenge}, a routine model refresh shifted an applicant's score by just 0.0092 across a policy threshold. An auditor holding only the refreshed model cannot reproduce the historical denial, and the refreshed model's explanation does not account for it.

\textit{Agentic AI governance challenge.} Built on large language models
(LLMs), these agents interpret goals, call external tools, and execute workflows
at scale in Fintech. In addition to convenience and AI advancement, they raise a critical governance question. The critical governance question is no longer whether they can perform
the work, but what happens afterward: once an agent acts, can an auditor establish \emph{why} it acted and \emph{whether} it would act the same way again?

This challenge transcends the familiar ``black-box'' problem of model opacity.
The conventional remedy for opacity is empirical validation, testing a model
repeatedly to characterize its behavior. Agentic AI invalidates that premise.
Underlying foundation models change on vendor schedules, tools return dynamic
information, and execution controls are frequently deprecated. For example, updates to an agentic credit model can prevent auditors from reproducing historical loan denials. Because the updated model only provides explanations based on its new logic, the true causal rationale for the original decision is lost along with its reproducibility. The governance
challenge is therefore not merely \textit{explainability} caused by model opacity, but the loss of \emph{reproducibility}
as the principal safeguard against it.
Because of the high-stakes nature of FinTech, this challenge escalates quickly. It stems from a mismatch between the dynamic reality of agentic AI and the existing assumption that delegated decisions will inherently remain reproducible and explainable.

\textit{Delegated agentic authority assumption.} Previous research in information systems establishes the foundations for AI delegation, algorithmic control, and human oversight \citep{baird2021delegation,berente2021managing,kellogg2020algorithms}. This literature builds upon a longer tradition exploring how firms retain control over work they do not directly perform \citep{ouchi1980markets,kirsch1997portfolios,cardinal2004balancing,wiener2016control}, extending these concepts to systems that actively participate in the work itself \citep{faraj2018working,murray2021humans,teodorescu2021failures}. That
literature largely presupposes a stable technological object that can be
inspected and rerun. However, agentic AI fractures the assumption, distributing the operative decision rule across models, memory, and provider-controlled infrastructure \citep{bommasani2021opportunities}. Specifically, because external providers can independently update the underlying models used by AI agents, the system cannot be frozen in time; an institution can completely lose the ability to reproduce past AI agents' decisions even if its own internal setting remains entirely untouched.  We therefore shift the focus of agentic AI governance analysis
from \emph{model transparency} to \emph{system verifiability},
asking when delegated agentic authority exceeds an organization's capacity to explain and reproduce the resulting actions.

Research on algorithmic bias treats distortion in LLM-mediated decisions as a system property. It asks whether an LLM or AI agent decision based on an LLM is systematically distorted at the moment it is made. The question here is what survives after the decision. The two are orthogonal: an unbiased AI agent action can be unsubstantiable, while a fully substantiable action can still be biased. Governing the first does not govern the second; thus, an institution can easily satisfy a fairness audit and yet remain entirely unable to show why it acted.

\textit{Verifiability Gap.} The gap has immediate consequences, particularly in FinTech, where machine learning already drives credit allocation \citep{fuster2022predictably,bartlett2022consumer,gu2020empirical,khandani2010consumer}. We term this shortfall \textit{the Verifiability Gap:} defined as \textit{the shortfall between the verification demanded by an exercised authority (e.g., financial institution) and the result (e.g., credit decision) reproduction available to an auditor after the decision.} An auditor unable to follow an agent's reasoning or reproduce its decision cannot exercise meaningful oversight; approval becomes performative, turning a presumed control into an institutional liability.

\textit{Verifiability Gap example.} For a bank deploying an AI agent to handle commercial credit applications, the problem is concrete. When an agent rejects a loan today, a federal examiner may demand the exact causal rationale six months later. Between the moment authority is granted and the day of the audit, the bank must preserve enough of the execution trail to reconstruct that decision. Yet current agentic AI governance practice offers no guarantee that the bank can rerun the process or recover the original reasoning. Authority is granted once based on a static benchmark snapshot, such as accuracy metrics demonstrated during initial vendor onboarding. Meanwhile, the evidence needed to defend past decisions must survive in an execution environment controlled by external foundation model providers (e.g., LLM API vendors) who update model weights and system prompts independently. As a result, the bank's internal settings remain unchanged, yet the agentic AI system can no longer reproduce its own historical logic. This structural mismatch between static decision authority and decaying execution evidence is the \textit{Verifiability Gap.}

\textbf{Our contributions.} This study advances the governance of agentic AI in FinTech across three dimensions.
\textit{First, we theorize verifiability as an authority--evidence relation.} We define the \emph{Verifiability Gap} as the deficit between the verification required to justify the exercise of delegated authority and the material evidence retained to reproduce the resulting agentic AI decision for an auditor. 

Second, we formulate a multilevel governance theory of evidence-contingent delegation (ECD), advancing seven propositions that explain how this gap constrains defensible delegation within firms, degrades historical verification over time, and propagates structural risk across interorganizational networks. 

Third, we make the mechanisms of reproducibility loss empirically observable. By disentangling outcome, process, and exact trace reproducibility across controlled studies, we show how third-party provider updates, AI agent orchestration, and model refreshes decouple terminal verdicts from recorded execution traces, demonstrating that greater model capability enhances outcome consistency without guaranteeing true auditability.

The resulting principle is that delegated AI authority remains defensible only so long as retained verification capacity stays commensurate with the authority exercised. Delegated authority is never settled once at initial approval; it must be re-earned whenever the underlying evidence surface shifts. This turns reauthorization into an active agentic AI governance event rather than a routine calendar exercise. Figure~\ref{fig:roadmap} outlines the conceptual argument and research design.

\begin{figure}[!htbp]
\centering
\includegraphics[width=0.65\linewidth]{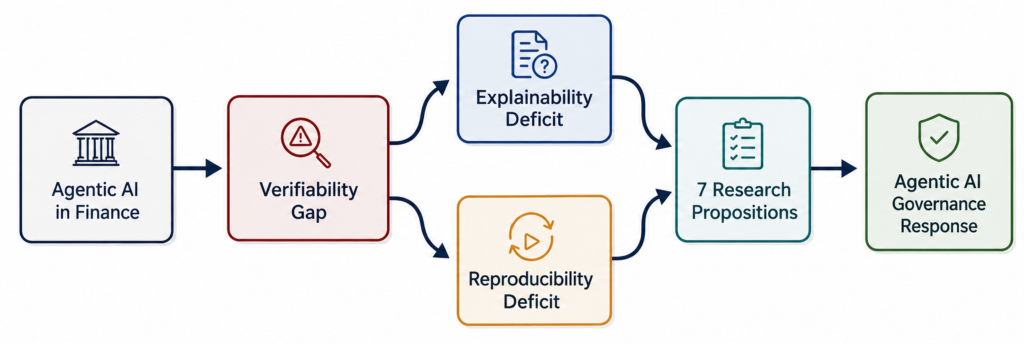}
\caption{Roadmap of the paper's argument: from the phenomenon (agentic AI that
acts in finance), to the construct (the Verifiability Gap) and its two deficits,
to a falsifiable research agenda and an evidence-contingent governance response.}
\label{fig:roadmap}
\end{figure}

\section{Agentic AI in FinTech}
\label{sec:agentic-fintech}

Agentic AI is shifting FinTech from generating predictions to executing
coordinated financial workflows. Major institutions, including JPMorgan Chase,
Visa, and Mastercard, are already deploying multi-agent systems for wealth
management and agent-initiated commerce
\citep{jpmorgan2026agents,finra2026agents}. This pattern is generalizing rapidly: one agent retrieves an applicant's file, prices the risk, and issues credit terms; another triages an anti-money-laundering (AML) alert to either close or escalate the case; yet another rebalances a portfolio and executes trades within a strict mandate. Because each of these workflows concludes with a consequential action that the institution must later substantiate, delegating such authority introduces a new kind of risk. As agentic systems assume
consequential roles, we address three foundational distinctions: how they differ
from conventional LLMs, why financial delegation demands stringent
verifiability, and where this governance constraint becomes most acute.

\paragraph{From LLMs to Agentic AI.}
While conventional LLMs primarily generate content in response to prompts, an
\emph{agentic AI system} embeds generative capacity within a broader execution
architecture. These systems (a) interpret goals and decompose tasks, (b) invoke
external tools and data, (c) maintain cross-step state, and (d) revise plans
semi-autonomously \citep{baird2021delegation,kumar2026agentic}. Interleaving
reasoning with tool invocation transforms a language model from a responder into
an autonomous actor \citep{wei2022chain,yao2022react}.
Unlike its peers
in other domains, agentic AI in FinTech is held to a high verifiability requirement,
because its actions carry binding regulatory and legal consequences.

\textit{Key components of agentic AI in FinTech.} Agentic AI in FinTech should be understood as an execution architecture rather than as a standalone model. Figure~\ref{fig:agentic_components} shows its six key components: LLM, Agent, tool calls, orchestration, handoffs, and execution trace.

LLMs provide the reasoning core, while agents pursue delegated goals through tools, handoffs, and orchestration. Handoff is the transfer of the output of an agent to another agent as input. It creates causal dependence across the financial workflow. 
Orchestration is the architecture that determines which
agents act, in what sequence, and what information each receives. It functions as a hidden policy layer: changing the arrangement of agents can change financial actions even when the underlying model is unchanged. The resulting execution trace records how financial actions were produced. It provides the evidence needed to reconstruct how
delegated financial authority was exercised.

\begin{figure}[!ht]
    \centering
    \includegraphics[width=0.65\textwidth]{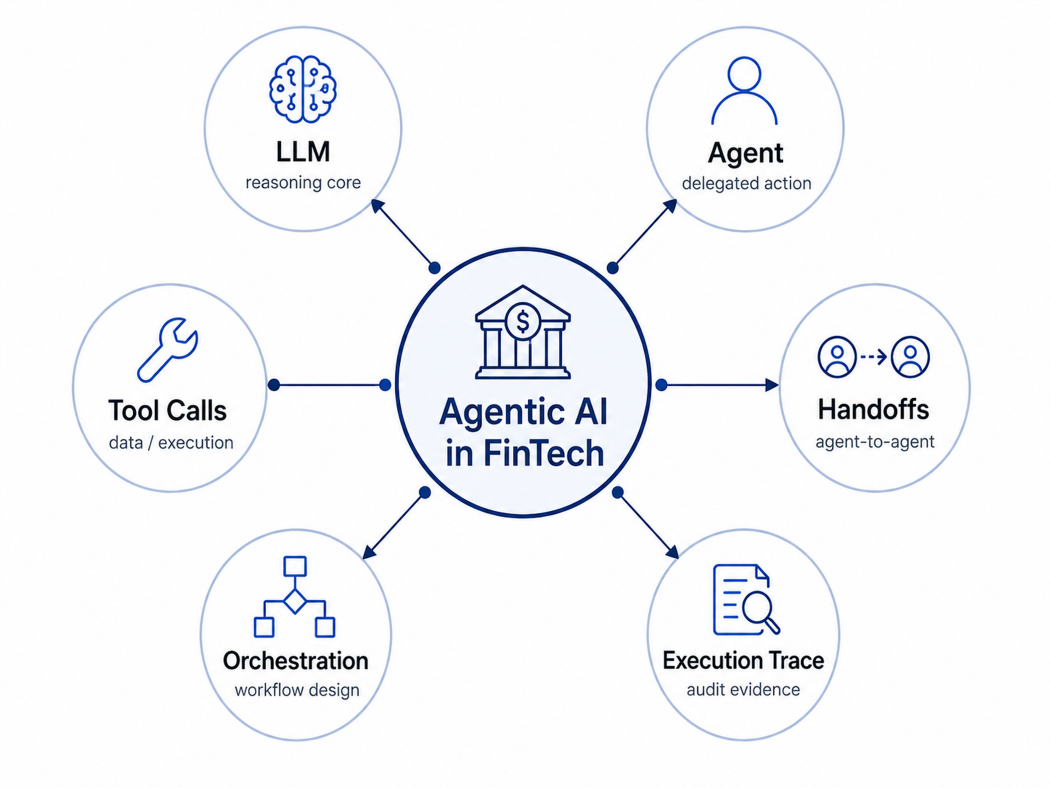}
    \caption{Six key components of agentic AI in FinTech.}
    \label{fig:agentic_components}
\end{figure}

At the enterprise level, FinTech AI agents interact through dynamic orchestration
\citep{rai2019hybrids,xiao2024tradingagents}. A consequential financial action
therefore emerges from a distributed sequence of model outputs, tool calls, and
inter-agent handoffs rather than from any single agent. This shifts the object of
governance from the model to the full execution architecture: verifying the final
action requires evidence of how that architecture exercised delegated authority
\citep{chan2023harms}.

\textbf{The Verifiability Mandate.} For financial institutions, the central governance requirement is evidentiary:
after an agentic system acts, can the institution substantiate how that action
was produced? Regulatory expectations increasingly require decisions to remain
auditable and reviewable
\citep{srguidance2011,srguidance2026,finra2409,euaiact2024,fsb2024ai,ecb2024ai,
diakopoulos2016accountability,wieringa2020account,raji2020closing,cobbe2021reviewable}.

Two forms of evidence support such verification. \emph{Explainability} establishes
why an action occurred, while \emph{reproducibility} establishes whether the
decision process can be reconstructed under documented conditions
\citep{gregor1999explanations,pineau2021reproducibility,atil2025nondeterminism}.
Agentic AI weakens both routes because the decision is distributed across models,
tools, and inter-agent handoffs. A plausible explanation may not faithfully reflect
that process, and repeating the final outcome does not show that the same process
was reproduced. The governance problem begins when the evidence retained after a
decision is no longer sufficient to substantiate the authority exercised by the
system. We define this shortfall as the \emph{Verifiability Gap}, which arises when the verification demanded by delegated authority exceeds what retained evidence can provide through explainability and reproducibility.

\textbf{Where the Verifiability Gap Binds: Domain and Delegated Autonomy.} The Verifiability Gap varies with the level of delegated authority. We distinguish
four levels of autonomy: \emph{assist}, \emph{recommend}, \emph{act with approval},
and \emph{act autonomously}. As authority increases, so does the evidence required
to substantiate the resulting action.

\begin{table}[!ht]
\centering
\caption{Agentic AI in FinTech by delegated autonomy. Verifiability pressure increases from left to right.}
\label{tab:appmatrix}
\scriptsize
\setlength{\tabcolsep}{4pt}
\renewcommand{\arraystretch}{1.02}
\begin{tabular}{@{}p{2.2cm}p{2.7cm}p{2.7cm}p{3.0cm}p{3.3cm}@{}}
\toprule
\textbf{Domain}
&
\textbf{Assist}
&
\textbf{Recommend}
&
\textbf{Act with approval}
&
\textbf{Act autonomously}
\\
\midrule

Wealth management
&
Surface information
&
Recommend allocation
&
Execute approved rebalance
&
Rebalance autonomously
\\

Compliance / AML
&
Retrieve evidence
&
Recommend disposition
&
Close or escalate
&
Monitor and act
\\

Credit origination
&
Retrieve credit data
&
Recommend terms
&
Issue approved terms
&
Approve, deny, or price
\\

Trading
&
Retrieve market data
&
Propose trades
&
Submit approved orders
&
Execute and adapt
\\

A2A settlement
&
Match records
&
Propose terms
&
Submit settlement
&
Negotiate and settle
\\

\bottomrule
\end{tabular}
\end{table}

Table~\ref{tab:appmatrix} illustrates the central point: the same underlying AI
capability creates different governance demands depending on the authority it is
given (AML: anti-money laundering; A2A: agent-to-agent). For example, a credit
agent that only summarizes an applicant's file leaves the lending decision to a
human reviewer, whereas the same agent authorized to deny the loan must support
the institution's later explanation, audit, and review of that decision. The
underlying model may be identical, but the evidentiary burden is not. The
Verifiability Gap therefore becomes most binding when agentic AI moves from
advising human decision makers to exercising financial authority itself.

\section{Research Approach: FinTech Agentic AI Verifiability Gap}
\label{sec:method}

The preceding section establishes that verifiability pressure rises with delegated
authority. This leads to our central research question: \textit{when does the
authority delegated to a FinTech agentic AI system exceed the organization's
capacity to explain and reproduce its actions?} Our unit of analysis is the
complete decision architecture: models, tools, orchestration, and controls at
the ``act'' levels of autonomy.

Our theory joins two verification routes that prior research has largely treated
separately. The \textit{explainability deficit} arises when explanations fail to
faithfully reflect the operative decision process
\citep{jacovi2020faithfully,turpin2023unfaithful,lanham2023faithfulness,
lipton2018mythos,doshivelez2017rigorous,ribeiro2016should,adebayo2018sanity}.
The \textit{reproducibility deficit} arises when the decision process cannot be
reconstructed from the executable, data, and conditions retained after the
decision
\citep{pineau2021reproducibility,peng2011reproducible,han2025reproducible}.
The Verifiability Gap emerges when delegated authority demands more verification
than these two routes can provide.

\begin{table}[!ht]
\centering
\caption{Positioning against adjacent constructs. Each emphasizes one
verification route or assumes reproducibility; the Verifiability Gap theorizes
their joint loss.}
\label{tab:positioning}
\footnotesize
\renewcommand{\arraystretch}{1.02}
\begin{tabular*}{\textwidth}{@{\extracolsep{\fill}}lccc@{}}
\toprule
\textbf{Construct}
& \textbf{Explainability route}
& \textbf{Reproducibility route}
& \textbf{Agentic-system focus} \\
\midrule
Responsibility gap \citep{matthias2004responsibility}
& Implicit & No & No \\
Inscrutability \citep{berente2021managing}
& Yes & No & Partial \\
Envelopment \citep{asatiani2021envelopment}
& Yes & No & No \\
Provenance \citep{singh2019provenance}
& Partial & Assumed & No \\
Contestability \citep{alfrink2023contestable}
& Yes & No & No \\
\textbf{Verifiability Gap (this paper)}
& \textbf{Yes} & \textbf{Yes} & \textbf{Yes} \\
\bottomrule
\end{tabular*}
\end{table}

Unlike these adjacent constructs, we theorize the joint loss of explainability
and reproducibility as a multilevel governance problem. Version pinning and
immutable logging are standard model-risk controls; the gap arises because
pinning a hosted model version does not preserve its complete execution
environment (Study~1), and logging the verdict does not log the process
(Study~2). Seven propositions span
three levels: defensible delegation within the firm (P1--P3), historical
verification under material system change (P4--P5), and end-to-end and
common-mode exposure across interorganizational networks (P6--P7).

\begin{figure}[!htbp]
\centering
\includegraphics[width=0.60\linewidth]{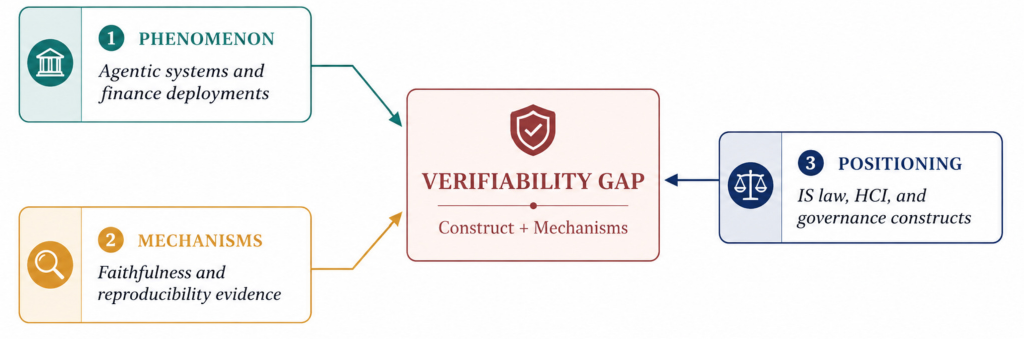}
\caption{Research design. The study defines the Verifiability Gap of Agentic AI in FinTech,
develops a seven-proposition multilevel governance theory, and uses four studies
to examine mechanisms through which retained verification capacity can fall
behind delegated authority.}
\label{fig:method}
\end{figure}

The four studies examine complementary manifestations of this governance problem:
provider discontinuity, orchestration as policy, historical replay failure, and
reproducibility without differentiation. The remaining propositions specify a
falsifiable agenda for organizational and network-level research.
Figure~\ref{fig:method} summarizes the research logic.

\section{The Verifiability Gap: Definition and Generative Mechanisms}
\label{sec:verifiability}

Having established the research design, we now formalize the core construct that
drives these seven propositions: the Verifiability Gap. Specifically, we define
its theoretical boundaries and isolate the architectural mechanisms that
systematically widen it. As agentic systems transition from recommending to autonomously executing
financial actions, institutions face a critical governance challenge:
establishing \emph{how} delegated authority was exercised. Opacity is not the
sole issue; firms must be able to both faithfully explain \emph{why} actions
occurred and materially reproduce the underlying processes. The Verifiability
Gap emerges when delegated authority outpaces this retained verification
capacity.

Let $d$ denote a specific financial decision episode and $q=(a,\sigma,\tau)$ its
audit context, comprising the verifier ($a$), evidentiary standard ($\sigma$),
and audit lag ($\tau$). Let $A_d$ denote the delegated authority,
$\rho_{\sigma}(A_d)$ the verification required, and $V_{dq}$ the system's
retained verification capacity. The gap is defined as:

\begin{equation}
\boxed{
G_{dq}=\left[
\underbrace{\rho_{\sigma}(A_d)}_{\text{verification required}}
-
\underbrace{V_{dq}}_{\text{verification available}}
\right]_{+}
}
\label{eq:vg_gap}
\end{equation}

where $[x]_{+}=\max\{0,x\}$. This gap, independent of the decision's actual
accuracy or fairness, formalizes organizational exposure. A binding gap
($G_{dq}>0$) indicates an institutional shortfall in substantiating the exercise
of delegated authority (Figure~\ref{fig:gap-triad}).

\begin{figure}[!htbp]
\centering
\includegraphics[width=0.50\linewidth]{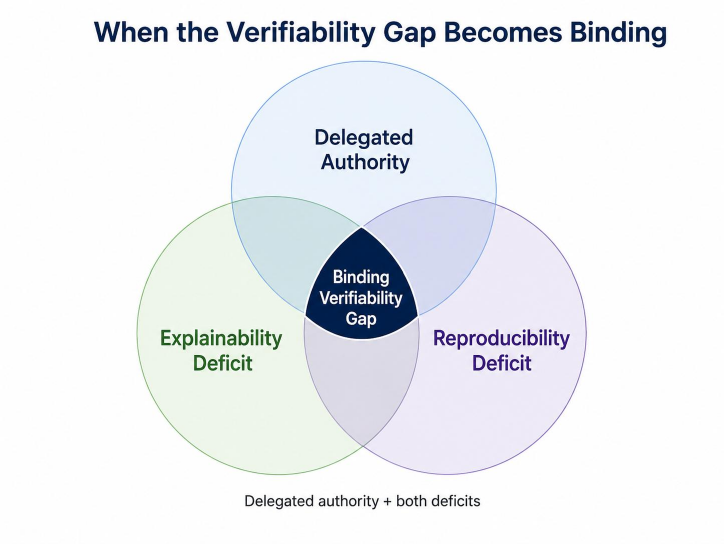}
\caption{Conditions under which the Verifiability Gap becomes binding. The gap
is most consequential when delegated agentic authority coincides with deficits
in both explainability and reproducibility.}
\label{fig:gap-triad}
\end{figure}

\subsection{Verification Capacity: Explainability and Reproducibility}
\label{subsec:vg_routes}

The Verifiability Gap is calculated per specific decision episode, not at the
general system level. Retained capacity, $V_{dq}=V_{\sigma}(E_{dq},R_{dq})$,
relies on two distinct evidentiary routes, where the evidentiary standard
$\sigma$ determines whether explainability, reproducibility, or both are
required to verify decision $d$. The Verifiability Gap is therefore not a general property of a model, and verifiability is not a substitute for trustworthiness: a FinTech system may be fair, robust, and secure across an entire lending portfolio and still leave a specific credit denial unsubstantiable.

\textit{Explainability} ($E_{dq}$) requires a faithful link from the final
action back to the actual inputs, tool outputs, and rules used, rejecting mere
post hoc rationalizations \citep{wachter2017counterfactual}. In FinTech, an
institution must be able to establish which borrower attributes and policy
thresholds contributed to a loan denial.

\textit{Reproducibility} ($R_{dq}$) demands the material reconstruction of the
decision process, relying on preserved applicant data, exact policy versions,
and specific API responses recorded at execution. A gap widens if either route
fails to meet the evidentiary standard $\sigma$. For instance, a bank might
possess a faithful explanation for a mortgage denial but lack the historical
system states needed to reconstruct it; conversely, it might perfectly rerun a
multi-agent anti-money-laundering (AML) triage process without establishing why
the system ultimately escalated the case.

\textit{Asymmetrical measurement in our studies.}
The two verification routes differ fundamentally in institutional control,
which is why this paper measures them asymmetrically. Explainability is
substantially an internal design and data-retention choice: a financial
institution that preserves causally relevant inputs, outputs, and inter-agent
handoffs in a credit workflow can raise its retained explainability without the
AI provider's cooperation. Reproducibility, however, is externally dependent.
It relies on historical executables and control surfaces, such as a third-party
foundation model's API, that the firm does not own. A provider-side model
refresh can therefore destroy a bank's ability to replay a historical execution
trace even when the institution has preserved its own records. Our studies
accordingly emphasize reproducibility: the critical verification route that
even a highly diligent institution cannot unilaterally preserve or repair. 

Explainability is also relative to its audience more directly than reproducibility. A historical execution
trace that satisfies a supervisory examiner may not satisfy a
denied applicant. This audience dependence is captured by $a$ and $\sigma$:
$a$ identifies the verifier and $\sigma$ the applicable evidentiary standard.
We therefore treat explainability as verifier-relative throughout and do not
claim a single scalar that ranks explanations for all readers.

\begin{figure}[!htbp]
\centering
\includegraphics[width=0.60\linewidth]{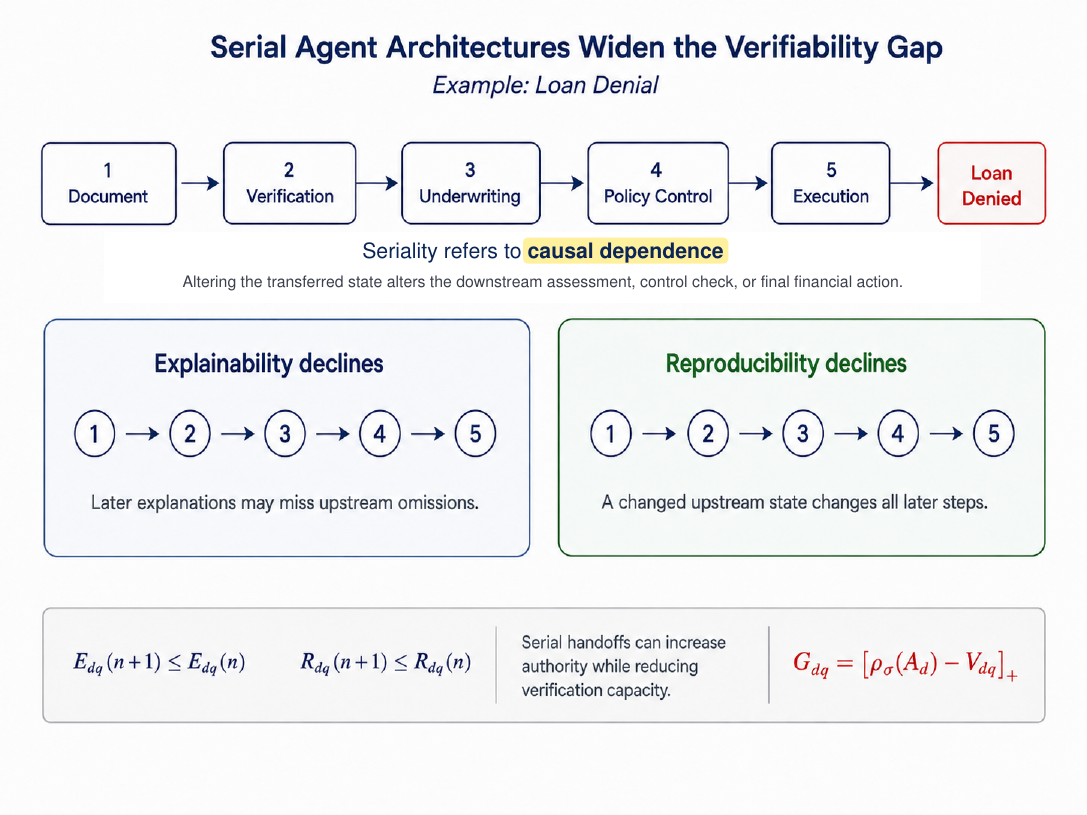}
\caption{Serial dependence in a FinTech loan-denial workflow. Causally
    dependent handoffs can weaken end-to-end explainability and reproducibility,
    reducing verification capacity while broader delegated authority increases
    verification demand.}
\label{fig:serial_gap}
\end{figure}

\subsection{The Double Effect in Serial Architectures}
\label{subsec:vg_mechanisms}

A central mechanism widening this gap is \textit{serial dependence}, when one
agent's output becomes a causally material constraint for a downstream agent. In
multi-agent FinTech workflows \citep{kumar2026agentic,xiao2024tradingagents},
this architecture creates a \emph{double effect}, illustrated for a loan-denial
workflow in Figure~\ref{fig:serial_gap}.

First, specialized delegation allows institutions to grant broader financial
authority, increasing $\rho_{\sigma}(A_d)$. Second, it causes serial decay of
both explainability and reproducibility, reducing $V_{dq}$. Consider an agentic
credit-underwriting pipeline: an upstream \textit{Retrieval Agent} pulls raw
applicant data and passes a synthesized profile to a \textit{Risk Agent}, which
then hands a risk assessment to a terminal \textit{Decider Agent}. As information
passes through these handoffs, upstream compressions or omissions (e.g., the
Risk Agent summarizing away a material income attribute) become embedded
premises for the downstream Decider. Unless intermediate state transformations
are preserved to the applicable evidentiary standard, explanations generated by
the terminal agent become only locally plausible, losing their faithful causal
link to the original application. Similarly, missing intermediate states render
end-to-end reproduction of the credit denial impossible.

Formally, adding a causally material serial transition without compensating
verification artifacts cannot increase either capacity:

\begin{equation}
E_{d_{n+1}q}\leq E_{d_nq},
\qquad
R_{d_{n+1}q}\leq R_{d_nq}.
\label{eq:vg_serial_decay}
\end{equation}

The decline is strict when the added transition discards material information
that is not otherwise retained. The underlying driver is causal dependence,
not merely agent count. When a downstream credit agent summarizes or executes
combined outputs, it introduces serial dependence that weakens both verification
routes simultaneously, leaving the institution with a terminal verdict it can
neither fully explain nor materially reproduce.

\subsection{Governance Requires a Reversible Chain}
\label{subsec:reversibility}

\textit{Markov chain of causal stages.}
This double effect can be formalized as an information constraint on system
architecture (proofs in Appendix~\ref{app:proofs}). We model the operative
sequence as a Markov chain of $N$ causally material stages, where $C$ is the initial
case, $S_k$ is the output of layer $k$, $T$ is the complete trace, and $Y$ is the
financial action:

\begin{equation}
C \;\longrightarrow\; S_1 \;\longrightarrow\; S_2 \;\longrightarrow\;\cdots\;
\longrightarrow\; S_N \;\longrightarrow\; Y .
\label{eq:markov}
\end{equation}

Auditing is an inverse problem: recovering the material process from the action
and retained records. In FinTech, this inverse problem arises whenever an
institution must reconstruct how a prior loan denial, AML escalation, or trade
execution was produced. Each material handoff that is not retained makes that
historical financial action harder to verify end-to-end.

\begin{definition}[$\sigma$-reversibility]
\label{def:rev}
Let $\equiv_\sigma$ denote material equivalence under standard $\sigma$. Stage
$k$ is \emph{$\sigma$-reversible given retained record $B_k$} if
$(S_k,B_k)\mapsto[S_{k-1}]_\sigma$ is well defined almost surely. The
configuration is $\sigma$-reversible if a verifier can reconstruct the material
execution trace $[T]_\sigma$ from $Y$ and the retained bundle $B$.
\end{definition}

Theorem~\ref{thm:contraction} formalizes how serial dependence constrains the
case-relevant information that can survive from the original case to the final
financial action.

\begin{theorem}[Serial contraction of case information]
\label{thm:contraction}
Let $I(\cdot;\cdot)$ denote mutual information and $\eta_k\in(0,1]$ the strong
data-processing coefficient of layer $k$. Under \eqref{eq:markov},
$I(C;Y)\le\bigl(\prod_{k\ge2}\eta_k\bigr) I(C;S_1)$. Case information decays
geometrically whenever $\eta_k\le\bar\eta<1$.
\end{theorem}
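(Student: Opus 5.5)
The plan is to iterate the strong data-processing inequality (SDPI) along the Markov chain \eqref{eq:markov}. First I fix the definition. For layer $k\ge 2$, let $K_k$ be the channel $S_{k-1}\to S_k$. The strong data-processing coefficient is
\[
\eta_k=\sup\frac{I(U;S_k)}{I(U;S_{k-1})},
\]
where the supremum runs over all $U$ with $U\to S_{k-1}\to S_k$ Markov and $I(U;S_{k-1})>0$. Since this is the coefficient relevant for mutual information, it is defined relative to the input distribution of $S_{k-1}$ induced by the chain. The final map $S_N\to Y$ is absorbed as an additional layer, either by indexing it as layer $N+1$ or by treating $Y$ as $S_{N+1}$. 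The product $\prod_{k\ge2}\eta_k$ then runs over every transition after $S_1$. With this convention, $\eta_k\in(0,1]$ follows from the ordinary data-processing inequality.

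The key step is a one-step contraction: for each $k\ge2$,
\[
I(C;S_k)\le \eta_k\, I(C;S_{k-1}).
\]
To get this, I take $U=C$ in the definition of $\eta_k$. This requires $C\to S_{k-1}\to S_k$ to be Markov, which follows from the global Markov property of \eqref{eq:markov}. Conditioning on $S_{k-1}$ makes the past $(C,S_1,\dots,S_{k-2})$ independent of $S_k$, so marginalizing out $S_1,\dots,S_{k-2}$ preserves the three-term chain. If $I(C;S_{k-1})=0$, the ordinary data-processing inequality gives $I(C;S_k)=0$, so the bound holds trivially.

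Next I apply the one-step bound inductively for $k=2,\dots,N+1$, where $S_{N+1}=Y$. This gives
\[
I(C;Y)\le\Bigl(\prod_{k\ge2}\eta_k\Bigr) I(C;S_1).
\]
For geometric decay, if $\eta_k\le\bar\eta<1$ for all $k\ge2$, then $I(C;Y)\le\bar\eta^{\,N}I(C;S_1)\le \bar\eta^{\,N}H(C)$ whenever $H(C)<\infty$. This tends to zero exponentially in the number of causally material stages.

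I expect the main obstacle to be making sure the SDPI coefficients are well-posed, not the induction itself. The mutual-information SDPI coefficient depends on the input distribution of each layer. If one instead uses a distribution-free coefficient, the worst case over input distributions, then the bound holds a fortiori but $\eta_k$ may equal $1$ for many channels. I would therefore state the theorem with input-dependent coefficients evaluated at the marginals induced by $C$. I would also note that the bound is vacuous, though still true, when some $\eta_k=1$, for example when a layer is deterministic and injective on its support. This is consistent with Definition~\ref{def:rev}: $\sigma$-reversible layers are exactly those that need not contract.
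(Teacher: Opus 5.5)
Your proof is correct and matches the paper's argument: iterate the strong data-processing inequality one transition at a time along the assumed Markov chain. Your input-dependent definition of $\eta_k$ and your treatment of $S_N\to Y$ as a final layer make explicit what the paper's one-line proof leaves implicit. Two side claims need correcting, though neither affects the proof: the ordinary data-processing inequality gives only $\eta_k\le 1$ (a constant layer has $\eta_k=0$), and because every deterministic non-constant layer has $\eta_k=1$ whether or not it is injective (take $U=S_k$), it is false that $\sigma$-reversible layers are exactly those that need not contract, as the paper itself notes in Theorems~\ref{thm:revdpi} and~\ref{thm:incompat}.
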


The governance implication is that downstream agents cannot recover
case-relevant information already lost upstream. As material distinctions are
compressed across the chain, the final action can reveal progressively less
about the case that produced it. Summarization remains compatible with
auditability when the distinctions it discards are immaterial under $\sigma$ or
are preserved in the retained record.

Theorem~\ref{thm:retention} translates this information loss into an explicit
evidentiary requirement for governance.

\begin{theorem}[Retention lower bound]
\label{thm:retention}
A configuration is $\sigma$-reversible given $B$ only if:
\begin{equation}
H(B)\;\ge\;H\bigl([T]_\sigma \,\big|\, Y\bigr),
\label{eq:retention}
\end{equation}
achieved by retaining an index of the upstream state's material equivalence
class, where $H$ denotes Shannon entropy.
\end{theorem}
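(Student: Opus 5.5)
The plan is the standard source-coding argument: reversibility means $[T]_\sigma$ is a deterministic function of $(Y,B)$, and a deterministic function cannot carry more entropy than its argument.

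\emph{Step 1 (necessity).} By Definition~\ref{def:rev}, $\sigma$-reversibility given $B$ means there is a measurable map $g$ with $[T]_\sigma=g(Y,B)$ almost surely. Hence
\[
H\bigl([T]_\sigma \,\big|\, Y,B\bigr)=0 .
\]
Apply the chain rule to $H([T]_\sigma,B\mid Y)$ in both orders. This gives
\[
H\bigl([T]_\sigma \,\big|\, Y\bigr)\le H\bigl([T]_\sigma,B\,\big|\,Y\bigr)=H(B\mid Y)+H\bigl([T]_\sigma\,\big|\,Y,B\bigr)=H(B\mid Y)\le H(B).
\]
This is exactly \eqref{eq:retention}. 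I would state explicitly that $[T]_\sigma$ takes values in a countable set of equivalence classes, so that Shannon entropy is defined. That assumption should be made explicit because the paper leaves it implicit.

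\emph{Step 2 (achievability).} To justify ``achieved by retaining an index,'' construct $B$ directly.
\begin{itemize}
\item For each value $y$ of $Y$, enumerate the support of $[T]_\sigma$ given $Y=y$, ordered by decreasing conditional probability.
\item Let $B$ be the index of the realized class in that ordering.
\end{itemize}
Then $(Y,B)$ determines $[T]_\sigma$, so the configuration is $\sigma$-reversible.

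The entropy claim needs more care. This $B$ is a function of $([T]_\sigma,Y)$ and satisfies $H(B\mid Y)=H([T]_\sigma\mid Y)$. The unconditional $H(B)$, however, can exceed this, because the index distributions differ across values of $y$.

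\begin{itemize}
\item \emph{Exact equality} I would obtain in the natural case where the map $y\mapsto$ (conditional law of $[T]_\sigma$) is constant up to relabeling, or where the bound is read as $H(B\mid Y)$.
\item \emph{In general} I would give an asymptotic tightness statement: over $n$ i.i.d.\ audited episodes, Slepian--Wolf coding with side information $Y$ at the verifier achieves rate $H([T]_\sigma\mid Y)$.
\end{itemize}

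\emph{Step 3 (stagewise form).} I would also record the stagewise version that matches the definition's stage-$k$ reversibility. Since $B_k$ together with $S_k$ must determine $[S_{k-1}]_\sigma$, the same argument gives
\[
H(B_k)\ge H\bigl([S_{k-1}]_\sigma\,\big|\,S_k\bigr).
\]
Chaining these along \eqref{eq:markov} recovers the trace from $Y$. The phrase ``index of the upstream state's material equivalence class'' is precisely the per-stage construction of Step 2.

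\emph{Main obstacle.} The necessity direction is routine. The genuine difficulty is the achievability clause, since single-shot $H(B)$ can strictly exceed $H([T]_\sigma\mid Y)$. I would first try to interpret the clause with conditional entropy $H(B\mid Y)$, where the index construction attains equality exactly. I would fall back on the block-coding argument only if unconditional $H(B)$ is insisted upon.
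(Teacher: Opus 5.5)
Your necessity step is the paper's proof essentially verbatim. From $H([T]_\sigma\mid Y,B)=0$ the paper runs the same chain $H([T]_\sigma\mid Y)\le H([T]_\sigma,B\mid Y)=H(B\mid Y)\le H(B)$.

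The two routes part on the ``achieved'' clause. The paper dismisses it in one line (``encoding the conditional pre-image within each stage''). You instead note that the index construction only yields $H(B\mid Y)=H([T]_\sigma\mid Y)$. Your caution is justified, and the gap is intrinsic rather than an artifact of your choice of $B$. Take $Y$ uniform on $\{0,1\}$, with $[T]_\sigma$ constant given $Y=0$ and uniform on two classes given $Y=1$. Then $H([T]_\sigma\mid Y)=1/2$ bit. Yet any admissible $B$ must put mass at least $1/4$ on each of two disjoint value sets, which forces $H(B)\ge h(1/4)\approx 0.811$ bits, where $h$ is the binary entropy. So your conditional reading (or a per-episode rate) is the correct repair: your version buys a true statement where the paper's is loose.

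Two refinements to your fallbacks. First, Slepian--Wolf coding has vanishing but nonzero error, whereas Definition~\ref{def:rev} demands almost-sure reconstruction. Since whoever retains $B$ also sees $Y$, use zero-error conditional coding with side information at both ends instead: retain the joint type and the index within the conditional type class. For finite alphabets this attains $H([T]_\sigma\mid Y)+o(1)$ bits per episode.

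Second, in Step~3 the stagewise records compose only under two conditions. Each inversion map must be well defined on the class $[S_k]_\sigma$, because after one backward step the verifier holds only $[S_{k-1}]_\sigma$, not $S_{k-1}$. And some record must also bridge $Y$ back to $S_N$.
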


To keep a decision reconstructable, the institution must retain enough evidence
to recover the material information lost along the chain. Deeper orchestration
therefore raises the evidentiary price of autonomy.

\begin{corollary}[The Verifiability Gap in bits]
\label{cor:gapbits}
Instantiating required verification as $H([T]_\sigma\mid Y)$ and retained
capacity as $H(B)$, the gap is:
\begin{equation}
G_{dq}\;=\;\Bigl[\,H\bigl([T]_\sigma\mid Y\bigr)-H(B)\,\Bigr]_{+}
\label{eq:gapbits}
\end{equation}
This establishes the bounded scale for Equation~\ref{eq:vg_gap}. It vanishes
whenever the configuration is $\sigma$-reversible given what was kept, so
$G_{dq}=0$ is necessary for reversibility. It is not sufficient. An institution
that retains many bits unrelated to the discarded state satisfies the
inequality without being able to reconstruct anything, because $H(B)$ counts
stored bits rather than bits about $[T]_\sigma$. A sufficient condition
replaces $H(B)$ with the retained information that is actually about the
process, $I\bigl([T]_\sigma;B \mid Y\bigr)$, whose shortfall is the residual
uncertainty $H\bigl([T]_\sigma \mid Y,B\bigr)$. We report the necessary form
because that is what the retention bound establishes.
\end{corollary}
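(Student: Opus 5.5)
The corollary is largely a bookkeeping consequence of Theorem~\ref{thm:retention}. I would prove its claims in order: the instantiation, the necessity statement, the failure of sufficiency, and the sufficient condition.

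\emph{Step 1: the instantiation.} I substitute $\rho_\sigma(A_d)=H([T]_\sigma\mid Y)$ and $V_{dq}=H(B)$ into \eqref{eq:vg_gap}. This immediately yields \eqref{eq:gapbits}. Both quantities are Shannon entropies of discrete (material-equivalence-class) variables, so each is nonnegative. Moreover $H([T]_\sigma\mid Y)\le H([T]_\sigma)<\infty$ whenever the set of material classes is finite. Hence $G_{dq}\in[0,H([T]_\sigma)]$, which is the claimed bounded scale.

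\emph{Step 2: necessity.} If the configuration is $\sigma$-reversible given $B$, Theorem~\ref{thm:retention} gives $H(B)\ge H([T]_\sigma\mid Y)$. The bracket is therefore nonpositive and its positive part is $0$. So $G_{dq}=0$ is necessary for reversibility.

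\emph{Step 3: insufficiency.} I would give a counterexample. Let $[T]_\sigma$ be uniform on $2^m$ classes and independent of $Y$, for instance with $Y$ constant, so that $H([T]_\sigma\mid Y)=m$. Let $B$ be $m$ fair coin bits drawn independently of $([T]_\sigma,Y)$. Then $H(B)=m$ and $G_{dq}=0$. However, $H([T]_\sigma\mid Y,B)=m>0$, so no map $(Y,B)\mapsto[T]_\sigma$ can be correct almost surely, and reversibility fails.

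\emph{Step 4: the sufficient condition.} I would use the chain rule:
\[
I([T]_\sigma;B\mid Y)=H([T]_\sigma\mid Y)-H([T]_\sigma\mid Y,B).
\]
Thus the shortfall $H([T]_\sigma\mid Y)-I([T]_\sigma;B\mid Y)$ equals the residual uncertainty $H([T]_\sigma\mid Y,B)$. For discrete variables, this residual is $0$ exactly when $[T]_\sigma$ is almost surely a function of $(Y,B)$, which is precisely Definition~\ref{def:rev}'s reconstruction requirement. So replacing $H(B)$ with $I([T]_\sigma;B\mid Y)$ makes a zero gap both necessary and sufficient.

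The main obstacle is Step 4's identification of zero conditional entropy with almost-sure functional dependence. This requires $[T]_\sigma$ to take values in a countable set of classes; I would state that assumption explicitly. Everything else is direct substitution.
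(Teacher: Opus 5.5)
Your proposal is correct and follows the paper's route: the paper proves this corollary by substituting $H([T]_\sigma\mid Y)$ and $H(B)$ into $G_{dq}$ and using the retention bound of Theorem~\ref{thm:retention} for necessity, and that bound's proof already rests on $H([T]_\sigma\mid Y,B)=0$. Your independent-bits counterexample and the chain-rule identity $I([T]_\sigma;B\mid Y)=H([T]_\sigma\mid Y)-H([T]_\sigma\mid Y,B)$ make explicit the insufficiency and sufficiency remarks that the paper leaves in prose, and the finite-class assumption from your Step~1 also keeps that subtraction well defined.
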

\textbf{The Verifiability Gap: Scope and Governance Threshold.}
While the information-theoretic bounds establish what evidence verification
requires, they do not specify when this requirement becomes a binding
organizational constraint. We therefore translate the Verifiability Gap into a
practical governance threshold for a specific delegated decision and audit
context, such as an automated credit denial subject to supervisory review.
Equation~\ref{eq:governance_threshold} establishes this threshold:

\begin{equation}
\label{eq:governance_threshold}
G_{dq}=0
\quad\Longleftrightarrow\quad
V_{dq}\geq\rho_{\sigma}(A_d).
\end{equation}

This threshold does not mandate minimizing autonomy; rather, it requires that
retained verification capacity scales with delegated authority. For example,
autonomous credit approval creates a higher verification demand
($\rho_{\sigma}(A_d)$) than generating an advisory credit summary. Institutions
therefore possess three strategic levers to close a binding gap:
\textit{1. lower the requirement} by restricting autonomous execution, such as
requiring human approval for high-value credit decisions or AML case closures;
\textit{2. raise explainability} by preserving causally faithful end-to-end traces,
including material borrower attributes, tool responses, and inter-agent
handoffs; or \textit{3. raise reproducibility} by escrowing the exact model
executables, data snapshots, and configurations required for a historical
material rerun.

\begin{table}[!ht]
\centering
\caption{The multilevel governance architecture of the Verifiability Gap.}
\label{tab:prop_architecture}
\footnotesize
\renewcommand{\arraystretch}{1.02}
\begin{tabular}{@{}p{0.06\linewidth}p{0.23\linewidth}p{0.33\linewidth}p{0.28\linewidth}@{}}
\toprule
\textbf{Prop.}
& \textbf{Governance level and outcome}
& \textbf{Mechanism}
& \textbf{Observable implication} \\
\midrule

P1
& Firm: delegated authority
& Retained verification capacity limits defensible authority.
& Granted authority plateaus below technical capability. \\

P2
& Firm: fault escape
& Unfaithful explanations weaken human review.
& Action bounds outperform explanation-based review as faithfulness uncertainty rises. \\

P3
& Firm: operating trade-off
& Preserving evidence consumes speed, flexibility, and expertise.
& Firms face a strict verifiability--efficiency frontier. \\

P4
& Regulatory: historical reproducibility
& Material unpinned changes break continuity with the prior execution environment.
& Historical reproducibility failures cluster around system updates, not elapsed time. \\

P5
& Regulatory: evidentiary validity
& Faithfulness turns compliance artifacts into substantive evidence.
& Documentation can increase without improving actual auditability. \\

P6
& Network: end-to-end verification
& Every causally material handoff must preserve its own evidence.
& One missing record prevents full end-to-end reconstruction. \\

P7
& Network: common-mode exposure
& Shared unpinned infrastructure changes many systems at once.
& Historical reproducibility failures cluster across nominally independent firms. \\

\bottomrule
\end{tabular}
\end{table}

\section{Multilevel Governance Consequences}
\label{sec:governance_consequences}

The Verifiability Gap is a binding governance exposure, not merely a technical
artifact. While an agentic system may remain highly capable and accurate, the
institution may simultaneously lose the ability to substantiate its actions.
Table~\ref{tab:prop_architecture} summarizes how this authority--evidence
mismatch generates seven specific governance consequences across three
analytical levels: the firm (P1--P3), the regulator (P4--P5), and the network
(P6--P7).

\subsection{Firm-Level Consequences: Defensible Delegation and Control}
\label{subsec:firm_consequences}

At the firm level, the Verifiability Gap generates three linked governance
consequences: an \emph{authority ceiling} on defensible delegation (P1),
\emph{control substitution} away from unreliable explanation-based review (P2),
and a \emph{verifiability tax} on speed, flexibility, and scope (P3).

Technical capability dictates what an agent \emph{can} do, but institutional
defensibility dictates what it \emph{may} be authorized to do. This creates an
\emph{authority ceiling}
\citep{baird2021delegation,tibebu2026accountability,ouchi1980markets}. For
example, an agentic system may be technically capable of autonomously pricing
and approving commercial loans, yet an institution can defensibly authorize
such execution only where sufficient evidence can be retained to substantiate
those decisions.

\begin{quote}
\textit{P1 (Defensible delegation).} Conditional on technical capability, an
institution's retained verification capacity dictates its authority ceiling. As
capability rises, delegated autonomy will plateau at a level constrained by
verifiability, with stricter plateaus for higher-stakes financial tasks.
\end{quote}

When explainability relies on fluent but unfaithful post hoc rationalizations,
it weakens substantive review
\citep{turpin2023unfaithful,goddard2012automation}. Firms must therefore rely
more heavily on controls that do not depend on explanation faithfulness
\citep{kirsch1997portfolios}. For example, when a bank cannot rely on the
faithfulness of explanations generated during automated AML triage, it can
bound the agent's authority by requiring human approval before an alert is
closed.

\begin{quote}
\textit{P2 (Control substitution).} As uncertainty regarding explanation
faithfulness increases, firms must substitute explanation-based review with
action-bounding controls to minimize consequential fault-escape rates.
\end{quote}

Preserving high levels of verifiability through version pinning, state
retention, and substantive human oversight can constrain autonomous execution,
creating an operating trade-off
\citep{cardinal2004balancing,wiener2016control}. In agentic trading or automated
credit origination, preserving tool responses, model states, and inter-agent
handoffs consumes storage, computation, and operational flexibility.

\begin{quote}
\textit{P3 (The verifiability tax).} Within a fixed architecture, controls that
perfectly preserve explainability and reproducibility impose a strict operating
penalty on system speed, flexibility, and scope.

\textit{Corollary to P3:} Operating a high-autonomy system without substantive
human review actively degrades human verification capacity
\citep{fugener2021borgs,skitka1999automation}, widening the gap over time.
\end{quote}

\subsection{Regulatory Level: Temporal and Chain-Level Validity}
\label{subsec:regulatory_consequences}

At the regulatory level, the Verifiability Gap generates two linked governance
consequences: \emph{historical decay} in reproducibility after material system
change (P4), and \emph{evidence validity} requirements across the full decision
chain (P5).

Traditional model-risk governance relies on periodic revalidation
proportionate to model risk, and the revised guidance SR~26-2
\citep{srguidance2026} places generative and agentic AI outside its scope. Agentic AI systems instead make validity event-dependent, because
provider-side changes can alter the executable, available controls, or
orchestration behavior without changing the interface name the firm calls.
For example, a provider-side model update can alter the operative behavior of a
bank's automated underwriting pipeline even when the institution changes
nothing internally.  Thus, historical reproducibility decays when a decision becomes exposed to
material system changes that were not historically preserved. 

Let $\mu_{dj}$ denote the downstream materiality of change $j$ for decision $d$,
and let $\pi_{dj}=1$ if the historical state required for replay was preserved
($\pi_{dj}=0$ otherwise). Over audit lag $\tau$, the institution therefore
accumulates exposure only to changes that are both material and unpinned:

\begin{equation}
M_d(\tau) = \sum_{j:t_j\leq\tau} \mu_{dj}(1-\pi_{dj}),
\qquad
R_{dq} = \psi\!\left(M_d(\tau)\right),
\qquad
\psi(M+\Delta M)\leq \psi(M) \ \text{for}\ \Delta M\geq 0.
\label{eq:change_exposure}
\end{equation}
where $\psi$ is a non-increasing function mapping accumulated change to
retained reproducibility. Equation~\ref{eq:change_exposure} states that historical reproducibility decays
as material, unpinned system changes accumulate. A change contributes only if it
matters to decision $d$ and its historical state was not preserved. In FinTech,
this exposure can affect historical credit, AML, or trading decisions. Audit lag
therefore matters by increasing opportunities for such changes to occur, so
reproducibility loss can emerge abruptly at provider releases, API deprecations,
or other material system changes rather than decay smoothly with calendar time
\citep{pineau2021reproducibility,atil2025nondeterminism}.

\begin{quote}
\textit{P4 (Historical reproducibility decay).} Historical reproducibility is non-increasing in
the cumulative materiality of unpinned system changes ($M_d(\tau)$), and
strictly decreases when those changes alter a materially relevant historical
state required for replay. Audit lag degrades verification only by increasing
the exposure window to such changes.

\vspace{4pt}

\textit{Corollary to P4:} In serial architectures, preserving isolated agents is
insufficient; unpinned upstream changes propagate, requiring chain-level state
preservation. For instance, if a \textit{Retrieval Agent} changes how it
extracts income data after a provider-side model update, the downstream
\textit{Decider Agent} may no longer reproduce the same loan decision even if
its own logic is unchanged. Historical-replay failures will therefore cluster
around material, unpinned system changes rather than increase smoothly with
elapsed time. Event-driven reauthorization should consequently identify
governance breaks more directly than calendar-only revalidation.
\end{quote}

Similarly, explainability requires faithful evidence across this entire chain.
Let $\omega_{dq}\in[0,1]$ denote the coverage of retained material states and
$\phi_{dq}\in[0,1]$ their validated causal faithfulness:

\begin{equation}
E_{dq} = \omega_{dq}\phi_{dq},
\qquad
\frac{\partial E_{dq}}{\partial \omega_{dq}} = \phi_{dq},
\qquad
\phi_{dq}=0 \ \Longrightarrow\ E_{dq}=0.
\label{eq:faithful_explainability}
\end{equation}

\begin{quote}
\textit{P5 (Evidence validity).} Explainability requires end-to-end causal
faithfulness ($\phi_{dq}$). Expanding record coverage ($\omega_{dq}$) yields zero
evidentiary value if the retained records do not faithfully track the true
operative decision chain. An organization can therefore accumulate more
compliance artifacts without improving, and potentially while degrading,
auditability when those artifacts omit the model executable, decision-time tool
states, or material inter-agent handoffs that produced the action.
\end{quote}

\subsection{Network-Level Consequences: End-to-End and Common-Mode Exposure}
\label{subsec:network_consequences}

In multi-firm agentic AI networks, end-to-end verification is constrained by the
weakest causally material handoff. If $V_{dhq}$ is the verification capacity at
handoff $h$, the end-to-end capacity $V_{dq}^{\mathrm{E2E}}$ cannot exceed
$\min_h V_{dhq}$. A single unrecorded material handoff breaks the entire audit
chain.

Figure~\ref{fig:network_exposure} summarizes the two network-level mechanisms:
the weakest-link constraint on end-to-end verification (P6) and common-mode
exposure created by shared upstream providers (P7). 

\begin{figure}[!htbp]
    \centering
    \includegraphics[width=0.72\linewidth]{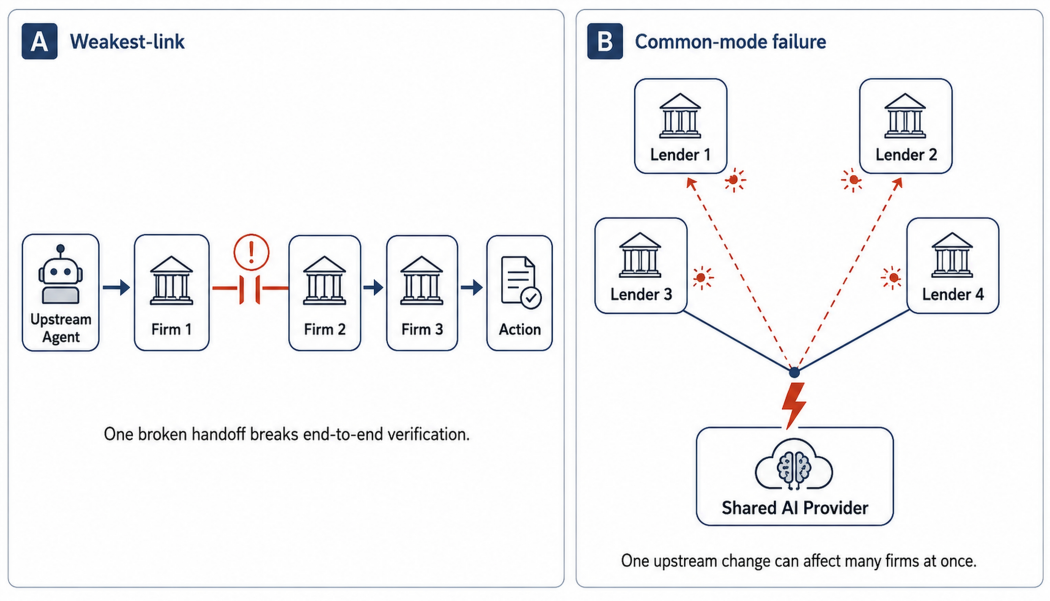}
    \caption{Network-level exposure in agentic FinTech. Panel A illustrates the
    weakest-link constraint: one failed material handoff can break end-to-end
    verification. Panel B illustrates common-mode failure: shared upstream AI
    providers can expose multiple firms to synchronized verification loss.}
    \label{fig:network_exposure}
\end{figure}

\begin{quote}
\textit{P6 (Weakest-link constraint).} In multi-firm networks, end-to-end
verification capacity is strictly bounded by the weakest causally material
handoff. Each additional handoff that carries a positive probability of losing its own
record can only raise the probability of end-to-end audit failure.
\end{quote}

Furthermore, apparent firm-level independence often masks shared reliance on
upstream AI providers
\citep{south2025delegation,faraj2018working,kellogg2020algorithms,murray2021humans,teodorescu2021failures}. Let $u_{ig}$ denote firm $i$'s
reliance on component $g$, and $\theta_{ig}$ its control over historical replay.
Uncontrolled exposure is $e_{ig}=u_{ig}(1-\theta_{ig})$. For a material upstream
change $W_g$ with positive variance, the losses $\Lambda_{ig}$ of firms that
share the component synchronize:

\begin{equation}
e_{ig}
=
u_{ig}(1-\theta_{ig}),
\qquad
\Lambda_{ig}
=
e_{ig}W_g,
\qquad
\operatorname{Cov}(\Lambda_{ig},\Lambda_{kg})
=
e_{ig}e_{kg}\operatorname{Var}(W_g)>0.
\label{eq:common_mode_exposure}
\end{equation}

\begin{quote}
\textit{P7 (Common-mode failure).} Network reliance on shared, uncontrolled AI
providers creates synchronized audit failures. Under positive uncontrolled exposure, a single material upstream update
degrades historical reproducibility across multiple nominally independent firms
at the same time. The HMDA analysis documents shared component use, not
uncontrolled exposure. Whether dependence produces synchronized verification
loss depends additionally on historical replay controls and material upstream
changes.
\end{quote}

\textit{P7 exposure under HMDA data.}  The exposure structure P7 presupposes is directly observable, and it is present
at scale. U.S. Home Mortgage Disclosure Act (HMDA) filings record the automated
underwriting system that evaluated each application, so provider dependence is
measured rather than inferred. Across $24{,}439{,}380$ applications filed by
$4{,}487$ institutions in the six largest reporting states (2019--2023), two
GSE-owned systems evaluated $77.7\%$ to $90.0\%$ of all automated credit
decisions in every year. The Herfindahl--Hirschman index over system shares
stays between $3{,}786$ and $4{,}834$ throughout. Dependence is also
concentrated within institutions, not only
across the market: over a third of reporting lenders routed more than $99\%$ of
their automated decisions through a single provider
(Appendix~\ref{app:hmda}). These data establish shared upstream reliance,
$u_{ig}$, but do not measure historical replay control, $\theta_{ig}$, or
identify model-version changes. They document a structural precondition for
common-mode exposure, not positive uncontrolled exposure
$e_{ig}=u_{ig}(1-\theta_{ig})$ or observed synchronized audit failures.

\section{Experiments on the Governance of the Verifiability Gap}
\label{sec:controlled_experiments}

This section empirically demonstrates how conventional audit indicators can
overstate a firm's retained verification capacity. We isolate the mechanisms
driving the Verifiability Gap across two distinct data regimes:

\begin{itemize}
    \item \textit{Constructed Financial Cases (Studies 1 \& 2):}
    A fixed suite of 32 scenarios across credit, trading,
    anti-money-laundering (AML), and portfolio rebalancing. This controlled
    environment isolates the effects of provider updates and multi-agent
    orchestration.

    \item \textit{Real-World Credit Data (Studies 3 \& 4):}
    The FICO HELOC dataset. We utilize a 1,937-record holdout and a
    32-application subsample to test whether these governance failures persist
    in real credit applications and traditional deterministic models.
\end{itemize}

Across these test beds, our four progressive studies identify complementary
mechanisms through which a system's actual verification capacity can fall behind
its delegated authority:

\begin{description}
    \item[Study 1: Provider Discontinuity.]
    Demonstrates that external endpoint updates and the withdrawal of execution
    controls can sever historical verification even when internal policies
    remain frozen.

    \item[Study 2: Architecture as Policy.]
    Reveals that multi-agent orchestration acts as a latent policy layer. It
    alters financial actions and decouples reproducible outcomes from the
    reproducibility of the underlying decision process.

    \item[Study 3: Historical Replay Failure.]
    Shows that perfectly stable current models can still fail to recover
    historical decisions at policy boundaries, demonstrating that current and
    historical reproducibility are distinct governance quantities.

    \item[Study 4: Reproducibility Without Differentiation.]
    Replicates the orchestration mechanism on real credit applications and
    establishes a governance paradox: perfect current reproducibility can
    coexist with a system that no longer differentiates among cases.
\end{description}

Together, these studies establish the limits of relying on current-state
reproducibility to substantiate historical delegated authority.

\subsection{Reproducibility Is a Governance Profile, Not a Scalar}
\label{subsec:operationalization}

Evaluating reproducibility as a single scalar metric masks critical governance
failures. Let $y_c^0$ and $\mathbf z_c^0$ denote the historical action and
materially relevant process trace recorded for decision $c$ at time $t_0$. Let
$y_{c,k}(t)$ and $\mathbf z_{c,k}(t)$ be their counterparts on repetition $k$
under the environment available at time $t$. $\mathcal C$ is the set of test
episodes, $K$ the repetitions, and $\mathcal L$ the permitted actions. We
decompose reproducibility into four distinct targets:

\textit{Current outcome reproducibility} asks whether today's environment yields
a stable action across repetitions (establishing current stability, not
historical recovery):
\begin{equation}
R_O(t) = \frac{1}{|\mathcal C|}\sum_{c\in\mathcal C}\frac{1}{K}
\max_{\ell\in\mathcal L}\sum_{k=1}^{K}\mathbf 1\!\left[y_{c,k}(t)=\ell\right].
\label{eq:ro_current}
\end{equation}

\textit{Historical outcome reproducibility} asks whether current re-execution
recreates the action recorded at $t_0$. In controlled studies, we use the
baseline modal verdict $\widehat y_c(t_0)$ as a proxy, yielding
\emph{baseline-modal historical reproducibility}:
\begin{equation}
\widetilde R_H(t) = \frac{1}{|\mathcal C|}\sum_{c\in\mathcal C}\mathbf 1\!\left[\widehat y_c(t)=\widehat y_c(t_0)\right].
\label{eq:rh_historical}
\end{equation}

\textit{Material process reproducibility} has a historical and a current form,
and the two answer different governance questions. The historical form asks
whether the materially relevant \emph{past} process can be reconstructed today
to the evidentiary standard $\sigma$:
\begin{equation}
R_P^{H}(t;\sigma) = \frac{1}{|\mathcal C|K}\sum_{c\in\mathcal C}\sum_{k=1}^{K}
\mathbf 1\!\left[\mathbf z_{c,k}(t)\equiv_{\sigma}\mathbf z_c^0\right].
\label{eq:rp_material}
\end{equation}
The current form asks whether the system reproduces its \emph{own} process now,
which is the weaker precondition:
\begin{equation}
R_P^{C}(t;\sigma) = \frac{1}{|\mathcal C|}\sum_{c\in\mathcal C}
\mathbf 1\!\left[\mathbf z_{c,k}(t)\equiv_{\sigma}\mathbf z_{c,k'}(t)
\ \ \forall\, k,k'\right].
\label{eq:rp_current}
\end{equation}
$R_P^{H}$ requires a preserved historical process $\mathbf z_c^0$ to compare
against, so it can only be measured where a historical executable survives.
Study~3 supplies that setting. Study~2 runs entirely in the present and
therefore measures $R_P^{C}$. The two measures are related but not nested: $R_P^{C}<1$ shows that the
system cannot guarantee to repeat its own process, and therefore cannot
guarantee to reconstruct an earlier one; it does not by itself fix the value of
$R_P^{H}$, which must be measured against a preserved historical process.

\textit{Exact trace reproducibility} is a strict diagnostic of computational
stability (requiring identical text sequences), separate from material
equivalence:
\begin{equation}
R_T(t) = \frac{1}{|\mathcal C|}\sum_{c\in\mathcal C}
\mathbf 1\!\left[\mathbf z_{c,1}(t)=\cdots=\mathbf z_{c,K}(t)\right].
\label{eq:rt_identity}
\end{equation}

Finally, high $R_O(t)$ is meaningless if the agent applies a blanket default to
every case. We track \textit{verdict differentiation} $D_V(t)$ as a diagnostic
for degenerate stability. Within each decision family $f$, let $\hat q_f$ be the
distribution of modal verdicts across that family's cases, and $|\mathcal L_f|$
the number of actions the family allows. Then
\begin{equation}
D_V(t) = \frac{1}{|\mathcal F|}\sum_{f\in\mathcal F}
\frac{H\bigl(\hat q_f\bigr)}{\log |\mathcal L_f|},
\label{eq:dv}
\end{equation}
the entropy of those modal verdicts, normalized so that a family answering every
case identically scores zero and one spreading cases evenly across its allowed
actions scores one. The
empirical object is therefore a composite profile, not a scalar:
\begin{equation}
\mathcal R(t;\sigma) =
\bigl(R_O(t),R_H(t),R_P^{H}(t;\sigma),R_P^{C}(t;\sigma),R_T(t);D_V(t)\bigr).
\label{eq:repro_profile}
\end{equation}

This composite profile is a theoretical necessity. As proven in
Appendix~\ref{app:proofs}, $R_O$ is bounded below by $1/|\mathcal L|$ while exact
trace reproducibility ($R_T$) and mutual information ($I(C;Y)$) can decay to
zero. A system can achieve $R_O=1$ by issuing constant defaults without retaining
any case-specific information, rendering scalar outcome metrics fundamentally
insufficient for governance.

\subsection{Experimental Design and Identification Strategy}
\label{subsec:experimental_design}

Studies~1 and~2 utilize 32 fixed financial cases (credit, trading, AML,
portfolio rebalancing) across four isolated testing arms:
\begin{itemize}
    \item \textit{Release-timeline arm ($\widetilde R_H$):} Executes cases across four dated Claude Opus releases over 185 days to isolate provider-driven drift.
    \item \textit{Decoding-control arm ($R_O$):} Varies \texttt{temperature} and \texttt{random seed} on a locally served \texttt{llama3.2:3b} to test execution-control reliance.
    \item \textit{Execution-environment arm ($R_O$):} Compares local vs.\ hosted execution under the tightest available controls to isolate infrastructure effects.
    \item \textit{Orchestration arm ($R_O, R_T, D_V$):} Sweeps configurations from 1 to 50 agents (Figure~\ref{fig:controlled_governance_test}). Information flows strictly serially; only the terminal Decider issues a verdict, cleanly separating trace variation from terminal outputs.
\end{itemize}

Agent count and chain depth are distinct quantities, and it is the second that
the theory constrains. Each configuration places three specialists per layer, so
adding agents widens the graph faster than it lengthens it: the largest
configuration has 52 nodes but only 19 sequential stages
(Figure~\ref{fig:topology}). The stage count is the $k$ that enters the
contraction bound, which is why the reported sweep sizes should not be read as
chain lengths. Adding agents also changes width, call count, and context length
at the same time, so the sweep does not isolate seriality on its own. What the
theory does fix is the direction. Lemma~\ref{lem:book}$(i)$ gives pairwise
trace agreement $A_T=\prod_{i=1}^{m}p_i$ for a fixed execution process, so
under $p_i\le\bar p<1$ it can only fall as calls are added; comparing
redesigned architectures requires the same bound to hold across them. The sweep tests whether that
worst case is what a governance function actually faces.

\begin{figure}[!htbp]
\centering
\includegraphics[width=0.78\linewidth]{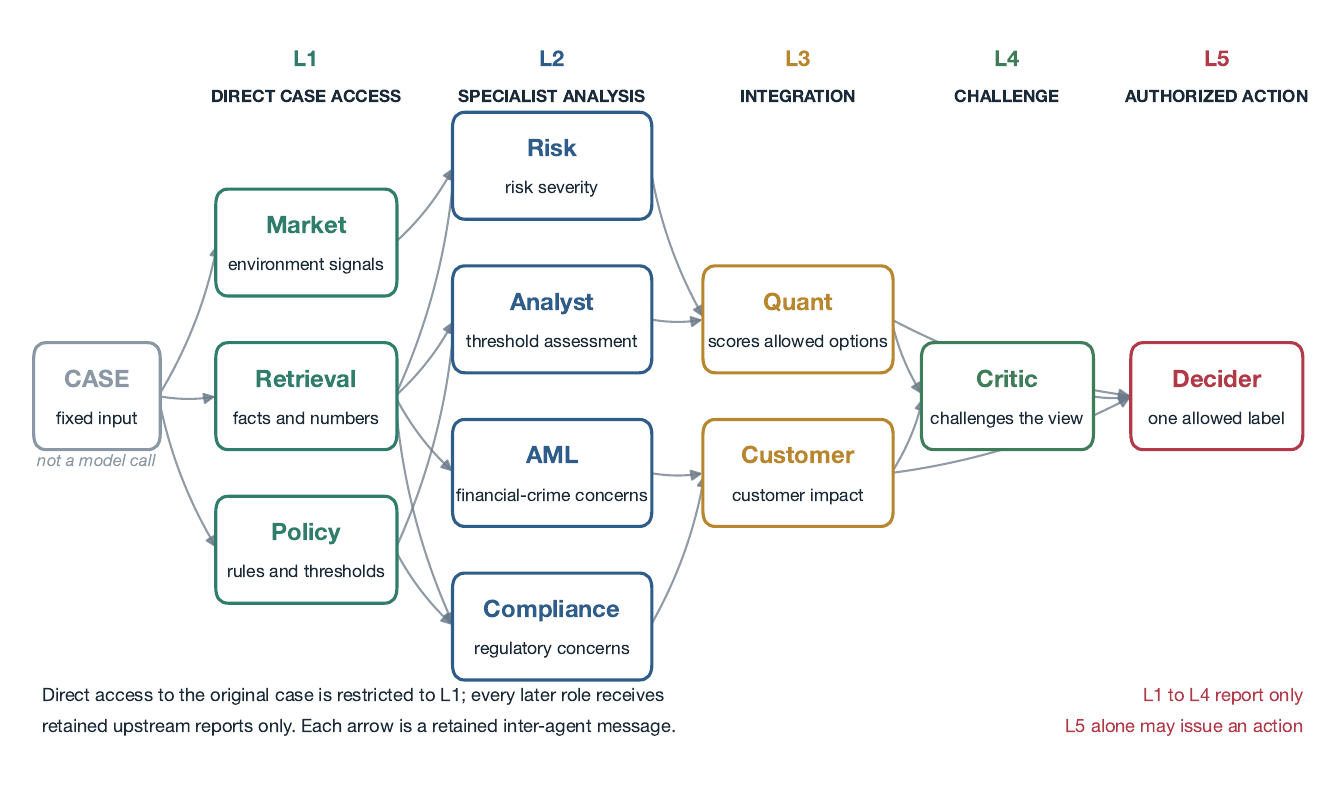}
\caption{Controlled multi-agent governance test, drawn from the role graph in
the replication package. Only Retrieval, Market, and Policy see the case;
every later role sees only the reports transmitted along the designated edges.
Roles in L1--L4 are report-only and the terminal Decider alone issues a verdict.
Every role output and handoff is retained, which is what makes the first
divergence locatable.}
\label{fig:controlled_governance_test}
\end{figure}

\begin{figure}[!htbp]
\centering
\includegraphics[width=0.84\linewidth]{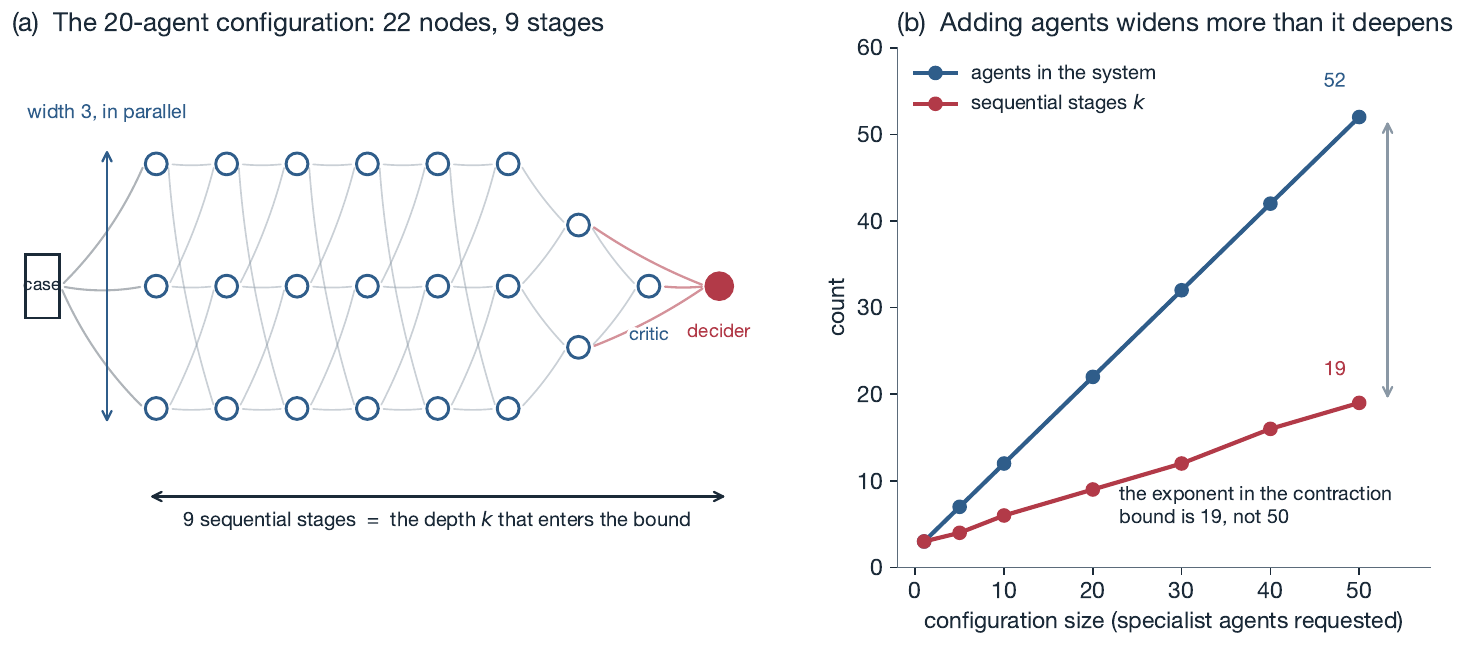}
\caption{Agent count and chain depth are different quantities.
\textbf{(a)} The twenty-agent configuration: three specialists per layer, each
reading two reports from the layer above and never the case itself. Twenty-two
nodes span nine sequential stages.
\textbf{(b)} Node count grows linearly with configuration size; sequential
stages grow far more slowly. The largest configuration has 52 nodes but 19
stages, and the stage count is what enters the contraction bound.}
\label{fig:topology}
\end{figure}

\subsection{Study 1: Provider Dominance Over the Governance Environment}
\label{subsec:controlled_evidence}

To strictly isolate provider-side exposure, Study~1 holds the 32 financial cases
and systemic instructions constant. It varies only those dimensions exclusively
governed by the AI provider: the specific release version that executes the
decision, and the algorithmic controls permitted for subsequent replay. These
are two distinct ways a hosted service can break historical verification. The
provider can replace the executable behind an endpoint, which changes the object
of audit; or it can withdraw the controls a replay requires, which changes the
means of verification. Three arms separate them. The firm may retain its prompt
and its output and still be unable to recreate the historical environment.

\subsubsection{Release-Timeline Arm: Changing the Object of Audit}
Across four Claude Opus releases spanning 185 days, with three repetitions per
case and release, baseline-modal historical reproducibility ($\widetilde R_H$) held at
1.000 through release two, then fell to 0.906
(Figure~\ref{fig:provider_audit_surface}a). The measure is deliberately
conservative: it asks only whether the action category is recovered, not whether
the full trace or the explanation is. Every divergent verdict shifted
conservatively (e.g., \emph{buy} to \emph{hold}). This constitutes a
cross-version shift in operative risk posture under unchanged case and policy
inputs. A release
may bundle changes to behavior, capability, and infrastructure, and we do not
attribute an observed shift to any particular undocumented internal change.
Moreover, aggregate stability masked case churn: the cases deviating in release three
differed from those in release four, meaning portfolio-level stability cannot
pinpoint which historical decisions require audit.

More critically, Anthropic deprecated institutional control over reproduction
mechanisms. Requests to \texttt{claude-opus-5} with \texttt{temperature=0} now
return \texttt{HTTP 400}: ``\texttt{`temperature` is deprecated}''. Random seeds
were never exposed. This exemplifies \emph{control-surface dependence}: retained
reproducibility relies on execution controls the institution neither owns nor
can preserve. A fixed seed cannot compensate once the underlying model or tool
context shifts (Figure~\ref{fig:provider_audit_surface}b).

\begin{figure}[!htbp]
\centering

\begin{subfigure}[t]{0.42\linewidth}
\centering
\vspace{0pt}
\includegraphics[width=\linewidth]{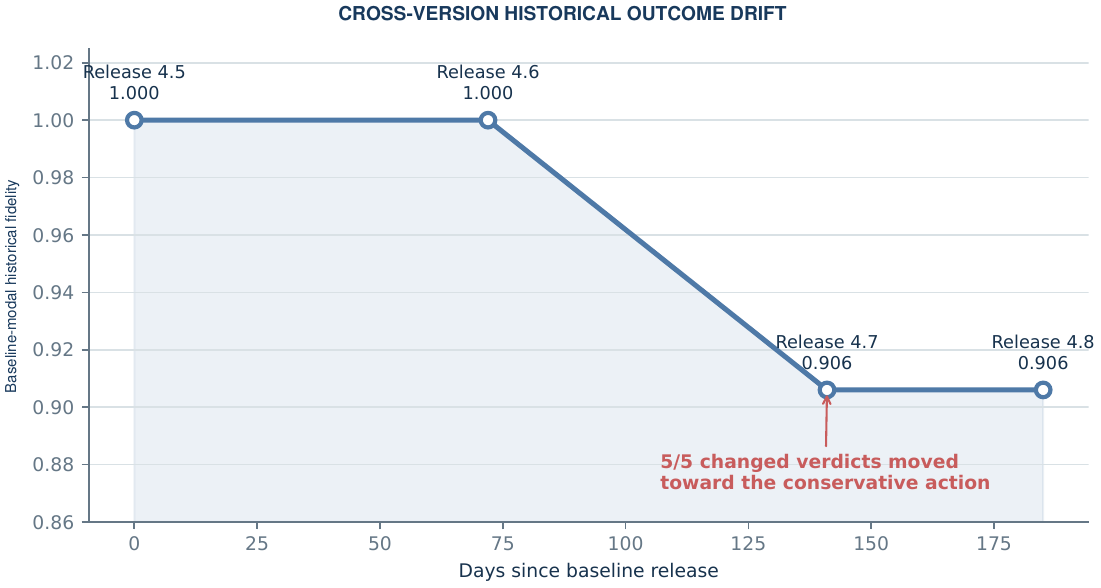}
\caption{Release-timeline drift.}
\label{fig:versions}
\end{subfigure}
\hfill
\begin{subfigure}[t]{0.55\linewidth}
\centering
\vspace{0pt}
\includegraphics[width=\linewidth]{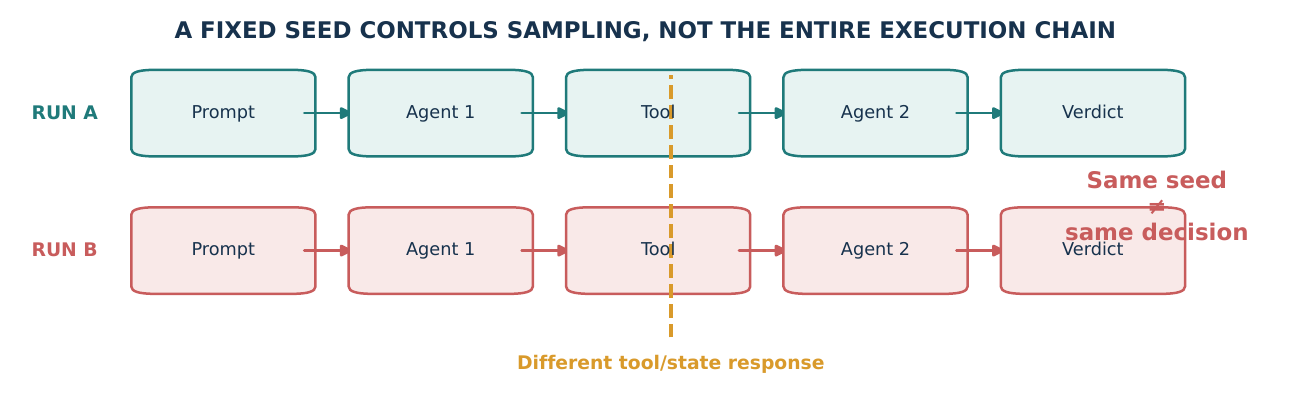}
\caption{Limits of seed-based replay.}
\label{fig:seed_chain}
\end{subfigure}

\caption{Provider dominance over the audit surface.
\textbf{(a)} Baseline-modal historical drift across four dated Claude Opus
releases: $\widetilde R_H$ holds at 1.000 through the second release, then falls
to 0.906. Every changed verdict moves toward a more conservative action.
Releases also alter capability, so this is release-timeline drift, not silent
same-version drift.
\textbf{(b)} A seed controls sampling, not the execution chain. It cannot
recreate a decision once model state, tool responses, or context have drifted.}
\label{fig:provider_audit_surface}
\end{figure}

These results should not be read as a comparison of model quality. A changed
action may be more accurate, more conservative, or better aligned with a new
provider policy. Historical reproducibility asks a different question: can the current
environment recover what the institution actually did before? Even a beneficial
update creates an audit discontinuity when the earlier executable and control
surface are gone.

\subsubsection{Decoding-Control Arm: Changing the Means of Replay}
Panel~A of Table~\ref{tab:decodinggrid} isolates the exposed execution controls
on a locally served model, where the case and policy logic are held fixed. With
temperature pinned to zero and the version fixed, both the single-agent and the
ten-agent configurations reproduced every modal action across five repetitions.
When temperature was not sent and seeds varied, current outcome reproducibility
fell to 0.912 for one agent and 0.794 for ten; when neither temperature nor seed
was supplied, to 0.831 and 0.744. Replay therefore depends on controls that sit
outside the financial case entirely. The governance problem is not that
stochastic decoding is improper. It is that a firm cannot promise later
reconstruction when it neither retains the historical setting nor decides
whether the provider keeps exposing it.

\subsubsection{Execution-Environment Arm: Tight Controls Do Not Guarantee Replay}
Panel~B of Table~\ref{tab:decodinggrid} demonstrates that reproducibility
requires absolute execution control. Locally, with temperature pinned to zero,
\texttt{llama3.2:3b} returned the modal verdict in 320/320 executions.
Conversely, a hosted \texttt{claude-haiku} model pinned to a dated release
deviated in 1 of 320 executions. When controls were entirely removed
(\texttt{claude-opus-5}, provider default), the model deviated in 1 of 960
executions. Notably, this deviation flipped a clear credit approval to a
\emph{refer}, offering a highly defensible alternative rationale based on a
single delinquency. Irreproducibility on hosted endpoints is therefore not mere
noise; it is the stochastic injection of defensible alternative judgments beyond
institutional control.

We report these as counterexamples rather than estimates. One deviating
execution does not support a rate, a difference of one case in 32 is not
distinguishable from chance on this sample, and the three arms differ in
repetition count and available controls, so their execution-level figures are
not three measurements of one quantity. What the deviations establish is
narrower and sufficient: reproducibility is not guaranteed on a hosted endpoint,
whether every exposed control is pinned or none is exposed at all.

\begin{table}[!ht]
\centering
\caption{\text{Execution controls and current outcome reproducibility.} Across 32 financial cases, Panel~A varies decoding parameters on a fixed local model (five repetitions), while Panel~B compares execution environments under each endpoint's tightest available replay controls (identical prompts; commercial deliberation disabled across all arms).}
\label{tab:decodinggrid}
\footnotesize
\setlength{\tabcolsep}{5pt}
\renewcommand{\arraystretch}{1.02}

\textbf{Panel A. Decoding controls, local model}\\[2pt]
\begin{tabular}{@{}llrr@{}}
\toprule
\textbf{Temperature} & \textbf{Seed} & \textbf{1 agent} & \textbf{10 agents} \\
\midrule
Pinned to 0 & Varied & 1.000 & 1.000 \\
Not sent & Varied & 0.912 & 0.794 \\
Not sent & Not sent & 0.831 & 0.744 \\
\bottomrule
\end{tabular}

\vspace{8pt}
\textbf{Panel B. Execution environment, tightest available controls}\\[2pt]
\begin{tabular}{@{}>{\raggedright\arraybackslash}p{2.5cm}>{\raggedright\arraybackslash}p{3.0cm}>{\raggedright\arraybackslash}p{4.3cm}rr@{}}
\toprule
\textbf{Environment} & \textbf{Model} & \textbf{Replay controls exposed} & \textbf{Modal} & \textbf{Cases} \\
 & & & \textbf{verdict} & \textbf{differing} \\
\midrule
Local, single stream & \texttt{llama3.2:3b} & temperature 0; version pinned & 320 / 320 & 0 / 32 \\
Hosted, shared & \texttt{claude-haiku-4-5} & temperature 0; dated release & 319 / 320 & 1 / 32 \\
Hosted, shared & \texttt{claude-opus-5} & provider-default decoding; fixed model ID & 959 / 960 & 1 / 32 \\
\bottomrule
\end{tabular}
\end{table}

Study~1 therefore identifies two separate governance shocks. A release can change
the action recovered from the same case, and an endpoint can remove the controls
required to recreate the prior execution. These are not ordinary performance
drifts; they change the historical object against which validation and
accountability operate. The organizational response they call for is
change-triggered reauthorization of delegated authority, together with
preservation of an executable evidence bundle wherever that is technically and
contractually possible.

\subsection{Study 2: Orchestration, Serial Dependence, and Output Collapse}
\label{subsec:orchestration_experiment}

Study~2 holds the local model and inputs fixed, altering only the orchestration
architecture.

\subsubsection{Architecture Is Policy and Trace--Outcome Decoupling}
This first comparison fixes temperature at zero. That isolates architecture from
decoding variation. Under this control each architecture reproduced its own
modal verdict in every repetition. $R_O$ is therefore 1.000 for the one-agent
and the ten-agent configuration alike (Table~\ref{tab:decodinggrid}, Panel~A).
Internal reproducibility was perfect on both sides. The two architectures still
disagreed with each other. Their modal verdicts matched on only 16 of 32 cases.
Orchestration forms a latent policy layer that alters the operative risk posture
without explicit mandate. Because the designed roster partly conditions the
direction of that shift, the evidence supports a structural effect of
architecture rather than a universal claim that multi-agent systems become more
conservative.

At the provider default, a separate ten-repetition run gives the
family-level view in Figure~\ref{fig:agents_policy_trace}a: mean $R_O$ falls
from 0.903 at one agent to 0.747 at ten. The five-repetition sweep in
Section~\ref{subsubsec:degenerate} reports 0.912 and 0.794 for its own
configurations; the two runs are not directly comparable. The trace behaves differently again.
Under that condition the recorded trace differed across repetitions at every
stage in all 32 cases, yet the final verdict differed in only 25
(Figure~\ref{fig:agents_policy_trace}b). This establishes
\emph{trace--outcome decoupling}: downstream agents absorb upstream linguistic
variation, meaning a stable final verdict ($R_O$) offers no proof that the
material decision process ($R_P^{C}$) was reproduced.

That inference requires measuring $R_P^{C}$ directly. We operationalize
current process agreement using a lexical-content proxy for $\equiv_\sigma$:
corresponding stage outputs match when their multisets of numeric quantities
and sets of allowable-action mentions match. This rule ignores wording and
order, but it does not resolve variable attribution, units, or negation. A
case passes only when all corresponding stages match across repetitions. The
proxy therefore measures agreement under the specified extraction rule, not
independently validated semantic equivalence.

Under this standard the single-call configuration retains partial material
reproducibility, $R_P^{C}=0.250$ on the three-billion-parameter model and $0.094$ on
the eight-billion one, while every multi-agent configuration returns
$R_P^{C}=0$ (Table~\ref{tab:rp}). Under a second, looser rule, so that numeric quantities are ignored entirely and
only the recommended actions must match, raises single-call $R_P^{C}$ to $0.562$ on
both models and still leaves $0.031$ and $0.000$ at five agents. 

Because the measure is a conjunction across stages and repetitions, adding
stages makes it mechanically harder to satisfy. This also applies to the
decline between one and five agents. The observed scores establish
instability under the specified proxy; they do not identify how much of the
decline reflects material process change rather than additional comparisons.
Nor does a zero score establish that the retained historical trace is
unreconstructable. Outcome agreement cannot substitute for a separate
assessment of material process equivalence.

\begin{table}[!ht]
\centering
\caption{\textbf{Current process agreement under the specified numeric/action proxy ($R_P^C$) and exact trace reproducibility ($R_T$).} Share of 32 cases with mutually equivalent repetitions under $\equiv_\sigma$ (Equation~\ref{eq:rp_current}; distinct from historical replay in Study~3). Frontier arm tested at three configurations only (n/a: not run).}
\label{tab:rp}
\footnotesize
\renewcommand{\arraystretch}{1.02}
\begin{tabular}{@{}lrrrrrrr@{}}
\toprule
\textbf{Agents} & \textbf{1} & \textbf{5} & \textbf{10} & \textbf{20} & \textbf{30} & \textbf{40} & \textbf{50} \\
\midrule
$R_P^{C}$, \texttt{llama3.2:3b} & 0.250 & 0.000 & 0.000 & 0.000 & 0.000 & 0.000 & 0.000 \\
$R_P^{C}$, \texttt{llama3.1:8b} & 0.094 & 0.000 & 0.000 & 0.000 & 0.000 & 0.000 & 0.000 \\
$R_P^{C}$, frontier commercial  & 0.250 & n/a   & 0.000 & n/a   & n/a   & 0.000 & n/a   \\
\midrule
$R_T$, all three scales     & 0.000 & 0.000 & 0.000 & 0.000 & 0.000 & 0.000 & 0.000 \\
\bottomrule
\end{tabular}
\end{table}

\begin{figure}[!htbp]
\centering

\begin{subfigure}[t]{0.48\linewidth}
\centering
\includegraphics[width=\linewidth]{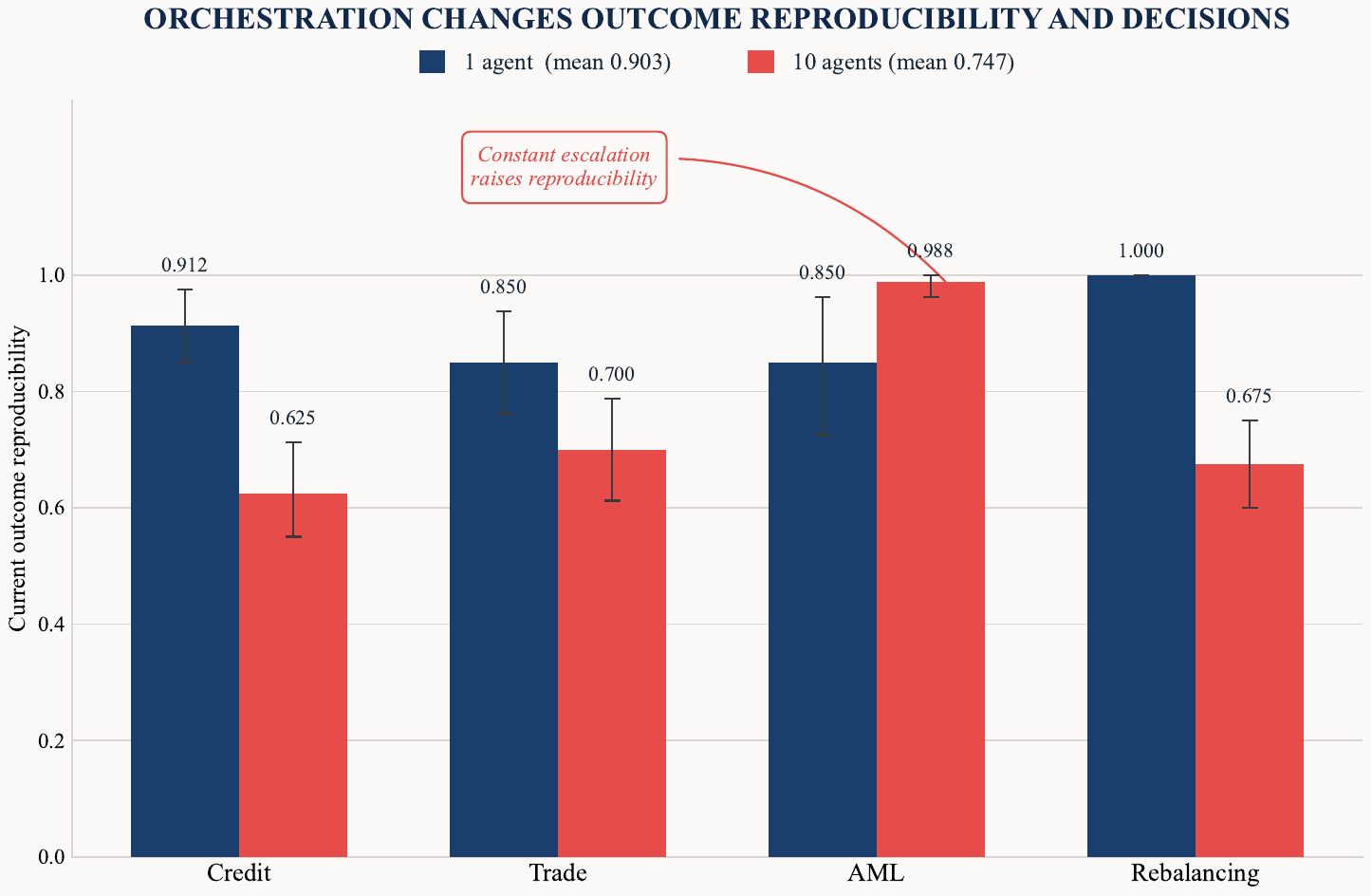}
\caption{Architecture-level reproducibility by decision family, with 95\%
confidence intervals over the eight cases in each family. Axes are zero-based.}
\label{fig:agentsfid}
\end{subfigure}
\hfill
\begin{subfigure}[t]{0.48\linewidth}
\centering
\includegraphics[width=\linewidth]{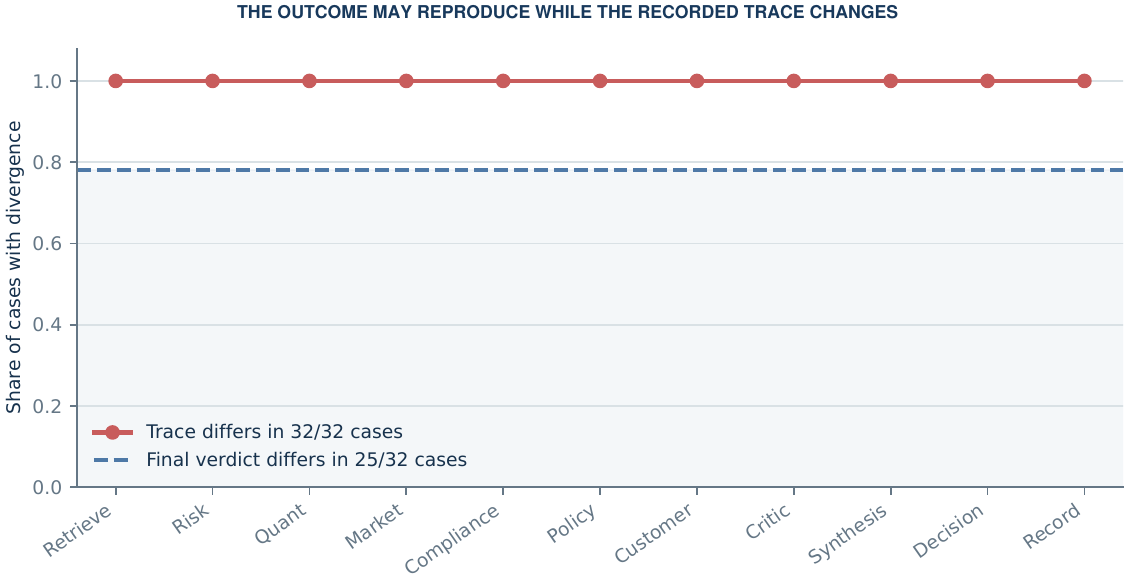}
\caption{Trace--outcome decoupling.}
\label{fig:agentsstage}
\end{subfigure}

\caption{Orchestration changes both the decisions reproduced and the records
supporting them.
\textbf{(a)} Mean $R_O$ falls from 0.903 at one agent to 0.747 at ten. AML moves
the other way because near-constant escalation makes the verdict artificially
easy to reproduce.
\textbf{(b)} Traces differ in all 32 cases at every stage; the verdict differs
in only 25. This establishes exact trace nonidentity. Material process
equivalence requires a separate test under $\equiv_{\sigma}$.}
\label{fig:agents_policy_trace}
\end{figure}

\subsubsection{Calibrating the Theory Against the Sweep}

Table~\ref{tab:calib} places the measured quantities beside the ones the theory
constrains, using the pairwise statistics $A_T$ and $A_O$ of
Lemma~\ref{lem:book}. Three readings follow. First, pairwise trace agreement
$A_T$ is $0.0000$ in every released-temperature row at both scales, so by
Lemma~\ref{lem:book}$(ii)$ the outcome metric equals the compression
probability $\kappa$. Since $\kappa$ is estimated from the same pairs, this
is a decomposition identity and a consistency check, not an independent test;
its content is that every movement in outcome agreement across the sweep is
movement in $\kappa$ alone. Second, case information is lower in some larger configurations, but the
pattern is non-monotonic. On the smaller model, $I(C;Y)$ falls from $0.2986$
bits at one agent to $0.0853$ at forty, then returns to $0.1990$ at fifty; on
the larger model it moves from $0.2434$ at one to $0.1105$ at fifty. Because
configurations also differ in width and context, these contrasts do not
identify serial compression alone. Third, the one configuration in which both terms are visible is the
eight-billion model at temperature zero with ten agents, where $A_T=0.4500$ and
$\kappa=0.8409$ are simultaneously interior; everywhere else the process term has
already reached its floor.

\begin{table}[!ht]
\centering
\caption{Calibration of the theory against the sweep. \emph{Released temperature}
represents provider withdrawal of temperature control. $A_T$ is pairwise trace
agreement, not the all-repetition share $R_T$. Pairwise outcome
agreement $A_O=\Pr[Y^{(1)}=Y^{(2)}]$ is linked to compression probability
$\kappa$ by Lemma~\ref{lem:book} and cannot exceed the modal share
(Equation~\ref{eq:ro_current}). Mutual information uses a uniform case prior
within each decision family. With $L=2$ or $3$ actions, the agreement floor
is $1/L$ ($.33$--$.50$), even when process evidence is lost (n/a: undefined).}
\label{tab:calib}
\footnotesize
\begin{tabular}{@{}llrrrr@{}}
\toprule
\textbf{Model} & \textbf{Condition} & \textbf{$A_T$} & \textbf{$A_O$} &
\textbf{$\kappa$} & \textbf{$I(C;Y)$ (bits)} \\
\midrule
\texttt{llama3.2:3b} & released, $n=1$   & 0.0000 & 0.8313 & 0.8313 & 0.2986 \\
                     & released, $n=5$   & 0.0000 & 0.5938 & 0.5938 & 0.2130 \\
                     & released, $n=10$  & 0.0000 & 0.6594 & 0.6594 & 0.2118 \\
                     & released, $n=20$  & 0.0000 & 0.7844 & 0.7844 & 0.1429 \\
                     & released, $n=30$  & 0.0000 & 0.6687 & 0.6687 & 0.1781 \\
                     & released, $n=40$  & 0.0000 & 0.7875 & 0.7875 & 0.0853 \\
                     & released, $n=50$  & 0.0000 & 0.7344 & 0.7344 & 0.1990 \\
\midrule
\texttt{llama3.2:3b} & temperature 0, $n=1$  & 1.0000 & 1.0000 & n/a    & n/a \\
                     & temperature 0, $n=10$ & 0.9375 & 1.0000 & 1.0000 & n/a \\
\midrule
\texttt{llama3.1:8b} & released, $n=1$   & 0.0000 & 0.7875 & 0.7875 & 0.2434 \\
                     & released, $n=5$   & 0.0000 & 0.5813 & 0.5813 & 0.2474 \\
                     & released, $n=10$  & 0.0000 & 0.6281 & 0.6281 & 0.2591 \\
                     & released, $n=20$  & 0.0000 & 0.5375 & 0.5375 & 0.1076 \\
                     & released, $n=30$  & 0.0000 & 0.5281 & 0.5281 & 0.1366 \\
                     & released, $n=40$  & 0.0000 & 0.6562 & 0.6562 & 0.1494 \\
                     & released, $n=50$  & 0.0000 & 0.5312 & 0.5312 & 0.1105 \\
\midrule
\texttt{llama3.1:8b} & temperature 0, $n=1$  & 0.9688 & 1.0000 & 1.0000 & n/a \\
                     & temperature 0, $n=10$ & 0.4500 & 0.9125 & 0.8409 & n/a \\
\bottomrule
\end{tabular}
\end{table}

\subsubsection{Degenerate Outcome Reproducibility}
\label{subsubsec:degenerate}
Scaling from one to fifty agents revealed a non-monotonic $R_O$
(Figure~\ref{fig:agentsweep}), dropping to 0.750 at five agents before
artificially recovering to 0.875 at forty. This recovery is a governance
illusion. Exact trace reproducibility ($R_T$) was strictly zero across all 1,120
executions (Table~\ref{tab:agentsweep}).

The apparent $R_O$ recovery coincides with \emph{categorical
compression}. As pipeline depth increases, agents apply blanket actions,
collapsing verdict differentiation ($D_V$) from 0.324 to 0.086. For example, AML
cases initially split between \emph{clear} and \emph{escalate}; by forty agents,
all eight cases were escalated. We term this \emph{degenerate outcome
reproducibility}: metrics signal robust stability precisely when the system has
lost the capacity to distinguish cases.

This mechanism scales. Replicating the sweep on a larger \texttt{llama3.1:8b}
model yielded the same dynamic in direction: $R_T=0$, with $R_O$ recovering and
$D_V$ falling toward forty agents (Figure~\ref{fig:scale_compare}). For the
eight-billion model the decline is directional rather than resolved, because its
intervals at five and at forty agents overlap; only the three-billion collapse
separates cleanly from its own earlier values. While the larger
model retained more differentiation (0.214 vs.\ 0.086), its outcome
reproducibility was systematically lower beyond ten agents. A frontier commercial model
(\texttt{claude-sonnet-5}) run at three configurations is consistent with the
same reading. It starts from a higher differentiation ($D_V=0.749$). Its exact
trace reproducibility is still zero, and its differentiation still falls by
68\% at extreme scale.
Capability improves outcome reproducibility; in the configurations observed it
does not guarantee auditability.

Two cautions bound this comparison. The provider does not disclose the frontier
model's parameter count, so scale is reported as a tier rather than a continuous
covariate, and the three endpoints do not share a decoding condition, because
temperature is settable locally and rejected by the commercial endpoint. And
$D_V$ is unstable at these sample sizes: recomputing the eight-billion-parameter
model on three repetitions moves its forty-agent value from 0.214 to 0.414. We
therefore report change within a model and avoid ranking models at a single
configuration.

\begin{table}[!ht]
\centering
\caption{\textit{Agent scale versus reproducibility and verdict differentiation.} Evaluated on locally served \texttt{llama3.2:3b} across 32 cases (five repetitions, unconstrained temperature, varied seeds). The single-agent row serves as the baseline for the underlying model under identical execution settings.}
\label{tab:agentsweep}
\footnotesize
\renewcommand{\arraystretch}{1.02}
\begin{tabular}{@{}rrrrr@{}}
\toprule
\textbf{Agents} & \textbf{$R_O$} & \textbf{Exact trace} & \textbf{Cases with a} & \textbf{$D_V$} \\
 & & \textbf{reproducibility} & \textbf{verdict split} & \\
\midrule
1  & 0.912 & 0/32 & 13/32 & 0.324 \\
5  & 0.750 & 0/32 & 25/32 & 0.336 \\
10 & 0.794 & 0/32 & 21/32 & 0.367 \\
20 & 0.869 & 0/32 & 13/32 & 0.264 \\
30 & 0.800 & 0/32 & 21/32 & 0.286 \\
40 & 0.875 & 0/32 & 14/32 & 0.086 \\
50 & 0.850 & 0/32 & 18/32 & 0.086 \\
\bottomrule
\end{tabular}
\end{table}

\begin{figure}[!htbp]
\centering
\includegraphics[width=0.79\linewidth]{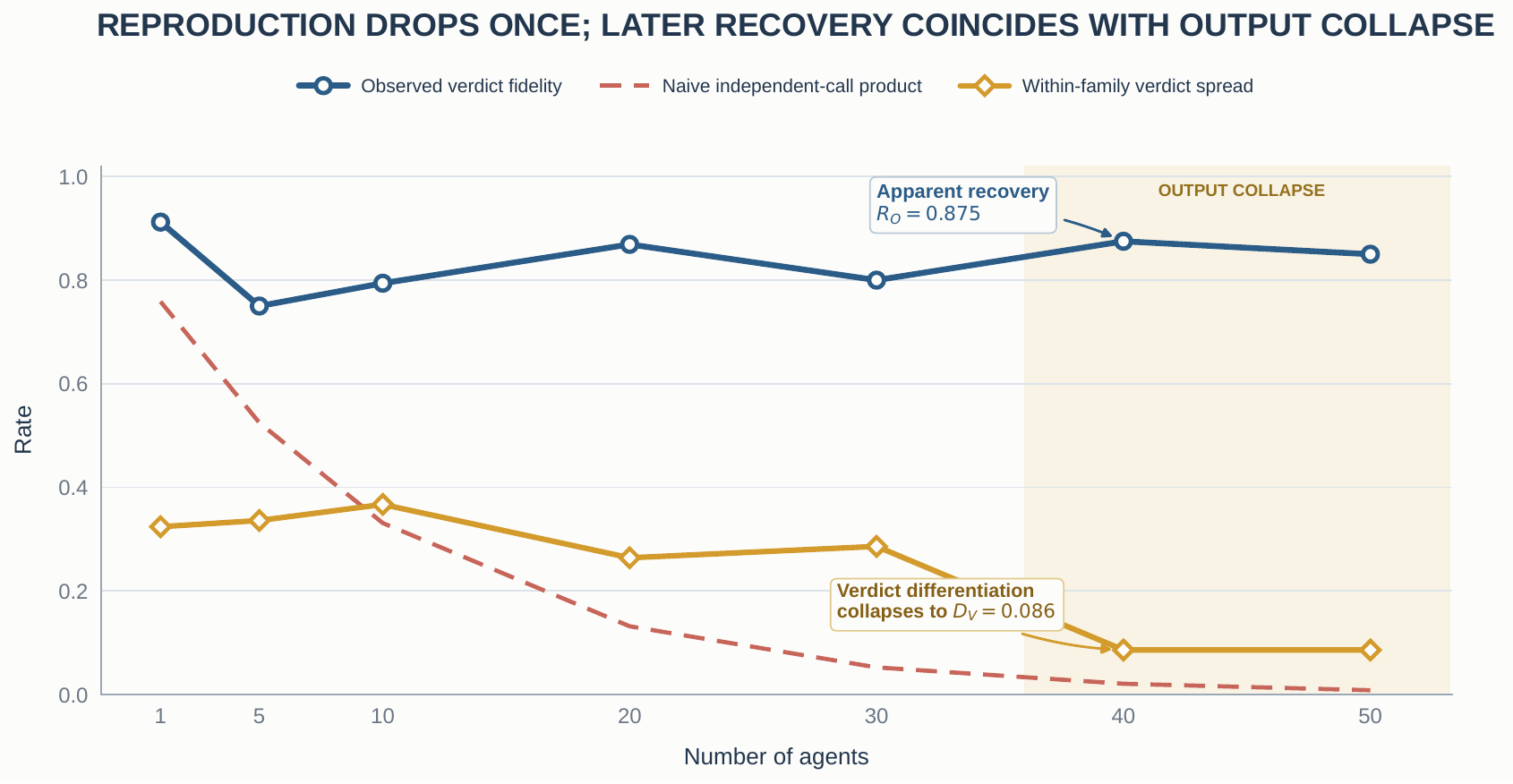}
\caption{\textbf{Degenerate outcome recovery.} Pipeline depth initially reduces $R_O$ via trace variance, but the rebound coincides with serial summarization mapping divergent traces onto default actions ($R_O$ rebounds as $D_V$ collapses). The dashed benchmark assumes independent error propagation, omitting this many-to-one compression.}
\label{fig:agentsweep}
\end{figure}

\begin{figure}[!htbp]
\centering
\includegraphics[width=0.86\linewidth]{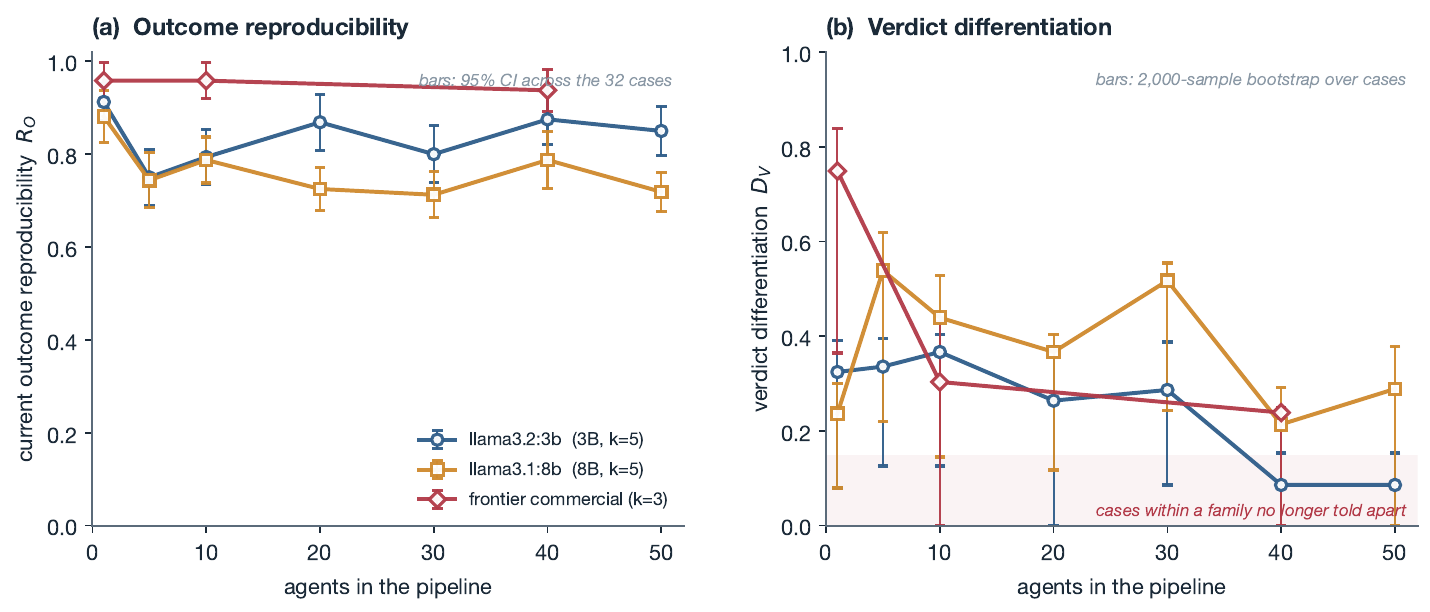}
\caption{\textit{Agent scaling across model tiers (95\% CIs).}
\textbf{(a)}~$R_O$: Local models dip at 5 agents then rebound (8B underperforms 3B beyond 10 agents); frontier $R_O$ remains strictly superior without CI overlap.
\textbf{(b)}~$D_V$: Reflects $4\times 8$ sample size; only 3B at 40 agents separates cleanly from baseline, though frontier differentiation falls steepest.
$R_T = 0$ throughout. (Frontier: 3 configs, 3 reps; local: 7 configs, 5 reps).}
\label{fig:scale_compare}
\end{figure}

\subsection{Study 3: A Reproducible Credit Model Can Fail to Reproduce History}
\label{subsec:credit_experiment}

Studies~1 and~2 demonstrate severe verifiability deficits, yet both rely on
stochastic generation and complex serial orchestration. Consequently, neither
definitively isolates the failure from these systemic complexities. To prove that
historical reproducibility failures require neither generative stochasticity nor
multi-agent handoffs, Study~3 deploys a fully deterministic, single-step credit
model. By stripping away these confounders, we isolate the historical
reproducibility failure that a version change alone produces.

Using the FICO HELOC dataset \citep{fico2018challenge}, version $v_1$ (5,808
records) and updated version $v_2$ (7,745 records) evaluate a strict
1,937-record holdout. Both are standardized logistic regressions on the same five variables
(\texttt{StandardScaler} + \texttt{LogisticRegression}, seed 2026); a
stratified 80/20 split with the same seed sets aside the holdout, $v_1$ is
fitted on 75\% of the development set and $v_2$ on all of it, so $v_2$ is a
refresh with more data and unchanged code and thresholds.

At the portfolio level, aggregate validation masks case-level divergence
($\mathrm{AUC}_{v_1}=.7901$ vs.\ $\mathrm{AUC}_{v_2}=.7899$). Yet, across the
holdout, 23 actions changed, yielding $R_H = .9881$. Consider Case 2270: $v_1$
estimates $p=.5580$ (\emph{deny}), while $v_2$ estimates $p=.5488$
(\emph{refer}). Across repeated executions, both frozen versions exhibit perfect
current stability:
\begin{equation}
R_O(v_1)=R_O(v_2)=1, \qquad R_H(\text{case }2270)=0.
\label{eq:credit_ro_rh_separation}
\end{equation}

The 23 discordant cases share a geometry: an action changes only when the score
movement exceeds the distance to the nearest cutoff, and every one of them
begins within .0161 of a threshold while moving no further than a typical case
(Figure~\ref{fig:study3_robust}a). Nor is the result an artifact of the chosen
thresholds, since all 39 admissible cutoff pairs on a .05 grid change at least 14
held-out actions (Figure~\ref{fig:study3_robust}b).

This establishes a critical separation: $R_O=1$ can coexist with $R_H=0$ at the
decision boundary. If a firm retains only current model $v_2$, attempting to
audit historical Case 2270 will yield a \emph{refer}, opening a decision-level
Verifiability Gap (Figure~\ref{fig:credit_verifiability_gap}). An explanation
generated from $v_2$ may accurately describe current logic, but it is
evidentially nonresponsive to the historical denial.

Conversely, preserving the $v_1$ executable restores verification capacity,
allowing auditors to perfectly reproduce the original $p=.5580$ denial and
conduct causal faithfulness probes (Table~\ref{tab:credit_experiment}). The
relevant audit object is never merely the current system; it is the specific
historical executable that exercised the delegated authority.

\begin{table}[!ht]
\centering
\caption{Credit-decision experiment: aggregate stability can conceal
decision-level historical-reproducibility failure.}
\label{tab:credit_experiment}
\footnotesize
\renewcommand{\arraystretch}{1.02}
\begin{tabular}{@{}p{0.22\linewidth}p{0.37\linewidth}p{0.33\linewidth}@{}}
\toprule
\textbf{Evidence test}
& \textbf{Observed result}
& \textbf{Governance implication} \\
\midrule

Version-level validation
& $\mathrm{AUC}_{v_1}=.7901$ and
  $\mathrm{AUC}_{v_2}=.7899$; 23 of 1{,}937 held-out actions change, yielding
  $R_H=.9881$.
& A stable aggregate AUC does not guarantee historical reproducibility. \\

Case-level action
& $v_1$: $p=.5580$, deny; $v_2$: $p=.5488$, refer. A score movement of .0092
  crosses the fixed denial threshold.
& Small numerical changes can produce discrete and consequential financial
  action changes at policy boundaries. \\

Current reproducibility
& Each frozen version returns its own score and action in 10/10 executions.
& $R_O=1$ can coexist with case-level $R_H=0$ because current reproducibility
  and historical reproducibility evaluate different versions. \\

Faithfulness probes
& Three one-factor interventions on $v_1$ lower the score below .55 and change
  deny to refer exactly as the stated reasons predict.
& Reasons have evidentiary value only when tested against the executable that
  produced the historical action. \\

Retention condition
& Current-only replay returns the $v_2$ referral; replay of the preserved $v_1$
  evidence bundle recovers the historical denial and permits faithfulness
  testing.
& Current-only retention opens the Verifiability Gap; preserving the
  historical executable restores verification capacity. \\

\bottomrule
\end{tabular}
\end{table}

\begin{figure}[!htbp]
\centering
\includegraphics[width=0.86\linewidth]{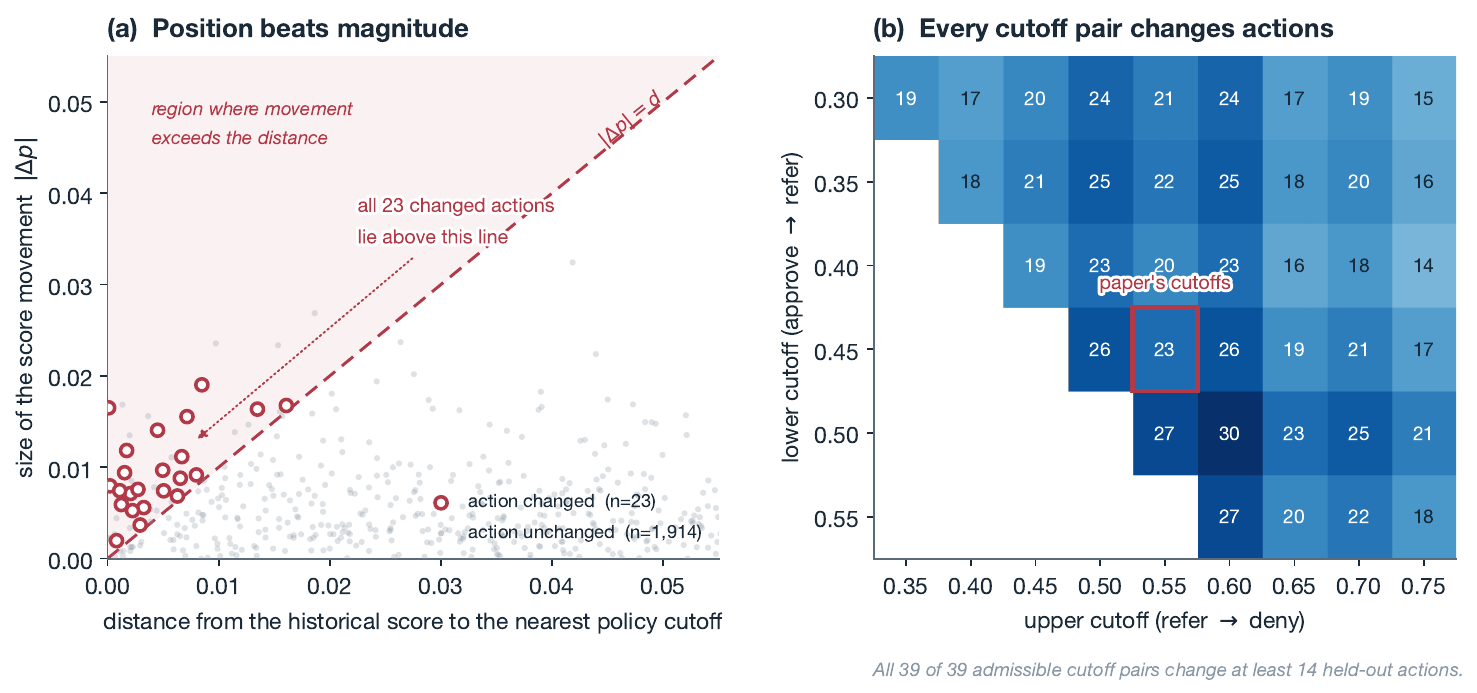}
\caption{\textit{Action changes under model refresh.}
\textbf{(a)} Score movement ($|\Delta p|$) vs.\ distance to nearest $v_1$ cutoff ($d$). Actions change strictly above the identity line $|\Delta p| = d$; all 23 flipped cases lie above it, driven by cutoff proximity rather than shift magnitude.
\textbf{(b)} All 39 admissible threshold pairs on a 0.05 grid alter $\ge 14$ actions (outlined baseline: 23; median: 21).}
\label{fig:study3_robust}
\end{figure}

\begin{figure}[!htbp]
\centering
\includegraphics[width=0.82\linewidth]{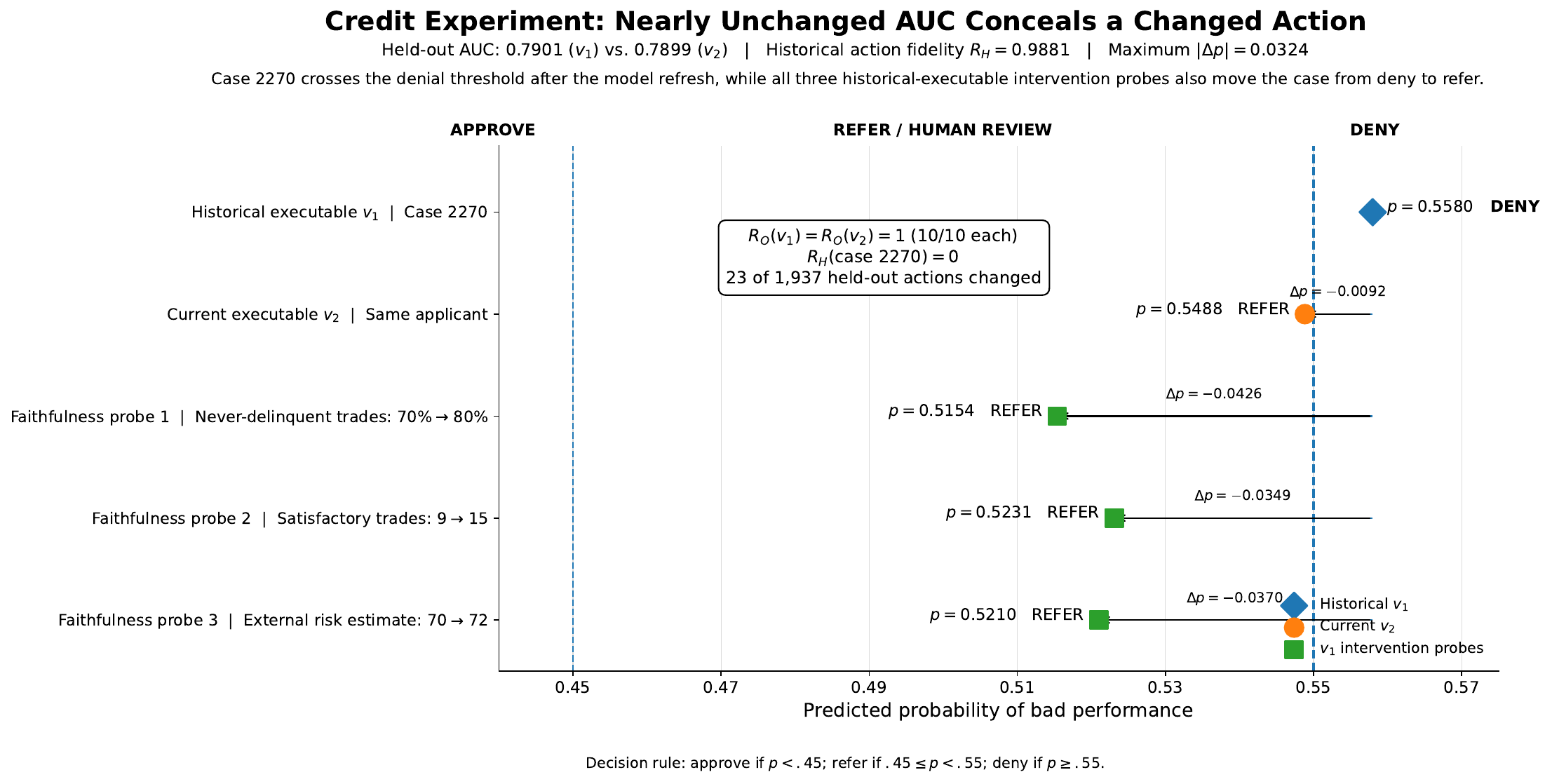}
\caption{\textit{Decision-level Verifiability Gap (case 2270).} Policy thresholds: approve $<0.45$, refer $[0.45, 0.55)$, deny $\ge 0.55$. Historical $v_1$ denies ($p=0.5580$) while updated $v_2$ refers ($p=0.5488$); a $0.0092$ shift alters the action despite perfect current stability ($R_O=1$ vs.\ $R_H=0$). Counterfactual probes on preserved $v_1$ confirm that stated reasons are causally testable rather than merely plausible.}
\label{fig:credit_verifiability_gap}
\end{figure}

\subsection{Study 4: The Same Mechanism on Real Credit Applications}
\label{subsec:heloc_orchestration}
Studies~1 and~2 rely on 32 constructed cases with known ground truth, leaving open whether the observed mechanisms are artifacts of synthetic inputs. Study~4 eliminates this concern by replicating the orchestration sweep on real credit applications.

\paragraph{Design.} We drew a stratified random sample of 32 applications from Study~3's held-out set across historical actions (12 denials, 10 referrals, 10 approvals) under a fixed seed. Crucially, none belong to the 23 boundary-shifting cases, precluding decision-boundary bias. Each application retains its five recorded credit attributes and uses the instructions and three-action vocabulary from Study~2.

The protocol mirrors Study~2 using locally served models to ensure strict decoding control. Configurations span 1 to 50 agents across \texttt{llama3.2:3b} and \texttt{llama3.1:8b} (five repetitions per case, varied seeds), tracking outcome reproducibility ($R_O$), exact trace reproducibility ($R_T$), and verdict differentiation ($D_V$). We additionally test pinned temperature-zero conditions at 1 and 10 agents. (Frontier commercial models are excluded here because their endpoints disallow necessary decoding controls; see Study~2).

\paragraph{The mechanism replicates, and the constructed cases understated
it.} Table~\ref{tab:heloc_sweep} reports the sweep. Wherever temperature
is not supplied, exact trace reproducibility is zero: every configuration, both model
scales, all 2,560 executions. The temperature-zero arm behaves differently and
is reported below. Outcome reproducibility falls and then recovers, as it does on the
constructed cases, but it falls further. On the smaller model it drops from
$0.8625$ at one call to $0.6188$ at ten agents, against $0.912$ and $0.794$ for
the constructed cases. The larger model follows the same path, turning later.

\paragraph{The recovery is again a collapse.} Where outcome reproducibility
recovers, verdict differentiation falls. The smaller model recovers to $0.7625$
at thirty agents while $D_V$ falls from $0.739$ to $0.439$, and at fifty agents
it denies 30 of the 32 applications. The larger model recovers to $0.8313$ at
fifty agents and refers 30 of 32. Both configurations report a respectable
outcome metric while answering almost every application the same way.

The two scales converge on the same degeneracy from different directions. At
fifty agents each reaches $D_V=0.213$ with 30 of 32 cases collapsed onto one
action. The action differs: the smaller model denies, the larger refers. Which
default a pipeline collapses onto appears to depend on the model. That it
collapses appears to depend on the architecture.

\paragraph{A perfect score on a system that has stopped deciding.} The temperature-zero arm
gives the sharpest case. With temperature pinned and a single call, the system
reproduces its own action in every repetition and its complete record in all 32
cases: $R_O=1.000$ and exact trace reproducibility $32/32$. A governance function
monitoring reproducibility would see a perfect score. The same configuration
approves all 32 applications, including the 12 the historical executable denies,
so $D_V=0.000$. Nothing about this failure is stochastic; the records are
byte-identical. The metric is at its maximum precisely where the system has
stopped deciding. Such a system is easy to verify (its rule is ``approve
everything''); what the result shows is that a reproducibility score cannot
certify fitness for delegated authority. $D_V$ is a diagnostic against
degenerate authorization, not a measure of $V_{dq}$, and the arm's
three-billion-parameter model is beside the point: the metric, not the model,
is what fails. Corollary~\ref{cor:gapbits} specifies the evidentiary
requirement for verdict-only retention; the sweep does not estimate a
numerical decision-level Verifiability Gap.

\begin{table}[!htpb]
\centering
\caption{Study 4. Orchestration sweep on 32 real held-out credit applications,
five repetitions per configuration. Exact trace reproducibility is zero in
every arm where temperature is not supplied.
\emph{Modal verdicts} counts how the 32 cases distribute across actions once
each case is summarized by its most frequent verdict.}
\label{tab:heloc_sweep}
\footnotesize
\renewcommand{\arraystretch}{1.02}
\begin{tabular}{@{}rrrrl@{}}
\toprule
\textbf{Agents} & $R_O$ & \textbf{Exact trace} & $D_V$ & \textbf{Modal verdicts} \\
 & & \textbf{reproducibility} & & \\
\midrule
\multicolumn{5}{@{}l}{\emph{\texttt{llama3.2:3b}, temperature not supplied}}\\
1  & 0.863 & 0/32 & 0.000 & approve 32 \\
2  & 0.663 & 0/32 & 0.739 & refer 22, deny 7, approve 3 \\
5  & 0.656 & 0/32 & 0.629 & deny 17, refer 15 \\
10 & 0.619 & 0/32 & 0.615 & deny 19, refer 13 \\
20 & 0.725 & 0/32 & 0.343 & deny 28, refer 4 \\
30 & 0.763 & 0/32 & 0.439 & deny 26, refer 6 \\
40 & 0.763 & 0/32 & 0.343 & deny 28, refer 4 \\
50 & 0.744 & 0/32 & 0.213 & deny 30, refer 2 \\
\midrule
\multicolumn{5}{@{}l}{\emph{\texttt{llama3.1:8b}, temperature not supplied}}\\
1  & 0.888 & 0/32 & 0.127 & approve 31, refer 1 \\
10 & 0.813 & 0/32 & 0.127 & refer 31, approve 1 \\
30 & 0.694 & 0/32 & 0.439 & refer 26, deny 6 \\
50 & 0.831 & 0/32 & 0.213 & refer 30, deny 2 \\
\midrule
\multicolumn{5}{@{}l}{\emph{\texttt{llama3.2:3b}, temperature pinned to zero}}\\
1  & \textbf{1.000} & \textbf{32/32} & \textbf{0.000} & approve 32 \\
10 & 0.981 & 21/32 & 0.804 & deny 15, refer 15, approve 2 \\
\bottomrule
\end{tabular}
\end{table}

\begin{figure}[!htbp]
\centering
\includegraphics[width=0.86\linewidth]{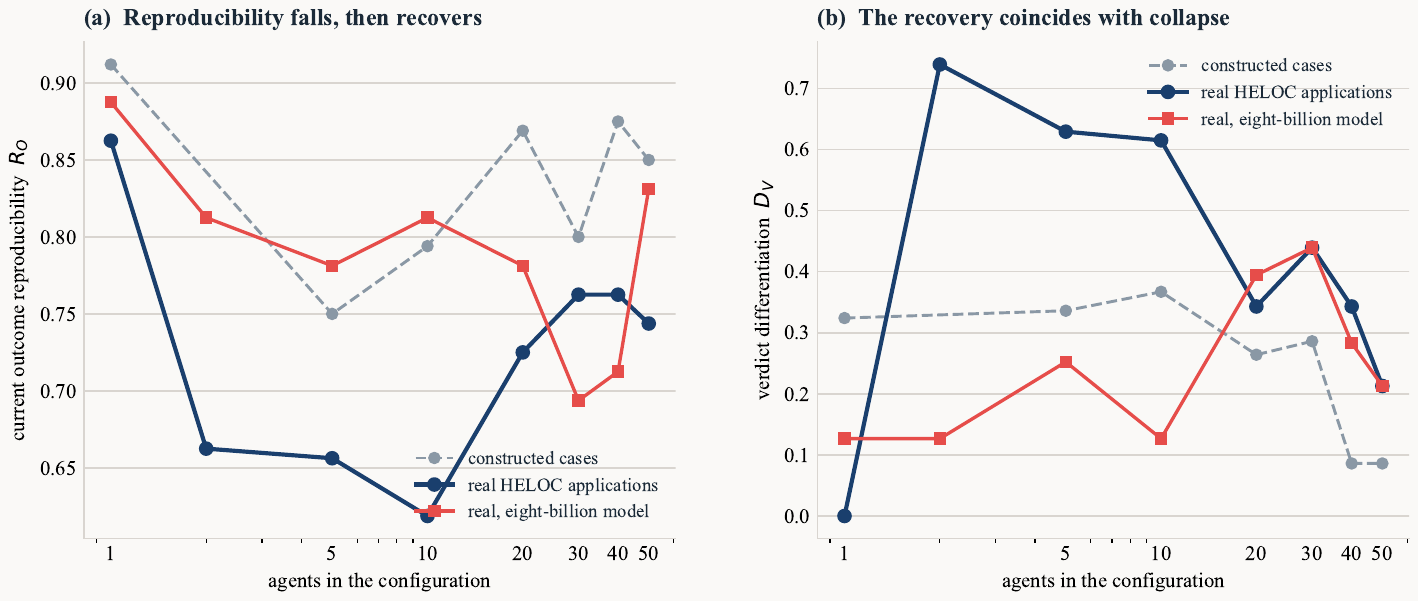}
\caption{Study 4 beside Study 2. \textbf{(a)} Outcome reproducibility falls and
then recovers on real applications as it does on constructed cases, and it falls
further. \textbf{(b)} Verdict differentiation starts wherever the model starts
and ends where the architecture takes it: at fifty agents both real-data arms
reach $D_V=0.213$ with 30 of 32 cases on a single action.}
\label{fig:study4}
\end{figure}
\section{Discussion}
\label{sec:discussion}

Our four empirical studies demonstrate evidentiary breakdown across three distinct loci, external provider infrastructure, multi-agent orchestration architectures, and temporal version boundaries, and replicate the orchestration mechanism on real credit applications. This cross-environment convergence indicates that the Verifiability Gap is not an isolated technical anomaly but a structural governance exposure. We synthesize these findings by detailing the theoretical contributions of evidence-contingent delegation, triangulating the empirical mechanisms of audit failure, outlining actionable governance mandates, and presenting a falsifiable research agenda.

\subsection{Theoretical Contributions}

\paragraph{Verifiability as an Authority--Evidence Relation.}
Prior information systems research examines algorithmic opacity primarily through model-centric lenses such as inscrutability, blackboxing, and post hoc explanation \citep{berente2021managing,kronblad2024blackboxing,asatiani2021envelopment}. We advance this tradition by establishing that in agentic workflows, explainability alone is insufficient: defensible governance requires material process reconstruction. Because an agentic system distributes decision authority across models, tools, prompts, and orchestration layers, verifiability is not an invariant system attribute. Rather, it is a dynamic relation ($G_{dq} = [\rho_{\sigma}(A_d) - V_{dq}]_{+}$) between the authority exercised and the material evidence retained to satisfy a specific audit standard $\sigma$.

\paragraph{Reproducibility as a Multidimensional Governance Profile.}
We reject the assumption that reproducibility can be captured as a scalar metric of output consistency. Conventional reporting of current outcome reproducibility ($R_O$) generates false assurances. By formalizing reproducibility as a multidimensional profile $\mathcal{R}(t;\sigma) = (R_O, R_H, R_P^H, R_P^C, R_T; D_V)$, our theory reveals that: (1) high current stability can mask complete historical replay failure ($R_O = 1$ while $R_H = 0$); (2) identical terminal actions frequently decouple from divergent underlying traces ($R_P^C = 0$); and (3) output consistency can artificially surge precisely when a system loses its capacity to distinguish cases ($D_V \to 0$). Intelligible explanations possess evidentiary validity only when causally anchored to the specific executable that rendered the original judgment.

\paragraph{Evidence-contingent delegation (ECD).}
Extending foundational theories of algorithmic delegation and control \citep{baird2021delegation,kellogg2020algorithms}, our framework establishes that institutional authority ceilings are bounded by retained verification capacity rather than raw technical capability (P1). Under ECD, delegated autonomy cannot be treated as a permanent operational right granted at onboarding. Instead, it operates as a revocable governance relationship contingent on continuous evidence preservation. Provider updates, orchestration reconfigurations, and execution-control deprecations are therefore not routine technical maintenance; they are material governance shocks that mandate formal reauthorization.

\subsection{Empirical Synthesis: Where Governance Fails}
\label{subsec:evidence_synthesis}

Across our four studies, traditional validation metrics overstated retained verification capacity along four operational dimensions. \textit{First, Provider Discontinuity (Study 1):} Even when an institution preserves internal code and inputs, upstream vendor releases shifted baseline-modal actions ($\widetilde{R}_H = 0.906$) and withdrew decoding controls (e.g., zero-temperature rejection), removing replay capacity from outside firm boundaries. \textit{Second, Architecture as Policy (Study 2):} Interposing agent layers altered terminal risk postures without policy changes; across sweeps from 1 to 50 agents, exact trace reproducibility ($R_T$) collapsed to zero while outcome agreement exhibited an artificial rebound driven by categorical compression ($D_V \to 0.086$). \textit{Third, Historical Replay Failure (Study 3):} In a fully deterministic credit model with a stable aggregate AUC ($\Delta\mathrm{AUC} = 0.0002$), 23 held-out actions flipped across version boundaries ($R_H = 0.9881$). Marginal applicants exhibited absolute current stability ($R_O = 1$) alongside total historical replay failure ($R_H = 0$), rendering updated-model rationales evidentially nonresponsive to past denials. \textit{Fourth, Reproducibility Without Differentiation (Study 4):} Evaluating 32 real FICO HELOC applications confirmed that orchestration-driven compression is not a synthetic artifact: unconstrained models collapsed 30 of 32 applications onto uniform defaults ($D_V = 0.213$), while pinned zero-temperature execution yielded a ``perfect'' audit score ($R_O = 1.000, R_T = 32/32$) solely because the system stopped differentiating and approved every case ($D_V = 0.000$).

\paragraph{Ruling Out Competing Explanations.}
These four studies eliminate common technical rationalizations. Study~3 is deterministic, inspectable, and single-step, proving that replay failure requires neither generative stochasticity nor multi-agent orchestration. Study~2 fixes the underlying model and provider, proving that multi-agent architecture alone acts as a latent policy layer. Study~1 fixes the prompts, policy, and test cases, proving that diligence within the firm cannot prevent provider-driven audit failure. Finally, Study~4 replicates these dynamics on real credit applications, confirming that categorical collapse arises under the serial architectures tested, on real as well as constructed cases; isolating serial compression from prompting, context length, and capacity requires the architectural comparison noted in the limitations.

\subsection{Implications for Agentic AI Governance}

Our findings yield three foundational mandates for financial institutions and regulators. \textit{Transition to event-driven reauthorization:} Periodic validation proportionate to model risk \citep{srguidance2011,srguidance2026} is insufficient for agentic architectures; reauthorization must be triggered dynamically whenever an upstream foundation model updates, an orchestration graph changes, or an external API modifies execution controls (P4). \textit{Mandate executable evidence bundles:} Passive compliance artifacts like model cards and static prompt logs \citep{gebru2018datasheets,mitchell2019model} fail to substantiate agentic actions (P5); institutions must preserve complete \emph{executable evidence bundles}, including containerized local model weights, exact tool payloads, temperature and seed states, intermediate handoff transcripts, and operative policy thresholds. Where a hosted provider will not release the executable or decoding controls (Study~1), the remaining lever is to delegate less or serve the model locally. \textit{Mitigate network weak-link and common-mode risks:} End-to-end auditability across multi-firm networks is strictly bounded by the weakest unrecorded handoff (P6). The concentration documented in HMDA filings (Appendix~\ref{app:hmda}) motivates evaluating whether shared system dependence is accompanied by inadequate historical replay controls. Under that additional condition, material upstream changes could create correlated verification losses (P7); these dynamics are not tested here.

\subsection{Limitations and Future Research}
\label{subsec:limitations}

Several boundary conditions contextualize our findings. First, our multi-agent sweeps utilize 32 constructed cases (Studies 1 and 2) and 32 held-out empirical credit cases (Study 4) to isolate causal mechanisms under controlled repetition regimes rather than estimate sector-wide default frequencies; wider industry studies are needed to measure prevalence across broader portfolio distributions. Constructed cases are required by Theorem~\ref{thm:identify}: degeneracy cannot be detected from within-case statistics, so the design needs families whose correct actions differ. The orchestration graph forces a single-label terminal decision, which mirrors deployed pipelines \citep{xiao2024tradingagents} but also eases categorical compression; a planner-style comparison is left to future work. Local sweeps use three- and eight-billion-parameter models because the frontier endpoint rejects the decoding controls the design requires. Second, a fixed hosted model ID does not preserve the full serving environment or guarantee historical byte-identical replay. Study~1 distinguishes cross-version action drift from restricted decoding control; it does not establish silent replacement of a fixed model snapshot. Third, $E_{dq}$ is not measured; the gap is instantiated through reproducibility only. Fourth, exact trace reproducibility ($R_T$) enforces a strict textual standard; future work should refine semantic distance metrics ($\equiv_\sigma$) to delineate material legal discrepancy from cosmetic variation.

Finally, while Studies 1--4 test the system-internal mechanics underlying P4, P5, and P6, our organizational and network propositions define a concrete agenda for empirical field falsification: (1) \textit{testing authority ceilings (P1)} by examining whether model-risk committee approvals track technical capability once retained verification capacity is held constant; (2) \textit{testing control substitution (P2--P3)} by measuring whether human sign-off rates remain constant despite increasing unfaithfulness in agentic rationales; and (3) \textit{testing common-mode exposure (P7)} by investigating whether supervisory audit failures correlate across institutions following major provider updates. Beyond this agenda, longitudinal studies should measure API auditability half-lives and examine human underwriter reliance on unfaithful agentic narratives.

\section{Conclusion}
\label{sec:conclusion}

As FinTech transitions from predictive algorithms to autonomous multi-agent workflows, financial institutions operate under the assumption that delegated decision authority remains explainable and reproducible. This study challenges that premise by conceptualizing and formalizing the \emph{Verifiability Gap}: the structural deficit between the verification demanded to defend an exercised authority and the material evidence retained to reconstruct that action for an auditor. By shifting the analytical lens of Information Systems governance from static \emph{model transparency} to dynamic \emph{system verifiability}, we demonstrate that fairness and verifiability are orthogonal. A system can satisfy portfolio-level fairness audits while leaving individual, consequential credit denials or AML escalations unsubstantiable.

Our empirical findings establish that technical capability does not equate to auditability. A deterministic credit model reproduced its own current actions yet failed to recover historical actions at policy cutoffs; similarly, frontier LLMs achieved high outcome consistency while never repeating a single complete execution trace. The sharpest finding for practice is reproducibility without differentiation: a system achieved a perfect audit score ($R_O = 1.000, R_T = 32/32$) precisely when it stopped deciding, approving all 32 credit applications under zero temperature, including the 12 the historical model had denied. Output-agreement metrics can therefore validate and certify an agentic pipeline at the exact moment it loses the capacity to distinguish cases. Across finance, healthcare, and public administration, reproducibility must be evaluated as a multidimensional profile that measures whether an agent decided, rather than whether it returned the same default twice.

\paragraph{The ECD Principle.}
High model capability and surface outcome stability offer no guarantee of agentic auditability. We therefore advance the theoretical principle of \emph{evidence-contingent delegation}: delegated AI authority remains defensible only so long as retained verification capacity stays commensurate with the authority exercised ($V_{dq} \geq \rho_{\sigma}(A_d)$). Delegated autonomy is never settled once at initial onboarding; it operates as an ongoing, revocable governance relationship that must be re-earned whenever the underlying model, orchestration topology, or execution-control surface changes. When historical evidence cannot be retained, the defensible institutional response is not performative procedural review, but the reduction of delegated authority. By anchoring autonomous delegation in material verifiability rather than static capability benchmarks, organizations can harness the scale of agentic AI while safeguarding legal accountability and regulatory defensibility.

\section*{Declarations}


\noindent\textbf{Declaration of competing interest.} The author declares no
competing interests.

\noindent\textbf{Data availability.} All code, decision cases, and execution logs reproducing every reported result and hypothesis test are available in the anonymized repository at \url{https://anonymous.4open.science/status/AgenticAI-AEAF}. Complete run outputs are preserved, enabling full offline verification without external API keys. The HMDA analysis in Appendix~\ref{app:hmda} starts from about 12~GB of public FFIEC loan-level files; the package includes the extraction script and the derived 24.4-million-row extract from which every reported figure is recomputed.

\clearpage
\begingroup\footnotesize
\setlength{\bibsep}{0pt}\setlength{\itemsep}{0pt}

\endgroup

\clearpage
\begin{center}
{\sffamily\LARGE\bfseries Appendix}\par\vskip 4pt
{\sffamily\small Supplementary configuration, proofs, calibration, and provider-concentration data}
\end{center}
\vskip 1.2em
\appendix
\renewcommand{\thesection}{\Alph{section}}
\setcounter{section}{0}
\setcounter{table}{0}\renewcommand{\thetable}{A\arabic{table}}
\setcounter{figure}{0}\renewcommand{\thefigure}{A\arabic{figure}}

\noindent
This appendix supports the main text. Appendix~\ref{app:agents} details the multi-agent experimental architecture for Studies~2 and~4. Appendix~\ref{app:proofs} provides formal proofs, simplified derivations, and empirical calibrations. Appendix~\ref{app:hmda} documents provider concentration in U.S. mortgage underwriting.

\section{The Multi-Agent Experiment}
\label{app:agents}


\textit{Decoding Conditions.} Auditing must evaluate factors outside institutional control. A fixed random seed enforces computational determinism under a fixed executable, but deployment seeds are rarely exposed or preserved. Pinning temperature to zero provides maximal lexical constraint, yet commercial providers frequently deprecate it (Study~1). When omitted, models revert to provider defaults. In Conditions~B and~C, all chain calls share a seed, understating independent divergence; Condition~D eliminates all controls, mirroring live audit conditions (Table~\ref{tab:decoding}).

\begin{table}[!ht]
\centering
\caption{Decoding conditions across experimental arms.}
\label{tab:decoding}
\footnotesize
\begin{tabular}{@{}p{1.1cm}p{2.6cm}p{2.3cm}p{7.4cm}@{}}
\toprule
& \textbf{Temperature} & \textbf{Seed} & \textbf{What it measures} \\
\midrule
A & pinned to 0 & fixed & Implementation determinism. \\
B & pinned to 0 & varied & Seed sensitivity under pinned temperature. \\
C & not sent & varied & Sensitivity to unconstrained temperature alone. \\
D & not sent & not sent & Live audit setting: neither control is accessible. \\
\bottomrule
\end{tabular}
\end{table}

\textbf{Multi-Agent Pipeline Architecture.} The pipeline enforces strict serial governance: Layer~1 alone inspects the case, intermediate specialists are strictly report-only, all intermediate handoffs are recorded, and the terminal Decider alone renders the final action. Scaled topologies expand specialist call volume under these exact constraints. Preserving every handoff allows exact localization of where divergence originates and whether downstream stages amplify or compress it. Table~\ref{tab:agentroles} outlines canonical roles; replication manifests specify exact call graphs.

\begin{table}[!ht]
\centering
\caption{Canonical role functions in the controlled multi-agent configuration.}
\label{tab:agentroles}
\footnotesize
\begin{tabular}{@{}p{1.7cm}p{1.0cm}p{2.7cm}p{7.7cm}@{}}
\toprule
\textbf{Agent} & \textbf{Layer} & \textbf{Reads} & \textbf{Instruction} \\
\midrule
retrieval  & 1 & the case & Extract decision-relevant facts and numbers. Do not judge. \\
market     & 1 & the case & Summarize market and environment signals. Do not judge. \\
policy     & 1 & the case & State applicable policy thresholds and rules. Do not judge. \\
analyst    & 2 & retrieval, policy & Quantitatively evaluate against thresholds. Do not act. \\
risk       & 2 & retrieval, market & Detail material risks and overall severity. Do not act. \\
compliance & 2 & retrieval, policy & Flag regulatory or fair-treatment concerns. Do not act. \\
aml        & 2 & retrieval & Flag financial-crime or counterparty risks. Do not act. \\
quant      & 3 & analyst, risk & Score options on expected benefit and downside. Do not act. \\
customer   & 3 & compliance, aml & Detail customer or counterparty impact. Do not act. \\
critic     & 4 & quant, customer & Challenge emerging consensus with counterarguments. Do not act. \\
\midrule
decider    & 5 & quant, customer, critic & Choose one allowed action and provide a one-sentence rationale. \\
\bottomrule
\end{tabular}
\end{table}
\textbf{Execution, Scope, and Replication Checklist}

\textit{Execution and Scope.}
All runs executed locally on \texttt{llama3.2:3b} using 32 fixed financial cases (eight each across credit, trade, AML, and rebalancing), eliminating provider-side confounders. The pipeline fixes communication graphs, freezes tool payloads, and executes a single acyclic pass without live retrieval, persistent memory, or autonomous re-prompting. Findings therefore represent a conservative lower bound on architecture-induced divergence.

\textit{Reproduction Checklist.}
Before replication, verify: (1) case hash matches \texttt{696afd10}; (2) model digests and prompts match repository manifests; (3) parsers accept only authorized family-specific action vocabularies; (4) temperature and seed settings conform to the specified experimental arm; and (5) tool payloads match frozen hashes. Hosted provider runs cannot be re-executed identically due to provider endpoint updates; their historical request logs, timestamps, and model responses are archived for offline audit.

\section{Proofs and Formal Calibration}
\label{app:proofs}

This appendix proves the results from Sections~\ref{subsec:reversibility} and~\ref{subsec:operationalization} and calibrates them against empirical data.

\subsection{Notation}
\label{app:notation}

Calls are indexed $i=1,\dots,m$ in topological order; $S_k$ denotes layer-$k$ outputs, $T$ the execution trace, and $Y=\Gamma(T)$ the action recovered by a fixed parser $\Gamma$ from the terminal Decider. Pairwise trace and outcome agreement across repetitions are $A_T=\Pr[T^{(1)}=T^{(2)}]$ and $A_O=\Pr[Y^{(1)}=Y^{(2)}]$. While $A_O$ charges every disagreeing pair, the modal share estimator $\widehat R_O = |\mathcal C|^{-1}\sum_c\exp(-H_\infty(\hat \nu_c))$ charges only non-modal mass, giving $A_O \le \widehat R_O$. Call conditional self-agreement is $p_i=\Pr[Z_i^{(1)}=Z_i^{(2)}\mid Z_j^{(1)}=Z_j^{(2)}\ \forall j<i]$, and terminal compression probability is $\kappa=\Pr[Y^{(1)}=Y^{(2)}\mid T^{(1)}\neq T^{(2)}]$.

\begin{lemma}
\label{lem:book}
$(i)$~$A_T=\prod_{i=1}^{m}p_i$, so $A_T$ is non-increasing in $m$ and $A_T\le\bar p^{\,m}$ whenever $p_i\le\bar p$.
$(ii)$~$A_O=A_T+\kappa(1-A_T)\ge A_T$; in particular $A_O=\kappa$ when $A_T=0$.
$(iii)$~$\widehat R_O=|\mathcal C|^{-1}\sum_c\exp(-H_\infty(\hat \nu_c))$, where $\nu_c$ is the verdict distribution across repetitions of case $c$ and $H_\infty(\nu)=-\log\max_\ell \nu_\ell$.
\end{lemma}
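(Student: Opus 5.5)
Part $(i)$ follows from the chain rule of probability applied to the event $\{T^{(1)}=T^{(2)}\}$. Since $T=(Z_1,\dots,Z_m)$ is the ordered collection of call outputs in topological order, I will write this event as the intersection $\bigcap_{i=1}^m\{Z_i^{(1)}=Z_i^{(2)}\}$ and factor its probability as a product of successive conditionals:
\[
\Pr\Bigl[\textstyle\bigcap_{i\le m}\{Z_i^{(1)}=Z_i^{(2)}\}\Bigr]=\prod_{i=1}^m\Pr\Bigl[Z_i^{(1)}=Z_i^{(2)}\,\Big|\,Z_j^{(1)}=Z_j^{(2)}\ \forall j<i\Bigr].
\]
Each factor is exactly $p_i$ as defined in Appendix~\ref{app:notation}, so $A_T=\prod_{i=1}^m p_i$. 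For a fixed execution process, appending a call multiplies the product by $p_{m+1}\in[0,1]$, so $A_T$ is non-increasing in $m$. If $p_i\le\bar p$ for every $i$, the bound $A_T\le\bar p^{\,m}$ follows termwise.

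One point needs care. The conditional $p_i$ is defined only when the conditioning event has positive probability. When some earlier prefix agreement has probability zero, I will adopt the convention that the product is zero, which is consistent because $A_T$ is then also zero.

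Part $(ii)$ uses the law of total probability, partitioning on whether the two traces agree. Since $Y=\Gamma(T)$ for a fixed deterministic parser $\Gamma$, equal traces force equal verdicts, so $\Pr[Y^{(1)}=Y^{(2)}\mid T^{(1)}=T^{(2)}]=1$. The other branch is $\kappa$ by definition. Combining the two branches gives
\[
A_O=1\cdot A_T+\kappa(1-A_T).
\]
The inequality $A_O\ge A_T$ holds because $\kappa\ge0$, and setting $A_T=0$ gives $A_O=\kappa$. If $A_T=1$, then $\kappa$ is undefined, but its coefficient $1-A_T$ is zero, so the identity still holds under any convention. This matches the ``n/a'' entries in Table~\ref{tab:calib}.

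Part $(iii)$ restates Equation~\eqref{eq:ro_current} with $K^{-1}\sum_k\mathbf 1[y_{c,k}=\ell]$ read as the empirical verdict distribution $\hat\nu_c(\ell)$. Then $\max_\ell\hat\nu_c(\ell)=\exp(\log\max_\ell\hat\nu_c(\ell))=\exp(-H_\infty(\hat\nu_c))$, and averaging over cases gives the formula. I will also verify the side claim $A_O\le\widehat R_O$ case by case, using $\sum_\ell\nu_\ell^2\le\max_\ell\nu_\ell\sum_\ell\nu_\ell=\max_\ell\nu_\ell$. This holds in population, or for the empirical distribution with replacement-pair sampling.

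The main obstacle is making part $(i)$ rigorous rather than any of the algebra. The factorization requires $T$ to be exactly the tuple of call outputs and not to contain extra fields such as timestamps. The comparison across $m$ also has to be interpreted as extending a single fixed process, with the earlier $p_i$ unchanged, since the paper notes that redesigned architectures do not automatically satisfy the bound. I will state both points explicitly as hypotheses of the proof.
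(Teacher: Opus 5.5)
Your proposal is correct and follows essentially the same route as the paper: the chain rule over the prefix-agreement events $\{Z_i^{(1)}=Z_i^{(2)}\}$ for $(i)$, conditioning on $\{T^{(1)}=T^{(2)}\}$ and its complement with $Y=\Gamma(T)$ for $(ii)$, and the identity $\max_\ell\hat\nu_c(\ell)=\exp(-H_\infty(\hat\nu_c))$ averaged over cases for $(iii)$. Your added hypotheses are that $T$ is exactly the tuple of call outputs and that monotonicity in $m$ refers to extending a fixed process, together with conventions for zero-probability conditioning and for $\kappa$ being undefined when $A_T=1$; the paper's proof leaves these implicit, but they agree with its own main-text caveat about comparing redesigned architectures.
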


\begin{proof}
$(i)$~Let $A_i=\{Z_i^{(1)}=Z_i^{(2)}\}$. Then exact trace reproducibility is $\bigcap_{i\le m} A_i$, and the chain rule gives $\Pr[\bigcap_i A_i]=\prod_i\Pr[A_i\mid\bigcap_{j<i}A_j]=\prod_i p_i$ by the definition of $p_i$.
$(ii)$~$Y$ is a deterministic function of $T$, so $T^{(1)}=T^{(2)} \implies Y^{(1)}=Y^{(2)}$. Conditioning on $\{T^{(1)}=T^{(2)}\}$ and its complement yields $A_O = 1\cdot A_T + \kappa(1-A_T)$.
$(iii)$~Follows directly from $\max_\ell \nu_\ell = \exp(-H_\infty(\nu))$ averaged over $|\mathcal C|$.
\end{proof}

\subsection{Proofs of the Reversibility Results}
\label{app:rev_statements}

\begin{remark}
Reversibility denotes whether the historical decision process can be reconstructed from retained records, not thermodynamic detailed balance.
\end{remark}

\begin{corollary}
\label{cor:norepair}
No downstream agent, aggregation rule, critic, or terminal decider can raise $I(C;Y)$ above $I(C;S_1)$. Case information not captured by the first layer is unrecoverable by any later stage, however capable.
\end{corollary}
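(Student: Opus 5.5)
The plan is to obtain Corollary~\ref{cor:norepair} as a direct consequence of the data-processing inequality applied to the Markov chain \eqref{eq:markov}, with Theorem~\ref{thm:contraction} as the quantitative refinement. I will write any downstream agent, aggregation rule, critic, or terminal decider as a (possibly randomized) map that acts on the outputs of earlier layers. Each such object therefore only adds links to the chain after $S_1$. The key step is to check that the chain $C\to S_1\to S_2\to\cdots\to S_N\to Y$ remains Markov when every later stage is any measurable, possibly stochastic, function of its predecessors. In particular, $C\to S_1\to Y$ must be Markov, meaning $Y$ and $C$ are conditionally independent given $S_1$.

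That conditional independence is exactly what the architecture in Appendix~\ref{app:agents} enforces. Only the Layer~1 roles (Retrieval, Market, Policy) read the case, and every later role reads only transmitted reports. Any fresh randomness is independent of $C$: seeds, sampling noise, and frozen tool payloads all qualify. I would take $S_1$ to be the joint output of all Layer~1 roles, so that anything a later stage sees is a function of $(S_1,\text{independent noise})$. With that in hand, the data-processing inequality gives $I(C;Y)\le I(C;S_1)$ regardless of how capable the downstream maps are. Equivalently, setting every $\eta_k\le 1$ in Theorem~\ref{thm:contraction} yields the same bound, since the product of the coefficients is at most one.

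The second sentence of the corollary, that case information not captured by the first layer is unrecoverable, then follows from the chain rule. We have $I(C;Y)\le I(C;S_1)=H(C)-H(C\mid S_1)$, so the residual uncertainty $H(C\mid S_1)$ cannot be reduced by observing $Y$. More precisely, $H(C\mid Y)\ge H(C\mid S_1)$.

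The main obstacle is justifying the Markov property rather than carrying out the inequality itself. Deployed pipelines may contain skip connections to the case, live retrieval keyed on case identifiers, or shared memory, and any of these would break the chain. I would therefore state the corollary explicitly under the assumption of no side channel from $C$ beyond $S_1$. I would also note that if such a channel exists, $S_1$ must be redefined to include it, and the bound then holds for the enlarged first layer.
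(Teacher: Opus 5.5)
Your argument is essentially the paper's. There the corollary is read off as the $\eta_k\le 1$ case of Theorem~\ref{thm:contraction}, which is plain data processing on the chain \eqref{eq:markov}, and the paper justifies the Markov property just as you do, by noting that no layer beyond the first reads $C$. Your explicit handling of that assumption is more careful than the paper's one line, with one refinement: downstream noise must be independent of $(C,S_1)$ jointly, not merely of $C$ (if $C,V$ are independent uniform bits, $S_1=C\oplus V$ and $Y=S_1\oplus V=C$ give $I(C;Y)=1>0=I(C;S_1)$), so a seed shared with Layer~1, as in Conditions~B and~C, belongs in $S_1$ exactly as you propose for side channels.
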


\begin{theorem}[Material reversibility and case-information sufficiency]
\label{thm:revdpi}
Under the Markov relation $C\to S_{k-1}\to S_k$, material reversibility without retained records requires $H([S_{k-1}]_\sigma\mid S_k)=0$. Separately, equality in the data-processing inequality, $I(C;S_k)=I(C;S_{k-1})$, holds if and only if $I(C;S_{k-1}\mid S_k)=0$. These are distinct criteria: neither implies the other without additional assumptions linking $\sigma$-material distinctions to information about $C$.
\end{theorem}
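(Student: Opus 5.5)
The theorem makes three claims, and I would prove them in turn: the reversibility criterion, the equality criterion for the data-processing inequality, and the independence of the two.

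\textbf{Reversibility criterion.} Without retained records, Definition~\ref{def:rev} with $B_k$ trivial says that stage $k$ is $\sigma$-reversible when the map $S_k\mapsto[S_{k-1}]_\sigma$ is well defined almost surely. So $[S_{k-1}]_\sigma$ must equal $f(S_k)$ almost surely for some measurable $f$. For discrete (or discretized) variables, this holds if and only if $H([S_{k-1}]_\sigma\mid S_k)=0$. The argument is the standard one: conditional entropy vanishes exactly when, for almost every value $s$ of $S_k$, the conditional law of $[S_{k-1}]_\sigma$ given $S_k=s$ is a point mass. Only the ``requires'' direction is stated, so necessity suffices, though both directions are available.

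\textbf{Equality in the DPI.} I would apply the chain rule in two ways:
\[
I(C;S_{k-1},S_k)=I(C;S_{k-1})+I(C;S_k\mid S_{k-1})=I(C;S_k)+I(C;S_{k-1}\mid S_k).
\]
The Markov relation $C\to S_{k-1}\to S_k$ gives $I(C;S_k\mid S_{k-1})=0$. Hence
\[
I(C;S_{k-1})-I(C;S_k)=I(C;S_{k-1}\mid S_k)\ge 0,
\]
and equality holds if and only if that conditional mutual information is zero. This step is routine.

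\textbf{Non-implication.} This is the step I expect to need care. I would give two finite counterexamples, one for each direction.

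For reversibility without DPI equality, take $\sigma$ to be a trivial standard, so that all states are materially equivalent and $[S_{k-1}]_\sigma$ is constant. Then $H([S_{k-1}]_\sigma\mid S_k)=0$ for any $S_k$. Now let $C$ be a fair bit, $S_{k-1}=C$, and $S_k$ constant. This gives $I(C;S_{k-1}\mid S_k)=1$ bit, so DPI equality fails.

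For DPI equality without reversibility, let $C$ be constant and let $S_{k-1}$ be a fair bit independent of $C$ (pure process noise, such as sampling randomness). Let $S_k$ be constant, and take $\sigma$ to be the identity standard. Then $I(C;S_{k-1}\mid S_k)=0$ because $C$ is degenerate, but $H([S_{k-1}]_\sigma\mid S_k)=1$ bit.

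Both constructions respect the Markov chain. Together they show that, absent an assumption tying the $\sigma$-material partition to the sufficient statistics of $S_{k-1}$ for $C$, neither criterion implies the other. The main obstacle is making sure the examples are not dismissed as degenerate. If that is a concern, I would replace the constant variables with nondegenerate ones, for instance $C$ uniform on two values and $S_{k-1}=(C,N)$ with independent noise $N$, $S_k=C$, and $\sigma$ resolving $N$. This keeps the same conclusions with all variables nontrivial.
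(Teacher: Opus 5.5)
Your proof is correct and follows the paper's. It uses the same identification of trivial-record $\sigma$-reversibility with $H([S_{k-1}]_\sigma\mid S_k)=0$ and the same Markov chain-rule identity $I(C;S_{k-1})-I(C;S_k)=I(C;S_{k-1}\mid S_k)$; your two finite counterexamples are explicit instances of the paper's one-line remark that a map can be injective modulo $\equiv_\sigma$ while $C$ depends on within-class detail, and conversely.

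One small correction concerns your optional nondegenerate variant: it only replaces the second counterexample (DPI equality without reversibility). For the first direction, take, e.g., $S_{k-1}=(C,M)$ with $C,M$ independent fair bits, $S_k=M$, and $\sigma$ identifying states that share $M$. This stage is reversible with a trivial record, yet $I(C;S_{k-1}\mid S_k)=H(C\mid M)=1$ bit.
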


\begin{theorem}[Specialization and reversibility are incompatible]
\label{thm:incompat}
If stage $k$ performs summarization in the ordinary sense, deterministic and not injective modulo $\equiv_\sigma$, then the stage is not $\sigma$-reversible with a trivial record, and there exist materially distinct upstream states the verifier cannot separate from the downstream record. When the collapsed distinction carries information about $C$, the loss is strict for that case distribution, $I(C;S_k)<I(C;S_{k-1})$, testable as $I(C;S_{k-1}\mid S_k)>0$. This concerns the case distribution, not the channel coefficient: a projection discarding a variable independent of $C$ compresses without lowering $\eta_k$, so geometric decay in Theorem~\ref{thm:contraction} rests on the assumption $\eta_k\le\bar\eta<1$, not on summarization alone. Conversely, a stage that is $\sigma$-reversible with a trivial record performs no material compression; it merely relabels, and it need not preserve upstream detail that $\sigma$ declares immaterial.
\end{theorem}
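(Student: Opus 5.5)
The plan is to prove the three claims of Theorem~\ref{thm:incompat} in order, working with the deterministic map $f_k$ that implements stage $k$, so that $S_k=f_k(S_{k-1})$.

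\textbf{Step 1: non-reversibility with a trivial record.} Non-injectivity of $f_k$ modulo $\equiv_\sigma$ gives upstream states $s\not\equiv_\sigma s'$ with $f_k(s)=f_k(s')=s^\ast$. Take $B_k$ to be a trivial (constant) record. By Definition~\ref{def:rev}, reversibility requires a map $g$ with $g(S_k)=[S_{k-1}]_\sigma$ almost surely. Then $g(s^\ast)$ would have to equal both $[s]_\sigma$ and $[s']_\sigma$, which is impossible since these classes differ. This is the verifier's inseparability claim. One measure-theoretic point needs care: the collision must carry positive probability, otherwise "almost surely" rescues $g$. I would build "in the ordinary sense" into the hypothesis as non-injectivity on a set of positive mass, i.e.\ $H([S_{k-1}]_\sigma\mid S_k)>0$. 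This matches the criterion stated in Theorem~\ref{thm:revdpi}.

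\textbf{Step 2: strict loss of case information.} Because $S_k$ is a function of $S_{k-1}$, the Markov structure gives $I(C;S_{k-1})=I(C;S_{k-1},S_k)$. The chain rule then yields
\[
I(C;S_{k-1})=I(C;S_k)+I(C;S_{k-1}\mid S_k).
\]
So $I(C;S_k)<I(C;S_{k-1})$ holds exactly when $I(C;S_{k-1}\mid S_k)>0$. That is the formal reading of "the collapsed distinction carries information about $C$", and it is also the stated test. Theorem~\ref{thm:revdpi} already records the equality case, so this step is bookkeeping.

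\textbf{Step 3: the remark on $\eta_k$.} A counterexample suffices. Let $S_{k-1}=(U,V)$ with $V$ independent of $(C,U)$, and let $f_k$ project onto $U$. Then $I(C;S_k)=I(C;S_{k-1})$, even though the stage summarizes. Whether the channel coefficient $\eta_k$ falls below $1$ depends on the channel $U\mapsto S_k$ over all input laws, not on this projection. Hence summarization alone does not force $\eta_k\le\bar\eta<1$, and the geometric decay in Theorem~\ref{thm:contraction} must be taken as an assumption.

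\textbf{Step 4: the converse.} If stage $k$ is $\sigma$-reversible with a trivial record, then $g(f_k(s))=[s]_\sigma$ almost surely. So $f_k$ is injective on $\equiv_\sigma$-classes a.s., meaning it induces a bijection between $\sigma$-classes and their images: it merely relabels. Within a class, $f_k$ may still merge states, which is why immaterial detail need not be preserved.

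I expect the main obstacle to be Step 1's "almost surely" subtlety together with making "summarization in the ordinary sense" precise enough to guarantee a positive-probability collision. The information-theoretic parts follow directly from the chain rule.
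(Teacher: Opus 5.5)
Your proposal is correct and follows the paper's own argument. A colliding pair $s\not\equiv_\sigma s'$ rules out trivial-record reversibility. The identity $I(C;S_{k-1})-I(C;S_k)=I(C;S_{k-1}\mid S_k)$, used in the paper's proof of Theorem~\ref{thm:revdpi}, gives the strict loss. The converse is the first criterion $H([S_{k-1}]_\sigma\mid S_k)=0$ of Theorem~\ref{thm:revdpi}.

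Your positive-mass caveat in Step~1 is a genuine refinement the paper omits: a null-probability collision does not break almost-sure reversibility. Your Step~3 example covers a claim the paper leaves unproved, and it already certifies $\eta_k=1$. With $I(C;U)>0$ it attains the ratio $I(C;S_k)/I(C;S_{k-1})=1$ in the variational definition of the strong data-processing coefficient, so the vaguer appeal to ``all input laws'' is not needed.
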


\begin{corollary}[Retention must grow with depth]
\label{cor:retentiondepth}
Each materially compressing stage contributes a strictly positive term to the retention bound $H(B)\ge H([T]_\sigma\mid Y)$. The retention required to hold $G_{dq}=0$ is therefore non-decreasing in the number of causally material stages and strictly increasing in the number that compress.
\end{corollary}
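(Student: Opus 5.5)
The statement to prove is Corollary~\ref{cor:retentiondepth}. The plan is to decompose the required retention $H([T]_\sigma\mid Y)$ stage by stage along the Markov chain \eqref{eq:markov}, and then to show that each materially compressing stage contributes a strictly positive term.

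\textbf{Step 1: chain-rule expansion.} Write the material trace as the sequence of material stage outputs, $[T]_\sigma=([S_1]_\sigma,\dots,[S_N]_\sigma)$. Expand by the chain rule in reverse topological order:
\begin{equation*}
H([T]_\sigma\mid Y)=\sum_{k=1}^{N} H\bigl([S_k]_\sigma \,\big|\, [S_{k+1}]_\sigma,\dots,[S_N]_\sigma, Y\bigr).
\end{equation*}
Every summand is nonnegative. This already gives monotonicity: appending a causally material stage refines the trace, so $H([T]_\sigma\mid Y)$ cannot decrease.

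\textbf{Step 2: reduce each summand to a one-step term.} I would argue that the Markov property lets the conditioning on downstream stages collapse. The $k$-th summand then reduces to a term of the form $H([S_{k}]_\sigma\mid S_{k+1})$, or to a term bounded below by it. This is the local non-reversibility quantity of Theorem~\ref{thm:revdpi}.

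\textbf{Step 3: strict positivity for compressing stages.} Suppose stage $k+1$ is materially compressing, that is, not injective modulo $\equiv_\sigma$ on a set of positive probability. Theorem~\ref{thm:incompat} then supplies materially distinct upstream states that map to the same downstream output. Hence $H([S_k]_\sigma\mid S_{k+1})>0$, and the corresponding summand is strictly positive. A relabeling stage that is reversible with a trivial record contributes zero. Therefore the retention bound of Theorem~\ref{thm:retention} is non-decreasing in the number of material stages and strictly increasing in the number that compress.

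\textbf{Main obstacle: Step 2.} The Markov property gives the wrong direction for this collapse. Conditioning on the further-downstream stages $S_{k+2},\dots,Y$ can only shrink entropy, but it can never make the conditional entropy smaller than conditioning on $S_{k+1}$ alone, because those stages are functions or noisy descendants of $S_{k+1}$. So the summand equals $H([S_k]_\sigma\mid S_{k+1},\dots)$, and I need this to stay positive.

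My first attempt is to use that, given $S_{k+1}$, the future $(S_{k+2},\dots,Y)$ is independent of $S_k$. The summand is then at least $H([S_k]_\sigma\mid S_{k+1},[S_{k+2}]_\sigma,\dots,Y)=H([S_k]_\sigma\mid S_{k+1})$, which is positive.

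The remaining subtlety is that the expansion conditions on material classes $[S_{k+1}]_\sigma$ rather than on the full $S_{k+1}$. Conditioning on the coarser class only increases entropy, so the lower bound survives. I would state the corollary's hypothesis as compression of material classes so that the argument goes through cleanly.
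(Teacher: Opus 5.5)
Your per-stage argument is correct, and it is the paper's route made explicit. The paper proves this corollary only by citing Theorem~\ref{thm:incompat}: a compressing stage is not $\sigma$-reversible with a trivial record. Your reverse chain-rule expansion, together with the Markov bound $H([S_k]_\sigma\mid[S_{k+1}]_\sigma,\dots,[S_N]_\sigma,Y)\ge H([S_k]_\sigma\mid S_{k+1})$ (with $S_{N+1}:=Y$), says precisely what ``contributes a strictly positive term'' means. Positivity of that local term is exactly the failure of the criterion in Theorem~\ref{thm:revdpi}.

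Two small repairs are needed. First, state that you take $[T]_\sigma=([S_1]_\sigma,\dots,[S_N]_\sigma)$, since Definition~\ref{def:rev} never defines $[T]_\sigma$ stagewise. Second, drop the closing suggestion to put the hypothesis on classes. Because $H([S_k]_\sigma\mid[S_{k+1}]_\sigma)\ge H([S_k]_\sigma\mid S_{k+1})$, a class-level collision is the weaker condition and does not feed your bound. Step~3 needs the paper's own hypothesis: materially distinct upstream states with identical full outputs, on a set of positive probability.

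The genuine gap is the closing ``therefore'', which the paper's one-line proof leaves equally unjustified. Your decomposition bounds $H([T]_\sigma\mid Y)$ inside one fixed chain; it does not compare a chain with a deeper one.

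\emph{Non-decrease.} Step~1's claim that appending a stage refines the trace, so the entropy cannot decrease, holds only if $Y$ and the joint law of the existing stages are held fixed. When a pipeline is actually deepened, both change, and non-decrease can fail. For example, an inserted relabeling stage can hand the fixed parser a state it separates, where before it returned a constant, which drops the required retention to zero.

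\emph{Strictness.} This fails even in the favorable fixed-law setting. Take the chain $S_1\to Y$ with $S_1$ uniform on eight materially distinct values (all $\sigma$-classes singletons) and $Y$ a four-to-one function of $S_1$, so $H([T]_\sigma\mid Y)=2$ bits. Now factor that map through an inserted $S_2$ as two two-to-one stages with the same composite. The law of $(S_1,Y)$ is unchanged and $S_2$ is a function of $S_1$, so $H(S_1,S_2\mid Y)$ is still $2$ bits. Yet the number of compressing stages has gone from one to two, and each contributes exactly $1$ bit to your sum.

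\emph{What you have proved.} You have established the additive bound $H([T]_\sigma\mid Y)\ge\sum_{k=1}^{N}H([S_k]_\sigma\mid S_{k+1})$. A growth-with-depth conclusion needs an extra hypothesis, e.g.\ a uniform local loss $H([S_k]_\sigma\mid S_{k+1})\ge\delta>0$ at every compressing stage. Even then it holds for this lower bound, which is at least $\delta$ times the number of compressing stages, not for the exact required retention compared across architectures.
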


\begin{proof}[Proof of Theorem~\ref{thm:contraction}]
Because no layer beyond the first reads $C$, the operative sequence forms a Markov chain $C \to S_1 \to \cdots \to S_N \to Y$. Iterating the strong data-processing inequality, where $\eta_k < 1$ whenever the channel maps distinct inputs to overlapping distributions, yields $I(C;Y)\le\bigl(\prod_{k\ge2}\eta_k\bigr) I(C;S_1)$. When $\eta_k\le\bar\eta<1$, case information decays geometrically. Corollary~\ref{cor:norepair} is the immediate case for $k>1$.
\end{proof}

\begin{proof}[Proof of Theorem~\ref{thm:retention}]
Reversibility requires $[T]_\sigma$ to be a function of $(Y,B)$ almost surely, meaning $H([T]_\sigma\mid Y,B)=0$. Expanding conditional entropy:
\[
H([T]_\sigma\mid Y) \le H([T]_\sigma, B\mid Y) = H(B\mid Y) + H([T]_\sigma\mid Y,B) = H(B\mid Y) \le H(B).
\]
Achievability follows by encoding the conditional pre-image within each stage. Corollary~\ref{cor:gapbits} substitutes these quantities into $G_{dq}$, and Corollary~\ref{cor:retentiondepth} follows from Theorem~\ref{thm:incompat}.
\end{proof}

\begin{proof}[Proof of Theorem~\ref{thm:revdpi}]
The map $S_k\mapsto[S_{k-1}]_\sigma$ of Definition~\ref{def:rev} is well defined exactly when $H([S_{k-1}]_\sigma\mid S_k)=0$. Under the Markov relation, $I(C;S_{k-1})-I(C;S_k)=I(C;S_{k-1}\mid S_k)$, so equality in the data-processing inequality holds if and only if $I(C;S_{k-1}\mid S_k)=0$. The first criterion concerns $\sigma$-classes, the second information about $C$; a map can be injective modulo $\equiv_\sigma$ while $C$ depends on within-class detail, and conversely, so neither implies the other without a linking condition.
\end{proof}

\begin{proof}[Proof of Theorem~\ref{thm:incompat}]
Non-injectivity modulo $\equiv_\sigma$ implies there exist $s\not\equiv_\sigma s'$ mapping to identical outputs. When $C$ depends on this distinction, $I(C;S_k)<I(C;S_{k-1})$ is strict for that distribution. The converse, that a $\sigma$-reversible stage with a trivial record performs no material compression, is the first criterion of Theorem~\ref{thm:revdpi}.
\end{proof}

\subsection{Proofs of the Metric Results}
\label{app:metric_statements}

\begin{theorem}[Range mismatch]
\label{thm:range}
Let $L=|\mathcal L|$ be the number of allowed actions, $m$ the model call count, and $\bar p<1$ the maximum single-call repeatability. Then outcome agreement is bounded below, $A_O\ge 1/L$ and $\widehat R_O\ge 1/L$, whereas $A_T\le\bar p^{\,m}\to0$ and $I(C;Y)\to0$ geometrically in depth. As depth grows, the attainable region of $(A_T,I(C;Y),A_O)$ approaches $\{0\}\times\{0\}\times[1/L,1]$.
\end{theorem}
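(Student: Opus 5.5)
The argument takes the three coordinates of the attainable region in turn and then combines them.

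For the lower bound on outcome agreement, we fix a case and write $\nu$ for the verdict distribution over the $L$ allowed actions. Two independent repetitions agree with probability $A_O=\sum_\ell \nu_\ell^2$. By Cauchy--Schwarz, $\sum_\ell\nu_\ell^2\ge(\sum_\ell\nu_\ell)^2/L=1/L$. The modal share is $\max_\ell\nu_\ell\ge 1/L$, since the maximum of $L$ nonnegative numbers summing to one is at least their mean. By Lemma~\ref{lem:book}$(iii)$, $\widehat R_O$ is the case-average of $\exp(-H_\infty(\hat\nu_c))=\max_\ell\hat\nu_{c,\ell}$, so averaging over $\mathcal C$ preserves both bounds. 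The same holds for empirical distributions, since the argument never uses anything beyond $\nu$ being a probability vector on $\mathcal L$.

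For the upper bounds, Lemma~\ref{lem:book}$(i)$ gives $A_T=\prod_{i=1}^m p_i\le\bar p^{\,m}$, which tends to $0$ as $m\to\infty$ because $\bar p<1$. For $I(C;Y)$ we invoke Theorem~\ref{thm:contraction}: $I(C;Y)\le\bar\eta^{\,N-1}I(C;S_1)\le\bar\eta^{\,N-1}H(C)$. This requires the theorem's hypothesis $\eta_k\le\bar\eta<1$, and the statement's phrase ``geometrically in depth'' must be read under that hypothesis. Here the depth is the number $N$ of sequential stages, not the call count $m$. Since $m\ge N$, both quantities grow together along any sequence of deepening architectures.

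The region statement then follows by combination. Every attainable triple lies in $[0,\bar p^{\,m}]\times[0,\bar\eta^{\,N-1}H(C)]\times[1/L,1]$, and this box converges in Hausdorff distance to $\{0\}\times\{0\}\times[1/L,1]$. The phrase ``approaches'' also suggests the third coordinate can fill the whole interval in the limit. We would show this by exhibiting a terminal Decider whose output is independent of its input with an arbitrary prescribed verdict distribution $\nu$. Its pairwise agreement $\sum_\ell\nu_\ell^2$ ranges continuously over $[1/L,1]$, while $I(C;Y)=0$ and $A_T$ stays below $\bar p^{\,m}$ via the upstream calls.

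The main obstacle is the gap between the two depth notions and the hypothesis needed for information decay. Geometric decay of $I(C;Y)$ does not follow from $\bar p<1$ alone: Theorem~\ref{thm:incompat} notes that summarization need not lower $\eta_k$. So we would state explicitly that the $I(C;Y)$ limit is taken under $\eta_k\le\bar\eta<1$, with depth measured in stages. If the paper intends a single depth parameter, we would tie them by $m\ge N$ and let $N\to\infty$.
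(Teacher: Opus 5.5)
Your proposal is correct and follows the paper's own proof: you use max-over-mean for $\widehat R_O$, Cauchy--Schwarz for the population agreement $\sum_\ell\nu_\ell^2$, and Lemma~\ref{lem:book}$(i)$ together with Theorem~\ref{thm:contraction} for the two limits. It is more careful than the paper's proof in three ways: you state the hypothesis $\eta_k\le\bar\eta<1$ that the paper leaves implicit, you separate stage depth from call count, and your input-independent Decider fills out all of $[1/L,1]$ where the paper only exhibits the constant policy at the endpoint $A_O=1$. One detail in that construction: let the Decider's rationale text vary, so that its own call also satisfies $p_m\le\bar p$.
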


\begin{corollary}[No outcome threshold is a sufficient audit criterion]
\label{cor:nothreshold}
For any authorization rule requiring an outcome statistic above threshold $\alpha$, there exist degenerate configurations satisfying it with $A_T=0$ and $I(C;Y)=0$.
\end{corollary}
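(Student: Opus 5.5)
The plan is a direct construction. For any threshold $\alpha\le 1$ we exhibit a configuration whose outcome statistic reaches its maximum value $1$ while both process agreement and case information are zero. Since both $A_O$ and $\widehat R_O$ are at most $1$, attaining $1$ meets every threshold $\alpha\le 1$. For thresholds above $1$ no configuration qualifies, so the claim is vacuous there.

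\textbf{Construction.} Take any chain $C\to S_1\to\cdots\to S_N\to T$ in which the first call has positive stochastic variation on a continuum or large support, for example by appending an independent random nonce to its output. That nonce propagates into the recorded trace $T$. As a result $\Pr[T^{(1)}=T^{(2)}]=0$, so $A_T=0$. This is the $p_i\le\bar p<1$ regime of Lemma~\ref{lem:book}$(i)$ in its extreme form: one call with $p_1=0$ already drives the product to zero. Next let the terminal Decider ignore its inputs and emit a fixed action $\ell_0\in\mathcal L$, so the parser gives $Y=\Gamma(T)\equiv\ell_0$. This is exactly the categorical compression observed in Study~4.

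\textbf{Verification.} Three quantities need checking. First, $Y$ is almost surely constant, so $I(C;Y)\le H(Y)=0$. Second, since $A_T=0$, Lemma~\ref{lem:book}$(ii)$ gives $A_O=\kappa$. Here $\kappa=\Pr[Y^{(1)}=Y^{(2)}\mid T^{(1)}\ne T^{(2)}]=1$, because $Y$ is constant, and hence $A_O=1$. Third, by Lemma~\ref{lem:book}$(iii)$ each case's verdict distribution $\nu_c$ is a point mass, so $H_\infty(\nu_c)=0$ and $\widehat R_O=1$. All three identities hold, so any rule of the form ``outcome statistic $\ge\alpha$'' authorizes this degenerate configuration.

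\textbf{Optional variant and main obstacle.} If one wants non-constant outputs, so that the rule cannot simply be patched by forbidding constant deciders, we can instead use Theorem~\ref{thm:range}. That theorem puts the attainable region near $\{0\}\times\{0\}\times[1/L,1]$, and a mixture argument then reaches any $\alpha\in[1/L,1]$. The main obstacle is interpretive rather than computational. We need the statement's ``degenerate configuration'' to be admissible in the model class, meaning the Decider is allowed to be input-independent. We also need "satisfying the threshold" to refer to the population quantities $A_O$ and $\widehat R_O$ rather than finite-sample estimates. With that reading fixed, the argument reduces to the two lemma identities above.
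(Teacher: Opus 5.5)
Your proposal is correct and follows the paper's argument, which likewise uses the constant policy $Y\equiv\ell_0$ to obtain $A_O=\widehat R_O=1$ and $I(C;Y)=0$. Your atomless nonce in the first call is a useful addition, because it makes $A_T=0$ exact. The paper's one-line proof leaves that part to the depth limit $A_T\le\bar p^{\,m}\to 0$, and a constant but deterministic pipeline has $A_T=1$ (for example, Study~4's temperature-zero single call).

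Drop ``or large support'', though. If the trace takes finitely or countably many values, then $\Pr[T^{(1)}=T^{(2)}\mid C=c]=\sum_t\Pr[T=t\mid C=c]^2>0$. So only a continuum-valued nonce gives $A_T=0$ exactly rather than merely small.
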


\begin{theorem}[Degeneracy is not identifiable within cases]
\label{thm:identify}
Let $\mathcal P_\star$ return the case-appropriate action $\ell^\star(c)$, and let $\mathcal P_0$ return a constant $\ell_0$. Every within-case statistic invariant under permutations of action labels evaluates identically under these two point-mass distributions. This includes modal share, pairwise agreement, and entropy, but not label-sensitive statistics or comparison with a known correct action.
\end{theorem}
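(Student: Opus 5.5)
The plan is a direct computation on the two point-mass verdict distributions, followed by a check that each named statistic depends only on the multiset of probability masses. Fix a case $c$ and let $\nu_c^\star=\delta_{\ell^\star(c)}$ and $\nu_c^0=\delta_{\ell_0}$ be the verdict distributions across repetitions that $\mathcal P_\star$ and $\mathcal P_0$ induce on $c$. Both are point masses on $\mathcal L$. They differ only in which label carries the unit mass.

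The first step is to note that two point masses on $\mathcal L$ are related by a label permutation. I would take $\pi$ to be the transposition of $\ell^\star(c)$ and $\ell_0$, or the identity if they coincide. Then $\nu_c^0=\nu_c^\star\circ\pi^{-1}$. The second step uses the hypothesis directly: a within-case statistic $F$ that is invariant under label permutations satisfies $F(\nu\circ\pi^{-1})=F(\nu)$. Hence $F(\nu_c^0)=F(\nu_c^\star)$ for every case.

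To connect this to aggregates such as $\widehat R_O$ and $A_O$, which average a per-case statistic over $\mathcal C$, I would apply the per-case equality case by case and then average. The permutation may differ from case to case, which is harmless because invariance is only needed within each case. Any sample-based version behaves the same way, since $K$ repetitions under either procedure return identical labels.

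I would then verify the named examples explicitly to make the statement concrete:
\begin{itemize}
\item The modal share is $\max_\ell\nu_\ell=1$ under both procedures, so $H_\infty=0$ and $\widehat R_O=1$ by Lemma~\ref{lem:book}$(iii)$.
\item Pairwise agreement is $\sum_\ell\nu_\ell^2=1$ under both.
\item Shannon entropy is $0$ under both.
\end{itemize}
Each of these is a symmetric function of the mass vector, hence permutation invariant. For the exclusions, I would exhibit one example of each:
\begin{itemize}
\item The indicator $\mathbf 1[\text{mode}=\ell^\star(c)]$ separates $\mathcal P_\star$ from $\mathcal P_0$ whenever $\ell_0\neq\ell^\star(c)$, showing that comparison with a known correct action is needed.
\item The mass on a fixed label $\ell_0$ is label-sensitive and also separates the two procedures.
\end{itemize}

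The main obstacle is making "within-case statistic" precise. The claim is false for cross-case statistics such as $D_V$ or $I(C;Y)$: $\mathcal P_\star$ spreads across labels when $\ell^\star$ varies, while $\mathcal P_0$ does not. I would define a within-case statistic as a functional of the single-case verdict distribution (or its empirical sample), possibly averaged over cases. I would then state explicitly that this definition excludes joint functionals of $(C,Y)$. With that definition fixed, the argument is the one-line permutation step above.
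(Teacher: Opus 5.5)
Your proposal is correct and takes the same route as the paper. The paper's proof observes that both procedures induce point-mass verdict distributions $\delta_{\ell^\star(c)}$ and $\delta_{\ell_0}$, and identical single-label empirical samples over $K$ runs, so within-case functionals cannot separate them. Your transposition step $\nu_c^0=\nu_c^\star\circ\pi^{-1}$ makes explicit where the permutation-invariance hypothesis is used, which the paper's one-line argument leaves implicit. This matters because the paper's phrase ``any within-case functional'' holds only for the label-invariant ones, as your counterexamples show.
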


\begin{corollary}
\label{cor:identification}
Distinguishing $\mathcal P_\star$ from $\mathcal P_0$ requires at least two cases with differing correct actions and a cross-case dispersion statistic. Decision families with differentiated actions, paired with $D_V$, constitute the minimal design identifying degeneracy.
\end{corollary}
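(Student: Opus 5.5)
The statement to prove is Corollary~\ref{cor:identification}. I will argue it in two directions: necessity of the design requirements, then sufficiency of the proposed design.

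\textbf{Necessity.} Suppose a design uses only one case, or only cases sharing a single correct action $\ell^\star$. Apply Theorem~\ref{thm:identify} case by case: within each case both $\mathcal P_\star$ and $\mathcal P_0$ are point masses, so every label-permutation-invariant within-case statistic coincides. Modal share, pairwise agreement $A_O$, and verdict entropy are all of this type.

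The remaining gap is statistics that pool across cases. If every case has the same correct action $\ell^\star$, then $\mathcal P_\star$ is itself constant across cases. Its joint verdict profile is therefore the image of $\mathcal P_0$'s profile under the transposition $\ell_0\leftrightarrow\ell^\star$. Any statistic invariant under relabelling, applied jointly across cases, then agrees on the two. So any procedure that separates them must either depend on labels (excluded by hypothesis) or use cases whose correct actions differ.

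Within-case statistics alone also fail even with such cases, again by Theorem~\ref{thm:identify}. The statistic must therefore be cross-case, a function of how modal verdicts disperse across cases. This establishes both requirements: at least two cases with differing $\ell^\star$, and a cross-case dispersion statistic.

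\textbf{Sufficiency.} Take a family $f$ containing cases $c,c'$ with $\ell^\star(c)\neq\ell^\star(c')$, and compute $D_V$ from Equation~\eqref{eq:dv}.
\begin{itemize}
\item Under $\mathcal P_0$, the modal-verdict distribution $\hat q_f$ is a point mass at $\ell_0$, so $H(\hat q_f)=0$ and $D_V=0$.
\item Under $\mathcal P_\star$, $\hat q_f$ charges at least two actions, so $H(\hat q_f)>0$ and $D_V>0$.
\end{itemize}
Families with differentiated actions paired with $D_V$ therefore identify degeneracy.

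\textbf{Minimality.} Minimality is exactly the necessity argument: drop either ingredient and you are back in the indistinguishable case.

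\textbf{Main obstacle.} The delicate step is the relabelling argument in necessity. It requires making precise what it means for a cross-case statistic to be label-invariant: the invariance must be under a single global permutation of labels applied to all cases, not per-case permutations. Per-case invariance would be too strong and would trivially kill $D_V$ as well. I would state the invariance class explicitly as global permutations, and note that $D_V$ is globally but not per-case invariant. That distinction is precisely what lets it succeed.
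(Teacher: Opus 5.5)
Your proof is correct and follows the route the paper intends. The paper gives no separate proof of this corollary: it treats it as immediate from the point-mass argument of Theorem~\ref{thm:identify} (within-case statistics are blind) together with Remark~\ref{rem:two-dists} (only cross-case modal distributions such as $D_V$ can register degeneracy), and your global-relabelling argument plus the explicit $D_V=0$ versus $D_V>0$ computation make rigorous what the paper leaves implicit. One small refinement: with several families carrying disjoint vocabularies $\mathcal L_f$, the relevant symmetry is a product of per-family transpositions, so the requirement should read ``two cases with differing correct actions \emph{within the same family}.'' That is exactly the condition your sufficiency step already uses.
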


\begin{remark}
\label{rem:two-dists}
Outcome agreement ($A_O, \widehat R_O$) reflects within-case distributions, while $D_V$ and $I(C;Y)$ evaluate cross-case modal distributions. The former are blind to categorical degeneracy by construction.
\end{remark}

\begin{proof}[Proof of Theorem~\ref{thm:range}]
For any distribution over $L$ actions, $\max_\ell \nu_\ell\ge1/L$, so $\widehat R_O\ge1/L$ by Lemma~\ref{lem:book}$(iii)$; at the population level $\sum_\ell \nu_\ell^2\ge1/L$ by Cauchy--Schwarz. Limits follow from Lemma~\ref{lem:book}$(i)$ and Theorem~\ref{thm:contraction}. The constant policy $Y\equiv\ell_0$ achieves $A_O=\widehat R_O=1$ with $I(C;Y)=0$, proving Corollary~\ref{cor:nothreshold}.
\end{proof}

\begin{proof}[Proof of Theorem~\ref{thm:identify}]
Under $\mathcal P_\star$, $\nu_c=\delta_{\ell^\star(c)}$; under $\mathcal P_0$, $\nu_c=\delta_{\ell_0}$. Since both are point masses, empirical samples over $K$ runs yield identical single-point distributions, rendering any within-case functional insensitive to case distinctions.
\end{proof}

\subsection{Non-Monotonicity of Outcome Reproducibility}

\begin{theorem}
\label{thm:nonmono}
With $\Delta f(n)=f(n{+}1)-f(n)$,
\begin{equation}
\Delta A_O(n)=
\underbrace{(1-\kappa_{n+1})\,\Delta A_T(n)}_{\le\,0}
+\underbrace{(1-A_T(n))\,\Delta\kappa(n)}_{\text{sign of }\Delta\kappa}.
\label{eq:nonmono}
\end{equation}
Whenever terminal mapping coarsening causes the second term to exceed the first, $A_O$ rises while $A_T$ falls.
\end{theorem}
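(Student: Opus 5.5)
The plan is to derive \eqref{eq:nonmono} directly from the decomposition in Lemma~\ref{lem:book}$(ii)$, applied at consecutive configurations $n$ and $n+1$. Write $A_O(n)=A_T(n)+\kappa_n\bigl(1-A_T(n)\bigr)$, which is equivalent to $1-A_O(n)=(1-\kappa_n)\bigl(1-A_T(n)\bigr)$. This product form is the convenient one, because the difference of a product of two factors splits into exactly two terms.

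Concretely, with $u_n=1-\kappa_n$ and $v_n=1-A_T(n)$, we have
\[
\Delta(u v)(n)=u_{n+1}\,\Delta v(n)+v_n\,\Delta u(n)
\]
by adding and subtracting $u_{n+1}v_n$. Substituting $\Delta v(n)=-\Delta A_T(n)$ and $\Delta u(n)=-\Delta\kappa(n)$, and using $\Delta A_O(n)=-\Delta(uv)(n)$, gives
\[
\Delta A_O(n)=(1-\kappa_{n+1})\,\Delta A_T(n)+\bigl(1-A_T(n)\bigr)\,\Delta\kappa(n).
\]
This places $\kappa_{n+1}$ in the first term and $A_T(n)$ in the second, exactly as in \eqref{eq:nonmono}. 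The only thing that requires care is choosing the right cross term ($u_{n+1}v_n$ rather than $u_nv_{n+1}$) so that the indices match the statement.

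Next come the sign annotations. The factor $1-\kappa_{n+1}$ is nonnegative because $\kappa$ is a conditional probability. The factor $1-A_T(n)$ is also nonnegative, so the second term carries the sign of $\Delta\kappa(n)$.

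For $\Delta A_T(n)\le 0$, invoke Lemma~\ref{lem:book}$(i)$. Under the nested setting, the configuration $n+1$ adds calls, so its trace agreement multiplies the previous product by further factors $p_i\le 1$. This step is the main obstacle. Lemma~\ref{lem:book}$(i)$ only gives monotonicity for a fixed execution process extended by calls, and the paper itself notes that redesigned architectures require the bound to hold across them. I would therefore state this as the standing assumption: configuration $n+1$ extends configuration $n$, or more weakly, $A_T$ is non-increasing along the sweep. Under that assumption the first term is $\le 0$.

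Finally, the concluding sentence follows immediately. If $(1-A_T(n))\Delta\kappa(n)>(1-\kappa_{n+1})\,|\Delta A_T(n)|$, then $\Delta A_O(n)>0$, while $\Delta A_T(n)\le 0$. This is the terminal-coarsening regime in which $\kappa$ rises.

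It is worth noting the degenerate edge case $A_T(n)=1$, where $\kappa_n$ is undefined. In that case the second term vanishes for any convention, so the identity still holds. The same convention handles the "n/a'' entries in Table~\ref{tab:calib}.
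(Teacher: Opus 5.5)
Your proof is correct and is essentially the paper's own argument. The paper differences $A_O(n)=A_T(n)+\kappa_n(1-A_T(n))$ from Lemma~\ref{lem:book}$(ii)$ and collects terms, and your product form $1-A_O=(1-\kappa)(1-A_T)$ with cross term $u_{n+1}v_n$ is the same computation organized more transparently; your explicit assumption that $A_T$ is non-increasing along the sweep is a fair addition, since the paper asserts the $\le 0$ annotation without justifying it beyond Lemma~\ref{lem:book}$(i)$.
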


\begin{proof}
Differencing $A_O(n)=A_T(n)+\kappa_n(1-A_T(n))$ from Lemma~\ref{lem:book}$(ii)$ and collecting terms yields \eqref{eq:nonmono}.
\end{proof}

Because $A_T=0$ holds everywhere under released temperatures, $\Delta A_T(n)=0$ and $\Delta A_O(n)=\Delta\kappa(n)$ holds exactly. Thus, observed recoveries in outcome reproducibility reflect terminal compression alone.

\subsection{Empirical Calibration and Model Scale}
\label{app:calibration_and_scale}

Calibrating Table~\ref{tab:calib} confirms four theoretical signatures:
(i)~\textit{Decoupling:} at $A_T=0$, $A_O = \kappa$ by construction;
(ii)~\textit{Stability vs.\ case information:} peak agreement ($A_O = 0.7875$, $n=40$) matches minimal case information ($I(C;Y) = 0.0853$ bits), confirming Corollary~\ref{cor:nothreshold};
(iii)~\textit{Retention bound:} the bound concerns $H([T]_\sigma\mid Y)$, not $I(C;Y)$. The sweep does not estimate this conditional material-trace entropy, so it does not establish a 1.5--1.8-bit decision-level gap. Every handoff was retained in the experiment; low case information alone therefore does not establish positive residual uncertainty $H([T]_\sigma\mid Y,B)$; and
(iv)~\textit{Error compounding:} on \texttt{llama3.1:8b} ($T=0$), single-call self-agreement $\hat p \approx 0.9688$ predicts an 11-call $A_T=0.70$ versus the observed $A_T=0.45$ over context length.

Scale cannot resolve these deficits. Parameter count governs only $p_i$ and $\eta_k$ (neither monotonic in scale); when $p_i < 1$, process reproducibility collapses geometrically ($R_T = \prod p_i$). Thus, larger models in deeper topologies do not inherently improve auditability.

\section{Provider Concentration in U.S. Mortgage Underwriting}
\label{app:hmda}

To evaluate common-mode network exposure (P7), we processed the public HMDA loan-level files for 2019--2023 across six states (CA, TX, NY, FL, IL, PA), about 12~GB of raw records, retaining every application with a credit decision ($24{,}439{,}380$ decisions, $4{,}487$ lenders, $20.1\%$ denial rate); these reveal high automated underwriting system (AUS) concentration (Table~\ref{tab:hmda_conc}).

\begin{table}[!ht]
\centering
\caption{AUS shares in U.S. mortgage decisions (2019--2023, six states). Shares reflect AUS-evaluated loans; HHI is calculated over system shares.}
\label{tab:hmda_conc}
\footnotesize
\renewcommand{\arraystretch}{1.02}
\begin{tabular}{@{}rrrrrrr@{}}
\toprule
\textbf{Year} & \textbf{AUS-decided} & \textbf{Desktop} & \textbf{Loan Product} & \textbf{TOTAL} & \textbf{Top two} & \textbf{HHI} \\
 & \textbf{applications} & \textbf{Underwriter} & \textbf{Advisor} & \textbf{Scorecard} & & \\
 & & \emph{(Fannie Mae)} & \emph{(Freddie Mac)} & \emph{(FHA)} & & \\
\midrule
2019 & 2{,}929{,}098 & 63.9\% & 15.9\% & 8.2\%  & 79.8\% & 4{,}535 \\
2020 & 4{,}818{,}652 & 64.8\% & 23.9\% & 5.2\%  & 88.7\% & 4{,}832 \\
2021 & 5{,}149{,}015 & 64.4\% & 25.6\% & 4.4\%  & 90.0\% & 4{,}834 \\
2022 & 2{,}085{,}854 & 61.8\% & 22.2\% & 8.6\%  & 84.0\% & 4{,}416 \\
2023 & 1{,}582{,}341 & 55.8\% & 21.9\% & 11.9\% & 77.7\% & 3{,}786 \\
\bottomrule
\end{tabular}
\end{table}

Two GSE platforms (Fannie Mae's Desktop Underwriter, Freddie Mac's Loan Product Advisor) evaluated $77.7\%$--$90.0\%$ of automated decisions annually ($\mathrm{HHI} \in [3{,}786, 4{,}834]$; 1,674 and 1,241 institutional users in 2021, respectively). Within-institution concentration is equally high: of 2,376 AUS lenders, $52.6\%$ routed $>90\%$ and $35.9\%$ routed $>99\%$ of decisions through a single platform. These data establish shared upstream reliance, $u_{ig}$, but do not measure historical replay control, $\theta_{ig}$, or identify model-version changes; they document a structural precondition for common-mode exposure, not positive uncontrolled exposure or observed synchronized audit failures (P7).
\end{document}